\newtheorem{theo}{Theorem}
\newtheorem{prop}[theo]{Proposition}
\newtheorem{defin}{Definition}
\newtheorem{lem}[theo]{Lemma}
\newtheorem{cor}[theo]{Corollary}
\newtheorem{asm}{Assumption}
\theoremstyle{plain}
\newtheorem*{corclustering*}{Correlation Clustering}
\newcommand{\Cv}{\mathcal{C}_v}
\newcommand{\OPT}{\mathsf{OPT}}
\newcommand{\Tb}{\mathcal{T}_b}
\newcommand{\setA}{\mathcal{A}}
\newcommand{\CC}{{\it C4}}
\newcommand{\CW}{\textsc{ClusterWild!}}
\newcommand{\KC}{{\it KwikCluster}}
\renewcommand{\Pr}{\mathbb{P}}
\begin{document}

\title{\huge Parallel Correlation Clustering  on Big Graphs}

\author{
Xinghao Pan$^{\alpha,\epsilon}$, Dimitris Papailiopoulos$^{\alpha,\epsilon}$, Samet Oymak$^{\alpha,\epsilon}$ \\
 Benjamin Recht$^{\alpha,\epsilon, \sigma}$, Kannan Ramchandran$^{\epsilon}$, Michael I. Jordan$^{\alpha,\epsilon, \sigma}$\\
$^{\alpha}$AMPLab, $^{\epsilon}$EECS at UC Berkeley, $^{\sigma}$Statistics at UC Berkeley
} 

\maketitle
\begin{abstract}
Given a similarity graph between items, correlation clustering (CC) groups similar items together and dissimilar ones apart.
One of the most popular CC algorithms is \KC{}:  an algorithm that serially clusters neighborhoods of vertices, and obtains 
a $3$-approximation ratio.
Unfortunately, \KC{} in practice requires a large number of clustering rounds, a potential bottleneck for large graphs.

We present \CC{} and \CW{}, two algorithms for parallel correlation clustering that run in a polylogarithmic number of rounds and achieve nearly linear speedups, provably.
\CC{} uses concurrency control to enforce serializability of a parallel clustering process, and guarantees a $3$-approximation ratio.
\CW{} is a coordination free algorithm that abandons consistency for the benefit of better scaling; this leads to a provably small loss in the $3$-approximation ratio.

We provide extensive experimental results for both algorithms,  where we outperform the state of the art, both in terms of clustering accuracy and running time.
We show that our algorithms can cluster billion-edge graphs in under 5 seconds on 32 cores, while achieving a $15\times$ speedup.
\end{abstract}

\section{Introduction}

Clustering items according to some notion of similarity is a major primitive in machine learning.
{\it Correlation clustering} serves as a basic means to achieve this goal: 
given a similarity measure between items, the goal is to group similar items together and dissimilar items apart.
In contrast to other clustering approaches, the number of clusters is not determined a priori, and good solutions aim to balance the tension between grouping all items together versus isolating them.

\begin{wrapfigure}{R}{0.5\columnwidth}
\begin{minipage}{0.5\columnwidth}
\centerline{ \includegraphics[width=.99\columnwidth]{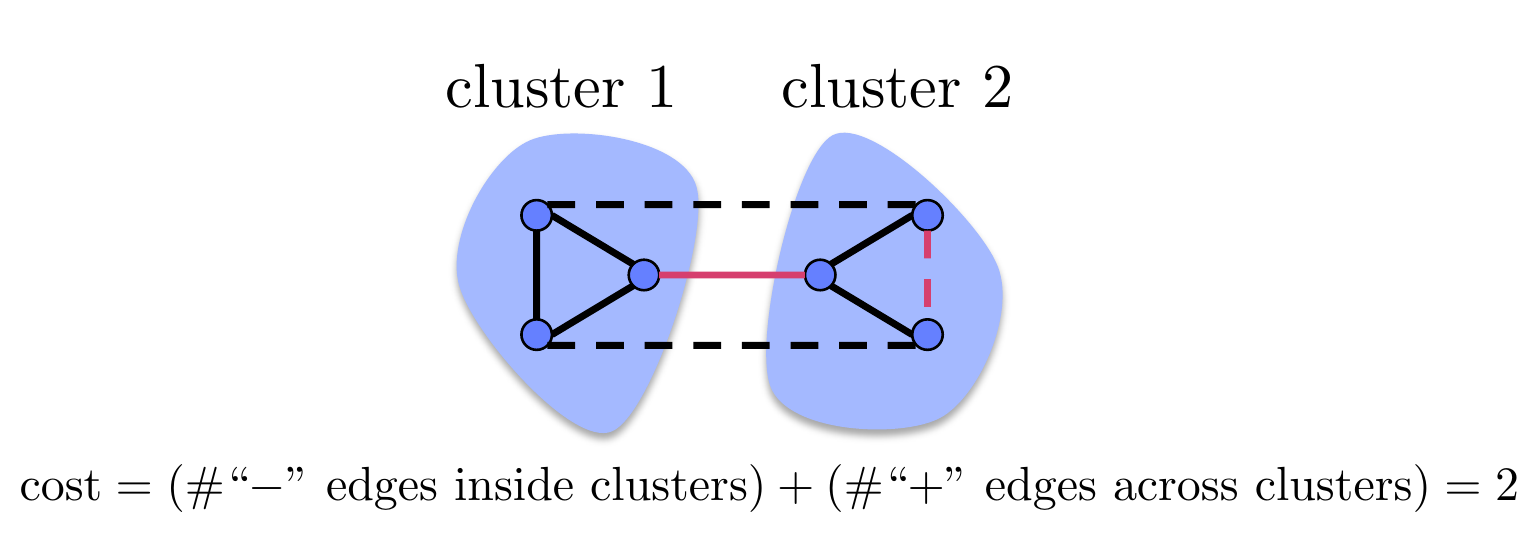}}
\caption{\footnotesize In the above graph, solid edges denote similarity and dashed dissimilarity.
The number of disagreeing edges in the above clustering is $2$; we color these edges with red.}
\label{fig:graph_example}
\end{minipage}
\end{wrapfigure}

The simplest CC variant can be described on a complete signed graph.
Our input is a graph $G$ on $n$ vertices, with $+1$ weights on edges between similar items, and $-1$ edges between dissimilar ones. 
Our goal is to generate a partition of vertices into disjoint sets that minimizes the number of {\it disagreeing edges}: this equals the number of ``$+$" edges cut by the clusters plus the number of ``$-$" edges inside the clusters.
This metric is commonly called the {\it number of disagreements}.
In Figure 1, we give a toy example of a CC instance.

Entity deduplication is the archetypal motivating example for correlation clustering, with applications in chat disentanglement, co-reference resolution, and spam detection~\cite{
elmagarmid2007duplicate,
arasu2009large,
elsner2009bounding,
hussain2013evaluation,
bonchi2014correlation,chierichetti2014correlation}.
The input is a set of entities (say, results of a keyword search), and a pairwise classifier that indicates---with some error---similarities between entities.
Two results of a keyword search might refer to the same item, but might look different if they come from different sources.
By building a similarity graph between entities and then applying CC, the hope is to cluster duplicate entities in the same group; in the context of keyword search, this  implies a more meaningful and compact list of results.
CC has been further applied to finding communities in signed networks, classifying missing edges in opinion or trust networks~\cite{yang2007community,cesa2012correlation},
gene clustering~\cite{ben1999clustering}, and consensus clustering~\cite{elsner2009bounding}.

\KC{} is the simplest CC algorithm that achieves a provable 3-approximation ratio \cite{ailon2008aggregating}, and works in the following way: pick a vertex $v$ at random (a {\it cluster center}), create a cluster for $v$ and its positive neighborhood $N(v)$ (i.e., vertices connected to $v$ with positive edges), peel these vertices and their associated edges from the graph, and repeat until all vertices are clustered. 
Beyond its theoretical guarantees, experimentally \KC{} performs well when combined with local heuristics \cite{elsner2009bounding}.

 \KC{}  seems like an inherently sequential algorithm, and
in most cases of interest it requires many peeling rounds.
This happens because a small number of vertices are clustered per round.
 This can be a bottleneck for large graphs.
Recently, there have been efforts to develop scalable variants of \KC{} \cite{bonchi2014correlation,
chierichetti2014correlation}.
In \cite{chierichetti2014correlation} a distributed peeling algorithm was presented in the context of MapReduce.
Using an elegant analysis, the authors establish a $(3+\epsilon)$-approximation in 
a polylogarithmic number of rounds.
The algorithm employs a simple step that rejects vertices that are executed in parallel but are ``conflicting"; however, we see in our experiments, this seemingly minor coordination step hinders scale-ups in a parallel core setting.
In the tutorial of \cite{bonchi2014correlation}, a sketch of a distributed algorithm was presented.
This algorithm achieves the same approximation as \KC{}, in a logarithmic number of rounds, in expectation.
However, it performs significant redundant work, per iteration, in its effort to detect in parallel which vertices should become cluster centers.

\paragraph{Our contributions} 
We present \CC{} and \CW{}, two parallel CC algorithms that in practice outperform the state of the art, both in terms of running time and clustering accuracy.
\CC{} is a parallel version of \KC{} that uses concurrency control to establish a $3$-approximation ratio.
\CW{} is a simple to implement, coordination-free algorithm that abandons consistency for the benefit of better scaling, while having a provably small loss in the 3 approximation ratio.

\CC{} achieves a $3$ approximation ratio, in a poly-logarithmic number of rounds, by enforcing consistency between concurrently running peeling threads.
Consistency is enforced using {\it concurrency control}, a notion  extensively studied for databases transactions, that was recently used to parallelize inherently sequential machine learning algorithms~\cite{pan2013optimistic}.

\CW{} is a coordination-free parallel CC algorithm that waives consistency in favor of speed. The cost that we pay is an arbitrarily small loss in \CW{}'s accuracy. 
We show that \CW{} achieves a $(3+\epsilon)\OPT+O(\epsilon\cdot n\cdot \log^2 n)$ approximation, in a poly-logarithmic number of rounds, with provable nearly linear speedups.
Our main theoretical innovation for \CW{} is analyzing the coordination-free algorithm as a serial variant of \KC{} that runs on a ``noisy" graph.

In our extensive experimental evaluation, we demonstrate that our algorithms gracefully scale up to graphs with billions of edges.
In these large graphs, our algorithms output a valid clustering in less than 5 seconds, on 32 threads, up to an order of magnitude faster than \KC{}.
We observe how, not unexpectedly, \CW{} is faster than \CC{}, and quite surprisingly, abandoning coordination in this parallel setting, only amounts to a $1$\% of relative loss in the clustering accuracy.
Furthermore, we compare against state of the art parallel CC algorithms, showing that we consistently outperform these algorithms in terms of both running time and clustering accuracy.

\paragraph{Notation} $G$ denotes a graph with $n$ vertices and $m$ edges.
$G$ is complete and only has $\pm1$ edges.
We denote by $d_v$ the positive degree of a vertex, i.e., the number of vertices connected to $v$ with positive edges.
$\Delta$ denotes the positive maximum degree of $G$, and $N(v)$ denotes the positive neighborhood of $v$; moreover, let $C_v = \{v,N(v)\}$.
Two vertices $u$, $v$ are neighbors in $G$ if $u \in N(v)$ and vice versa.
We denote by $\pi$ a permutation of $\{1,\ldots, n\}$.

\section{Two Parallel Algorithms for Correlation Clustering}
The formal definition of correlation clustering is given below.
\begin{corclustering*}
Given a graph $G$ on $n$ vertices, partition the vertices into an arbitrary number $k$ of disjoint subsets $\mathcal{C}_1,\ldots,\mathcal{C}_k$ such that the sum of negative edges within the subsets plus the sum of positive edges across the subsets is minimized:
\begin{equation}
\OPT = 
\min_{1\le k\le n}
\min_{
{\tiny
\begin{smallmatrix}
\mathcal{C}_i \cap \mathcal{C}_j = 0, \forall i\ne j\\
\cup_{i=1}^k \mathcal{C}_i = \{1,\ldots, n\}
\end{smallmatrix}
}
} \sum_{i=1}^k E^-(\mathcal{C}_i,\mathcal{C}_i) +\sum_{i=1}^k\sum_{j=i+1}^k E^+(\mathcal{C}_i,\mathcal{C}_j)  \nonumber
\end{equation}
where $E^+$ and $E^-$ are the sets of positive and negative edges in $G$.
\end{corclustering*}

\KC{} is a remarkably simple algorithm that approximately solves the above combinatorial problem, and operates as follows. 
A random vertex $v$ is picked, a cluster $C_v$ is created with $v$ and its positive neighborhood, then the vertices in $C_v$ are peeled from the graph, and this process is repeated until all vertices are clustered.
\begin{wrapfigure}{R}{0.5\columnwidth}
\vspace{-0.8cm}
\begin{minipage}{0.45\columnwidth}
\begin{algorithm}[H]
   \caption{\KC{} with $\pi$}
          {
\begin{algorithmic}[1]
\STATE $\pi$ =  a random permutation of $\{1,\ldots, n\}$
\WHILE{$V\ne \emptyset $} 
\STATE select the vertex $v$ indexed by $\pi(1)$
\STATE $C_v = \{v, N(v)\}$
\STATE Remove clustered vertices from $G$ and $\pi$
\ENDWHILE
\end{algorithmic}
}
   \label{alg:KCpi}
 \end{algorithm}
\end{minipage}
\end{wrapfigure}
\KC{} can be equivalently executed, as noted by \cite{bonchi2014correlation}, if we substitute the random choice of a vertex per peeling round, with a random order $\pi$ preassigned to vertices, (see Alg.~\ref{alg:KCpi}).
That is, select a random permutation on vertices, then peel the vertex indexed by $\pi(1)$, and its neighbors. Remove from $\pi$ the vertices in $C_v$ and repeat this process.
Having an order among vertices makes the discussion of parallel algorithms more convenient.

\subsection{\CC{}: Parallel CC using Concurrency Control}

Suppose we now wish to run a parallel version of \KC{}, say on two threads: one thread picks vertex $v$ indexed by $\pi(1)$ and the other thread picks $u$ indexed by $\pi(2)$, concurrently.
Can both vertices be cluster centers? They can, if and only if they are not neighbors in $G$.
If $v$ and $u$ are connected with a positive edge, then the vertex with the smallest order wins.
This is our  {\it concurency rule no. 1}.
Now, assume that $v$ and $u$ are not neighbors in $G$, and  both $v$ and $u$ become cluster centers.
Moreover, assume that $v$ and $u$ have a common, unclustered neighbor, say $w$: should $w$ be clustered with $v$, or $u$?
We need to follow what would happen with \KC{} in Alg.~\ref{alg:KCpi}: $w$ will go with the vertex that has the smallest permutation number, in this case $v$.
This is {\it concurency rule  no. 2}.
Following the above simple rules, we develop \CC{}, our serializable parallel CC algorithm.
Since, \CC{} constructs the same clusters as \KC{} (for a given ordering $\pi$), it inherits its $3$ approximation by design.
The above idea of identifying the cluster centers in rounds was first used in \cite{blelloch2012greedy} to obtain a parallel algorithm for maximal independent set (MIS).

\CC{}, shown as Alg.~\ref{alg:CC}, starts by assigning a random permutation $\pi$ to the vertices, it then samples an active set $\setA$ of $\frac{n}{\Delta}$ unclustered vertices; this sample is taken from the prefix of $\pi$.
After sampling $\setA$, each of the $P$ threads picks a vertex with the smallest order in $\setA$, then checks if that vertex can become a cluster center.
We  first enforce  {\it concurrency rule no. 1}: adjacent vertices cannot be cluster centers at the same time.
\CC{} enforces it by making each thread check the neighbors of the vertex, say $v$, that is picked from $\setA$.
A thread will check in \texttt{attemptCluster} whether its vertex $v$ has any preceding neighbors (according to $\pi$) that are cluster centers.
If there are none, it will go ahead and label $v$ as cluster center, and proceed with creating a cluster.
If a preceding neighbor of $v$ is a cluster center, then $v$ is labeled as not being a cluster center.
If a preceding neighbor of $v$, call it $u$, has not yet received a label (i.e., $u$ is currently being processed and is not yet labeled as cluster center or not), then the thread processing $v$, will wait on $u$ to receive a label.
The major technical detail is in showing that this wait time is bounded; we show that no more than $O(\log n)$ threads can be in conflict at the same time, using a new subgraph sampling lemma \cite{krivelevich2014phase}.
Since \CC{} is serializable, it has to respect  {\it concurrency rule no. 2}: if a vertex $u$ is adjacent to two cluster centers, then it gets assigned to the one with smaller permutation order.
This is accomplished in  \texttt{createCluster}.
After processing all vertices in $\setA$, all threads are synchronized in bulk, the clustered vertices are removed, a new active set is sampled, and the same process is repeated until everything has been clustered.
In the following section, we present the theoretical guarantees for \CC{}.

\begin{center}
\begin{minipage}{0.45\columnwidth}
\begin{algorithm}[H]
   \caption{\CC{}  \& \CW{}}
          {
\begin{algorithmic}[1]
\STATE {\bf Input}: $G, \epsilon$
\STATE $\text{clusterID}(1) = \ldots = \text{clusterID}(n) = \infty$
\STATE $\pi=$ a random permutation of $\{1,\ldots, n\}$
\WHILE{$V\ne \emptyset $} 
\STATE $\Delta=$ maximum vertex degree in $G(V)$
\STATE $\setA=$ the first $\epsilon\cdot \frac{n}{\Delta}$ vertices in $V[\pi]$.
\PWHILE{$\setA\neq \emptyset$}
\STATE$v = \text{first element in }\setA$
\STATE $\setA = \setA-\{v\}$
\IF{ {\color{blue}\CC{}}} 
\STATE \texttt{attemptCluster}($v$) 
\ELSIF{ {\color{red}\CW{}}} 
\STATE \texttt{createCluster}($v$)  
\ENDIF
\ENDPWHILE
\STATE Remove clustered vertices from $V$ and $\pi$

\ENDWHILE
 \STATE {\bf Output:} $\{\text{clusterID}(1),\ldots, \text{clusterID}(n)\}$.
\end{algorithmic}
}
   \label{alg:CC}
 \end{algorithm}
\end{minipage}
\begin{minipage}{0.5\columnwidth}
   \vspace{0.50cm}
   \hrule
\vspace{0.1cm}
          {
\begin{algorithmic}[0]
\STATE\hspace{-0.2cm}\texttt{createCluster($v$)}:
\STATE $\text{clusterID}(v) = \pi(v)$ 
\FOR{$u \in \Gamma(v)\setminus \setA$}
\STATE $\text{clusterID}(u) = \min(\text{clusterID}(u), \pi(v))$ 
\ENDFOR
\end{algorithmic}
\vspace{0.1cm}
\hrule
\vspace{0.1cm}
\begin{algorithmic}[0]
\STATE \hspace{-0.3cm}\texttt{attemptCluster}($v$):
\IF{ $\text{clusterID}(u)=\infty$ {\bf and}  $\texttt{isCenter}(v)$}
\STATE \texttt{createCluster($v$)}
\ENDIF
\end{algorithmic}
\vspace{0.1cm}
\hrule
\vspace{0.1cm}
\begin{algorithmic}[0]
\STATE \hspace{-0.3cm}\texttt{isCenter}($v$):
\FOR{$u \in \Gamma(v)$} {\color{gray}//check friends (in order of $\pi$)}
\IF{ $\pi(u) < \pi(v)$ } {\color{gray}//if they precede you, wait}
\STATE {\bf wait} until $\text{clusterID}(u)\ne \infty$ {\color{gray}//till clustered}
\IF{ \texttt{isCenter}($u$) }
\STATE {\bf return} $0$  {\color{gray}//a friend is center, so you can't be}
\ENDIF
\ENDIF
\ENDFOR
\STATE {\bf return} $1$ {\color{gray}//no earlier friend are centers, so you are}
\end{algorithmic}
\hrule
}
\end{minipage}
\end{center}

\subsection{\CW{}: Coordination-free Correlation Clustering}
\CW{} speeds up computation 
by ignoring the first concurrency rule. 
It uniformly samples unclustered vertices, and builds clusters around {\it all of them}, without respecting the rule that cluster centers cannot be neighbors in $G$.
In \CW{}, threads bypass the \texttt{attemptCluster} routine; this eliminates the ``waiting" part of \CC{}.
\CW{} samples a set $\setA$ of vertices from the prefix of $\pi$.
Each thread picks the first ordered vertex remaining in  $\setA$, and using that vertex as a cluster center, it creates a cluster around it.
It peels away the clustered vertices and repeats the same process, on the next remaining vertex in $\setA$.
At the end of processing all vertices in $\setA$, all threads are synchronized in bulk, the clustered vertices are removed, a new active set is sampled, and the parallel clustering is repeated.
A careful analysis along the lines of \cite{chierichetti2014correlation} shows that the number of rounds (i.e., bulk synchronization steps) is only poly-logarithmic.

Quite unsurprisingly, \CW{} is faster than \CC{}.
Interestingly, abandoning consistency does not incur much loss in the approximation ratio.
We show how the error introduced in the accuracy of the solution can be bounded.
We characterize this error theoretically, and show that in practice it only translates to only a relative $1$\% loss in the objective.
The main intuition of why \CW{} does not introduce too much error is that the chance of two randomly selected vertices being neighbors is small, hence the concurrency rules are infrequently broken.

\section{Theoretical Guarantees}

In this section, we bound the number of rounds required for each algorithms, and establish the theoretical speedup one can obtain with $P$ parallel threads.
We proceed to present our approximation guarantees.
We would like to remind the reader that---as in relevant literature---we consider graphs that are complete, signed, and unweighted.
The omitted proofs can be found in the Appendix.

\subsection{Number of rounds and running time}
Our analysis follows those of  \cite{blelloch2012greedy} and \cite{chierichetti2014correlation}.
The main idea is to track how fast the maximum degree decreases in the remaining graph at the end of each round.

\begin{restatable}{lem}{lemnumrounds} \label{lem:numrounds}
\CC{} and \CW{} terminate after $O\left(\frac{1}{\epsilon}\log n \cdot \log\Delta\right)$ rounds w.h.p.
\label{lem:rounds}
\end{restatable}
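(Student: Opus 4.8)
The plan is to track the evolution of the maximum positive degree $\Delta$ across rounds, showing that it shrinks by a constant factor (in expectation, and then with high probability) every $O(1/\epsilon)$ rounds, so that after $O(\frac{1}{\epsilon}\log\Delta)$ "phases" the graph is empty. Since each phase is $O(1/\epsilon)$ rounds and we may need a union bound over $O(\log\Delta)$ phases and over vertices, the final bound becomes $O(\frac{1}{\epsilon}\log n\cdot\log\Delta)$ w.h.p. The argument is identical in structure for \CC{} and \CW{}, since what matters for the round count is only which vertices get \emph{removed} each round (cluster centers and their positive neighborhoods), and in both algorithms the active set $\setA$ consists of the first $\epsilon n/\Delta$ vertices of the current permutation prefix, each of which—either directly (\CW{}) or after losing a conflict (\CC{})—causes itself and a chunk of its neighborhood to be peeled.

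First I would fix a round with current vertex set $V$, current max degree $\Delta$, and active prefix $\setA$ of size $\epsilon n/\Delta$. The key combinatorial claim is: after this round, every vertex $w$ that still has positive degree at least $\Delta/2$ in the remaining graph must have had \emph{no} neighbor (and not itself) chosen into $\setA$ — because if any vertex in $C_w = \{w\}\cup N(w)$ landed in the prefix and became a center, then at least $\Delta/2$ of $w$'s neighbors would be clustered and removed (this uses the \KC{}-style peeling: a center removes its whole current positive neighborhood; one has to be slightly careful in \CC{} that a vertex losing a conflict still gets clustered in that round, which it does by concurrency rule no. 2 / \texttt{createCluster}). Since $\setA$ is a uniformly random prefix of size $\epsilon n/\Delta$ drawn from $n$ (or $|V|\le n$) vertices, the probability that a given high-degree $w$ avoids having any of its $\ge \Delta/2$ "dangerous" vertices selected is at most $(1-\frac{\epsilon n/\Delta}{n})^{\Delta/2}\le e^{-\epsilon/2}$ roughly — more carefully one gets a bound like $e^{-\Omega(\epsilon)}$ per round. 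Hence in expectation the max degree is at least halved every $O(1/\epsilon)$ rounds; iterating $\Theta(\frac{1}{\epsilon}\log\Delta)$ rounds drives it below $1$.

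To upgrade "in expectation" to "w.h.p." I would argue directly about survival probability: a fixed vertex $w$ survives (keeps degree $\ge \Delta_0/2^i$ through phase $i$) only if across $c/\epsilon$ consecutive rounds none of its dangerous neighbors is sampled, and each such round independently fails with probability $\le e^{-\Omega(\epsilon)}$, so the chance $w$ survives a full phase is $\le e^{-\Omega(c)}$, made $\le 1/n^{3}$ by taking $c = \Theta(\log n)$ — i.e. $O(\frac{1}{\epsilon}\log n)$ rounds per phase. A union bound over the $\le n$ vertices and $O(\log\Delta)$ phases then gives that after $O(\frac{1}{\epsilon}\log n\cdot\log\Delta)$ rounds the remaining max degree is $0$, i.e. the algorithm has terminated, with probability $\ge 1 - \mathrm{poly}(1/n)$. (The rounds-within-a-phase are not literally independent because the prefix is sampled without replacement and $|V|$ shrinks, but $|V|\le n$ only makes sampling \emph{more} likely to hit a dangerous vertex, so the independent-rounds bound is a valid upper bound on survival; this monotonicity is the one place to state carefully.)

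The main obstacle I expect is this last point — making the high-probability step rigorous despite dependence between rounds and the adaptive, shrinking vertex set — and in particular justifying that a vertex with current degree $\ge \Delta/2$ really is guaranteed to lose $\ge \Delta/2$ neighbors if any prefix vertex in its closed neighborhood becomes a center (straightforward for \CW{}; for \CC{} one must check the conflict-resolution rules don't let such a $w$ escape peeling). Everything else — the per-round $e^{-\Omega(\epsilon)}$ estimate and the union bound — is routine. I'd cite \cite{blelloch2012greedy} and \cite{chierichetti2014correlation} for the template and the subgraph-sampling lemma \cite{krivelevich2014phase} if a cleaner high-probability degree-decay statement is needed.
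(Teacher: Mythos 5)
Your overall template --- a per-round constant probability that any given high-degree vertex is clustered or loses half its degree, amplified over $O(\frac{1}{\epsilon}\log n)$ rounds and union-bounded over the $n$ vertices and the $\log\Delta$ halving phases --- is exactly the paper's, and your treatment of \CW{} essentially coincides with the paper's proof (which bounds the per-round probability that no neighbor of a degree-$\ge\Delta_i/2$ vertex is activated by $\binom{n-d}{P}/\binom{n}{P}\le e^{-\epsilon/2}$, just as you do; your monotonicity remark is a reasonable way to handle the dependence across rounds). One small slip there: the reason a surviving high-degree $w$ must have had no activated neighbor is not that a center adjacent to $w$ would remove $\Delta/2$ of $w$'s neighbors (it need not --- it removes its \emph{own} neighborhood, which may meet $N(w)$ only in $w$ itself), but simply that it would remove $w$.

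The genuine gap is in your handling of \CC{}. You assert that an active vertex which loses a conflict ``still causes itself and a chunk of its neighborhood to be peeled'' via concurrency rule no.~2. That is false: rule no.~2 clusters the conflicting vertex itself (it joins the earlier center's cluster), but it does not peel that vertex's neighborhood --- only cluster centers peel neighborhoods. Hence for \CC{} the event ``some vertex of $N(w)\cup\{w\}$ lands in $\setA$'' does not imply that $w$ is clustered or that its degree halves, and your static dangerous-set computation does not go through. The paper instead follows \cite{blelloch2012greedy}: using serializability, it walks through the prefix in $\pi$-order and argues step by step that, conditioned on $w$ being still unclustered with residual degree at least $\Delta/2$, the next prefix element falls in $w$'s residual neighborhood with probability at least $\Delta/(2n)$, and such a vertex --- being unclustered when reached --- necessarily becomes a center and therefore clusters $w$; iterating over the $\epsilon n/\Delta$ prefix positions yields the same $e^{-\epsilon/2}$ per-round bound via the trichotomy ``clustered, or degree halved, or low-probability failure.'' You would need to replace your dangerous-set step for \CC{} with this conditional, sequential argument.
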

We now analyze the running time of both algorithms under a simplified BSP model.
The main idea is that the the running time of each ``super step" (i.e., round) is determined by the ``straggling" thread (i.e., the one that gets assigned the most amount of work), plus the time needed for synchronization at the end of each round.
\begin{asm}
We assume that threads operate asynchronously within a round, and synchronize at the end of a round.
A memory cell can be written/read concurrently by multiple threads.
The time spent per round of the algorithm is proportional to the time of the slowest thread.
The cost of thread synchronization at the end of each batch takes time $O(P)$, where $P$ is the number of threads.
The total computation cost is proportional to the sum of the time spent for all rounds, plus the time spent during the bulk synchronization step.
\end{asm}
Under this simplified model, we show that both algorithms obtain nearly linear speedup, with \CW{} being faster than \CC{}, precisely due to lack of coordination.
Our main tool for analyzing \CC{} is a recent graph-theoretic result (Theorem 1 in  \cite{krivelevich2014phase}), which guarantees that if one samples an $O(n/\Delta)$ subset of vertices in a graph, the sampled subgraph has a connected component of size at most $O(\log n)$.
Combining the above, in the appendix we show the following result.

\begin{restatable}{theo}{thmrunningtime}
The theoretical running time of \CC{}, on $P$ cores and $\epsilon=1/2$, is upper bounded by
$$O\left( \left(\frac{m+n\log^2 n}{P}+P\right)\log n\cdot\log\Delta\right)$$
as long as the number of cores $P$ is smaller than  $\min_i \frac{n_i}{2\Delta_i}$, where $\frac{n_i}{2\Delta_i}$ is the size of the batch in the $i$-th round of each algorithm.
The running time of \CW{} on $P$ cores is upper bounded by
$$O\left( \left(\frac{m+n}{P}+P\right)\frac{\log n\cdot\log\Delta}{\epsilon^2}\right)$$
for any constant $\epsilon>0$.
\end{restatable}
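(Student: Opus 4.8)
The plan is to charge the running time of each algorithm to three sources per round: the cost of the work performed by the straggling thread, the cost of the bulk synchronization, and then to multiply the per-round cost by the number of rounds from Lemma~\ref{lem:numrounds}, which is $O\!\left(\frac{1}{\epsilon}\log n\cdot\log\Delta\right)$. So the heart of the argument is bounding the work of the slowest thread in a single round. In round $i$ we sample $|\setA| = \epsilon\,\frac{n_i}{\Delta_i}$ vertices from the prefix of $\pi$ over the remaining $n_i$ vertices, where $\Delta_i$ is the maximum degree of the residual graph; the total work a thread does on behalf of a processed vertex $v$ is proportional to its degree $d_v$ (scanning $\Gamma(v)$ in \texttt{createCluster}, or in \texttt{isCenter} for \CC{}), plus, for \CC{}, any time spent \textbf{wait}ing on preceding neighbors.

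First I would handle \CW{}, which is cleaner because there is no waiting. Over the whole execution each edge is examined a constant number of times (a vertex's positive neighborhood is scanned once when it is clustered), so the aggregate work summed over all rounds and all threads is $O(m+n)$; assigning vertices of $\setA$ to the $P$ threads, the straggler in round $i$ does at most $O\!\left(\frac{W_i}{P} + \Delta_i\right)$ work where $W_i$ is the total degree of vertices processed in round $i$ and the additive $\Delta_i$ accounts for one maximal-degree vertex possibly landing alone on a thread. But $|\setA| = \epsilon n_i/\Delta_i \ge P$ by the stated hypothesis $P \le \min_i n_i/(2\Delta_i)$ (for $\epsilon$ a constant), so each thread gets $\Omega(|\setA|/P)$ vertices and the $\Delta_i$ term is absorbed; summing $W_i/P$ over rounds gives $O((m+n)/P)$, and adding the $O(P)$ synchronization cost per round and multiplying by the $O\!\left(\frac{1}{\epsilon}\log n\log\Delta\right)$ rounds yields, after also accounting for the $1/\epsilon$ factor hidden in how $|\setA|$ scales, the claimed $O\!\left(\left(\frac{m+n}{P}+P\right)\frac{\log n\log\Delta}{\epsilon^2}\right)$.

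For \CC{} the extra ingredient is the waiting time in \texttt{isCenter}, and this is where I would invoke the subgraph sampling result (Theorem~1 of \cite{krivelevich2014phase}): since $\setA$ is a sample of $O(n_i/\Delta_i)$ vertices from the residual graph, w.h.p.\ every connected component of the induced subgraph on $\setA$ has size $O(\log n)$. A thread processing $v$ can only wait on preceding neighbors of $v$ that lie in $\setA$, and transitively on their preceding neighbors in $\setA$, so the entire dependency chain is confined to $v$'s connected component in $\setA$; hence at most $O(\log n)$ vertices are ever mutually blocked, and the total wait-plus-work attributable to resolving any one vertex's cluster status is $O(\log n)$ beyond its own degree. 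This inflates the per-round work by an additive $O(n_i \log^2 n / \Delta_i \cdot \tfrac{1}{|\setA|})$-type term — more carefully, the sampled vertices contribute $O(|\setA|\log n)$ extra ``wait'' steps in total, i.e. $O(\log n)$ per sampled vertex times the scan cost, giving an aggregate $O(n\log^2 n)$ over all rounds once we sum $|\setA_i|$ appropriately — so the numerator $m$ in the \CW{} bound is replaced by $m + n\log^2 n$. Plugging $\epsilon = 1/2$ and repeating the straggler/synchronization/round-count bookkeeping gives the stated $O\!\left(\left(\frac{m+n\log^2 n}{P}+P\right)\log n\log\Delta\right)$.

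The main obstacle is making the waiting analysis for \CC{} rigorous: one must argue that the \textbf{wait} in \texttt{isCenter} does not deadlock and that no thread idles longer than the size of its component in $\setA$ — this requires observing that the dependency relation ``$v$ waits on $u$'' only points from higher $\pi$ to lower $\pi$ within $\setA$, hence is acyclic, and that a thread holding the globally-minimal unlabeled vertex in a component can always make progress, so the component drains in $O(\log n)$ steps; combining this with the high-probability component-size bound (and taking a union bound over the $O(\frac{1}{\epsilon}\log n\log\Delta)$ rounds) is the delicate part. The BSP accounting, the edge-charging argument, and the $P \le \min_i n_i/(2\Delta_i)$ condition ensuring balanced load are comparatively routine.
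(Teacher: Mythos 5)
Your proposal follows essentially the same route as the paper: BSP accounting of per-round straggler work plus an $O(P)$ synchronization cost, the Krivelevich component-size bound to cap the chains of waiting threads in \texttt{attemptCluster} at $O(\log n)$ (yielding the extra $n\log^2 n$ term for \CC{}), and multiplication by the round count from Lemma~\ref{lem:numrounds}. The only technical step the paper includes that you elide is that Theorem~1 of \cite{krivelevich2014phase} is stated for i.i.d.\ vertex sampling, whereas the algorithm samples a fixed-size prefix of $\pi$ without replacement; the paper inserts a short lemma converting the i.i.d.\ guarantee into a without-replacement one at the cost of a factor of $n$ in the failure probability.
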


\subsection{Approximation ratio}
We now proceed with establishing the approximation ratios of \CC{} and \CW{}.
\subsubsection{\CC{} is serializable}
It is straightforward that \CC{} obtains precisely the same approximation ratio as \KC{}.
One has to simply show that for any permutation $\pi$, \KC{} and \CC{} will output the same clustering.
This is indeed true, as the two simple concurrency rules mentioned in the previous section are sufficient for \CC{} to be equivalent to \KC{}.
\begin{theo} 
\CC{} achieves a $3$ approximation ratio, in expectation.
\end{theo}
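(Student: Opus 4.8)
The plan is to show that for every fixed permutation $\pi$, the algorithm \CC{} produces exactly the same partition of the vertices as \KC{} run with that same permutation $\pi$ (Alg.~\ref{alg:KCpi}). Once this equivalence is established, the $3$-approximation guarantee of \KC{} from \cite{ailon2008aggregating}, which holds in expectation over the random choice of $\pi$, transfers verbatim to \CC{}: taking expectations over the uniformly random permutation on both sides preserves the inequality $\mathbb{E}[\text{cost}(\CC{})] = \mathbb{E}[\text{cost}(\KC{})] \le 3\,\OPT$.

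First I would set up the right invariant. \KC{} with $\pi$ can be described declaratively: a vertex $v$ is a \emph{cluster center} iff no neighbor $u \in N(v)$ with $\pi(u) < \pi(v)$ is itself a cluster center; and every non-center vertex $w$ is assigned to the cluster of the center $v \in N(w)$ minimizing $\pi(v)$. This is a well-defined recursion on $\pi$-order: whether $v$ is a center depends only on the center/non-center status of strictly $\pi$-earlier vertices, so it can be unrolled unambiguously. I would prove by induction on the $\pi$-rank of $v$ that \CC{} assigns $v$ the correct label (center vs. non-center, via \texttt{isCenter}) and the correct \texttt{clusterID}. The key observation making this work despite parallelism and asynchrony is the \texttt{wait} in \texttt{isCenter}: a thread processing $v$ blocks on each earlier neighbor $u$ until $\texttt{clusterID}(u) \ne \infty$, i.e., until $u$'s status is finalized; hence when $v$ is finally labeled, all of $v$'s $\pi$-earlier neighbors already carry their correct (by the inductive hypothesis) labels, so \texttt{isCenter}($v$) returns the \KC{}-consistent answer. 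This is \emph{concurrency rule no.~1}.

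For the cluster-ID assignment (\emph{concurrency rule no.~2}), I would check that \texttt{createCluster} correctly implements the ``smallest-$\pi$ center wins'' rule: when a center $v$ is created it writes $\text{clusterID}(v)=\pi(v)$ and, for each neighbor $u$, updates $\text{clusterID}(u) \leftarrow \min(\text{clusterID}(u), \pi(v))$; since \texttt{clusterID} values are exactly the $\pi$-indices of centers and the update is a commutative/idempotent $\min$, the final value of $\text{clusterID}(w)$ for a non-center $w$ is $\min\{\pi(v) : v \in N(w),\ v \text{ is a center}\}$ regardless of the order in which the concurrent writes land — which is precisely \KC{}'s assignment. One subtlety to dispatch: vertices that are in the active set $\setA$ at the same time as $v$ are skipped by the $\Gamma(v)\setminus\setA$ loop, but rule no.~1 guarantees two $\pi$-adjacent vertices are never both finalized as centers within the same $\setA$, and across rounds the peeling of clustered vertices means any such $u$ that should have joined $v$'s cluster is handled when it is itself processed; I would argue this case cannot arise for vertices that end up unclustered at the start of the round. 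Finally, since \CC{} loops until $V = \emptyset$ (and Lemma~\ref{lem:numrounds} guarantees termination), every vertex is eventually clustered, matching \KC{}.

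The main obstacle I anticipate is the careful handling of \emph{concurrency and atomicity}: arguing rigorously that the asynchronous reads/writes to \texttt{clusterID} and the recursive \texttt{isCenter} calls cannot produce a non-serializable schedule — in particular, ruling out a cyclic wait (deadlock) among threads, which follows because \texttt{wait} edges always point from higher $\pi$-rank to strictly lower $\pi$-rank and hence form a DAG, and ensuring that the ``$\infty$ sentinel vs. finalized value'' race on \texttt{clusterID} is benign. The graph-theoretic content (that waits are short, via \cite{krivelevich2014phase}) is only needed for the running-time bound, not for correctness, so for this theorem it suffices to establish that \emph{some} serialization exists and that it coincides with \KC{}'s $\pi$-order execution; the $3$-approximation is then immediate from \cite{ailon2008aggregating}.
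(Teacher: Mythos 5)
Your proposal takes essentially the same route as the paper: the paper also proves this theorem by arguing that the two concurrency rules make \CC{} serializable---i.e., that for every fixed $\pi$ it outputs exactly the clustering of \KC{} run with $\pi$---and then inherits the $3$-approximation of \cite{ailon2008aggregating} in expectation over $\pi$. Your write-up is in fact more detailed than the paper's brief justification, correctly isolating the substantive points (induction on $\pi$-rank enabled by the wait-until-finalized rule, the commutative idempotent $\min$ update, and acyclicity of the wait graph), so no further comparison is needed.
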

\subsubsection{\CW{} as a serial procedure on a noisy graph}
Analyzing \CW{} is a bit more involved.
Our guarantees are based on the fact that \CW{} can be treated {\it as if} one was running a peeling algorithm on a ``noisy" graph.
Since adjacent active vertices can still become cluster centers in \CW{}, one can view the edges between them as ``deleted," by a somewhat unconventional adversary.
We analyze this new, noisy graph and establish our theoretical result.

\begin{theo} 
\CW{} achieves a $\left(3+\epsilon\right)\cdot \OPT+O(\epsilon\cdot n\cdot \log^2n)$ approximation, in expectation.
\end{theo}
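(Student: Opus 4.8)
The plan is to view \CW{} as \KC{} executed on a random ``noisy'' graph and then pay, separately, for the noise. Fix the random permutation $\pi$. Let $D_\pi\subseteq E^+$ be the set of positive edges whose two endpoints are ever placed together in the same active set $\setA$ during the run of \CW{} on $G$ with $\pi$, and let $G_\pi$ be $G$ with every edge of $D_\pi$ flipped from positive to negative. First I would prove, by induction on the round index, that \CW{}$(G,\pi)$ and \KC{}$(G_\pi,\pi)$ produce the \emph{same} partition: within a round the active vertices have no preceding positive $G_\pi$-neighbour (removing $D_\pi$ is exactly what achieves this), so \KC{} on $G_\pi$ makes each of them a centre just as \CW{} does, while the $\min$-rule in \texttt{createCluster} together with the exclusion $\Gamma(v)\setminus\setA$ reproduces \KC{}'s tie-breaking (concurrency rule no.~2). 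Every edge of $D_\pi$ joins two centres and hence two distinct clusters, so it is a disagreement under $G$ but not under $G_\pi$, and $G$, $G_\pi$ agree on every other edge; therefore, for every $\pi$,
\[
\mathrm{cost}_G\big(\CW{}(G,\pi)\big)\;=\;\mathrm{cost}_{G_\pi}\big(\KC{}(G_\pi,\pi)\big)\;+\;|D_\pi|.
\]

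Second, I would bound $\mathbb{E}_\pi[|D_\pi|]$. By Lemma~\ref{lem:numrounds} the algorithm uses $O(\tfrac1\epsilon\log n\log\Delta)$ rounds w.h.p. In a round with $n'$ surviving vertices and current maximum positive degree $\Delta'$, the active set is a (near-)uniform sample of $s=\Theta(\epsilon n'/\Delta')$ of the $n'$ vertices, so a fixed positive edge of the surviving graph has both endpoints sampled with probability $O((s/n')^2)=O(\epsilon^2/\Delta'^2)$; since the surviving graph has at most $n'\Delta'/2$ positive edges, the expected number of $D_\pi$-edges created in that round is $O(\epsilon^2 n')=O(\epsilon^2 n)$. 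Summing over rounds gives $\mathbb{E}_\pi[|D_\pi|]=O(\epsilon\, n\log^2 n)$.

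Third --- and this is where the real work lies --- I would bound $\mathbb{E}_\pi\big[\mathrm{cost}_{G_\pi}(\KC{}(G_\pi,\pi))\big]$. One cannot simply quote the $3$-approximation of \KC{} as a black box, because $G_\pi$ is itself a function of the very permutation $\pi$ that \KC{} is run with, so $\pi$ is far from uniform once $G_\pi$ is revealed. Instead I would re-run the Ailon--Charikar--Newman charging argument on $G_\pi$: $\mathrm{cost}_{G_\pi}(\KC{}(G_\pi,\pi))$ equals the number of bad triangles of $G_\pi$ that are \emph{hit}, i.e.\ whose $\pi$-earliest vertex becomes a centre while all three triangle vertices still survive. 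I would split the bad triangles of $G_\pi$ into (i) those already bad in $G$ and containing no $D_\pi$-edge, and (ii) those containing a $D_\pi$-edge --- such a triangle was $+\,+\,+$ in $G$ and became $+\,+\,-$, with the $D_\pi$-edge as its unique negative edge. For family~(i) the fractional/LP certificate underlying the Ailon--Charikar--Newman bound is purely combinatorial and sees none of the randomness, so I expect the total hit-probability of this family to be at most $(3+\epsilon)\,\OPT(G)$, the $\epsilon$ slack absorbing the effect of activating a whole $\epsilon/\Delta$-prefix of $\pi$ at once rather than one vertex at a time. For family~(ii), such a triangle is $\{u,v,w\}$ with $(u,v)\in D_\pi$ and $w$ a common $G$-neighbour of $u,v$; the number of hit triangles in a round is at most $\sum_{w}\binom{|N(w)\cap\setA|}{2}$, and since $\mathbb{E}[|N(w)\cap\setA|]\le\Delta'\cdot s/n'=O(\epsilon)$ the second factorial moment gives $\mathbb{E}\big[\binom{|N(w)\cap\setA|}{2}\big]=O(\epsilon^2)$ --- the same cancellation as in the $|D_\pi|$ bound --- so family~(ii) contributes only $O(\epsilon\, n\log^2 n)$ in expectation over all rounds. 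Combining the three pieces and rescaling $\epsilon$ yields $\mathbb{E}_\pi[\mathrm{cost}_G(\CW{})]\le(3+\epsilon)\,\OPT+O(\epsilon\, n\log^2 n)$.

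The step I expect to be genuinely delicate is family~(i) of the third part: making precise that the adaptive, $\pi$-correlated deletions $D_\pi$ do not corrupt the Ailon--Charikar--Newman certificate beyond a $(1+\epsilon)$ factor. The intuition --- which I would have to turn into a clean argument --- is that every deleted edge lies inside a short random prefix of $\pi$, so the deletions only ever shuffle charges among triangles whose vertices are all ``early'' and whose number is controlled by the sampling calculation above; everything else reduces to the original \KC{} analysis applied verbatim to a fixed graph.
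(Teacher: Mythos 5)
Your architecture is, in slightly different clothing, the paper's own: reduce \CW{} to a serial peeling algorithm on a noisy graph, use the identity ``cost $=$ bad triangles adjacent to cluster centers'' (Lemma~\ref{lem:algocost}), split into bad triangles of the original graph versus triangles manufactured by the ignored edges, and control the latter by the per-round probability $\Pr(u,v\in\setA)\le(\epsilon/\Delta)^2$ times the $O(\frac{1}{\epsilon}\log n\log\Delta)$ round count. Your $|D_\pi|$ and family-(ii) terms jointly play the role of the paper's $\tau_{\text{new}}$ in Lemma~\ref{lem:cwcost}, and both land at $O(\epsilon n\log^2 n)$; the explicit coupling $\mathrm{cost}_G(\CW{}(G,\pi))=\mathrm{cost}_{G_\pi}(\KC{}(G_\pi,\pi))+|D_\pi|$ is a cleaner bookkeeping of what the paper states informally, and it is correct.

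The genuine gap is the one you flagged yourself: family (i), which is the entire content of the theorem rather than a technicality. Asserting that the Ailon--Charikar--Newman certificate ``sees none of the randomness'' is not an argument, and the route you propose---re-running the charging argument on the revealed graph $G_\pi$---keeps the correlation problem alive, because membership in family (i) is itself a function of $\pi$. The paper never conditions on $G_\pi$ at all. It defines, for each bad triangle $t$ of the \emph{original} graph, the event $\mathcal{P}_t$ that some endpoint of $t$ becomes active while $t$ is still intact, and proves the deterministic per-edge inequality $\sum_{t\,\ni\, e}p_t\le\alpha$ with $\alpha=1+2\cdot\frac{4\epsilon}{1-e^{-4\epsilon}}\le 3+O(\epsilon)$; Lemma~\ref{lem:alphaapprox} (the fractional-packing step, which only needs that every optimal disagreement edge lies in each bad triangle through it at most once) then converts this into $\sum_t p_t\le\alpha\cdot\OPT$. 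The per-edge inequality is where the work is: condition on the first round in which $\mathcal{S}_{u,v}$, the union of the bad triangles through $e=(u,v)$, is touched; compute $\Pr(v\in\setA\mid\mathcal{S}_{u,v}\cap\setA\ne\emptyset)\le\frac{\epsilon/\Delta}{1-e^{-\epsilon|\mathcal{S}_{u,v}|/\Delta}}$ from the without-replacement sampling; and use $|\mathcal{S}_{u,v}|\le 2\Delta+2$ to bound the product $|\mathcal{S}_{u,v}|\cdot\Pr(\cdot)$ by $\frac{4\epsilon}{1-e^{-4\epsilon}}$. Until you supply that computation (or an equivalent one), you have bounded only the error terms, not the $(3+\epsilon)\OPT$ main term.
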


We provide a sketch of the proof, and delegate the details to the appendix.
Since \CW{} ignores the edges among active vertices, we treat these edges as deleted.
In our main result, we quantify the loss of clustering accuracy that is caused by ignoring these edges.
Before we proceed, we define {\it bad triangles}, a combinatorial structure that is used to measure the clustering quality of a peeling algorithm.

\begin{defin}
A {\it bad triangle} in $G$ is a set of three vertices, such that two pairs are joined with a positive edge, and one pair is joined with a negative edge. 
Let $\mathcal{T}_b$ denote the set of bad triangles in $G$.
\end{defin}

To quantify the cost of \CW{}, we make the below observation.
\begin{restatable}{lem}{lemalgocost}\label{lem:algocost}
The cost of any greedy algorithm that picks a vertex $v$  (irrespective of the sampling order), creates $\mathcal{C}_v$, peels it away and repeats, is equal to the number of bad triangles adjacent to each cluster center $v$.
\end{restatable}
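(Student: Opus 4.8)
The plan is to set up a bijective-style accounting between the disagreements produced by the greedy peeling algorithm and the bad triangles incident to the cluster centers it selects. Fix a run of the algorithm, and let $v_1, v_2, \ldots$ be the sequence of cluster centers in the order they are picked, with associated clusters $\mathcal{C}_{v_1}, \mathcal{C}_{v_2}, \ldots$ partitioning $V$. I would first argue that every disagreeing edge of the output clustering can be "charged" to the unique earliest cluster center whose peeling step is responsible for it, namely: a positive edge $(u,w)$ cut by the clustering is charged to the first $v_i$ such that exactly one of $u,w$ lies in $\mathcal{C}_{v_i}$ (this $v_i$ must be a positive neighbor of the endpoint it captures); a negative edge $(u,w)$ placed inside a cluster $\mathcal{C}_{v_i}$ is charged to that $v_i$, and here both $u$ and $w$ are positive neighbors of $v_i$.

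The key step is then to show that, for a fixed cluster center $v_i$, the disagreeing edges charged to $v_i$ are in one-to-one correspondence with the bad triangles that contain $v_i$ and are "alive" when $v_i$ is picked — i.e., bad triangles on the vertex set $\{v_i, u, w\}$ with both $u,w$ still unclustered at that round. In the negative-edge case, $(v_i,u)$ and $(v_i,w)$ are positive and $(u,w)$ is negative, which is exactly a bad triangle through $v_i$; conversely such a triangle contributes exactly the internal negative edge $(u,w)$ to the cost. In the positive-edge case, suppose $(u,w)$ is positive, $u \in \mathcal{C}_{v_i}$, $w \notin \mathcal{C}_{v_i}$; then $(v_i,u)$ is positive (since $u$ was pulled in), and $(v_i,w)$ must be negative — otherwise $w$ would also have been pulled into $\mathcal{C}_{v_i}$, contradicting $w \notin \mathcal{C}_{v_i}$ — so $\{v_i,u,w\}$ is again a bad triangle through $v_i$, and distinct charged edges give distinct triangles. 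One then checks the reverse direction: each bad triangle through $v_i$ alive at that round contributes exactly one disagreement charged to $v_i$, depending on which of its two positive edges is incident to $v_i$.

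Finally I would observe that a bad triangle incident to $v_i$ that is alive at round $i$ is never double-counted: once $\mathcal{C}_{v_i}$ is peeled, $v_i$ and all of $\mathcal{C}_{v_i}$ leave the graph, so the triangle cannot be "adjacent to" any later center in the sense of the statement; and a bad triangle destroyed before any of its vertices becomes a center contributes no disagreement and is counted by no center. Summing the per-center equality over all $i$ gives that the total cost equals the number of bad triangles adjacent to the cluster centers, which is the claim. The main obstacle is bookkeeping precision in the charging argument: making sure the "first $v_i$ responsible" rule assigns each disagreeing edge to exactly one center and that this center sees a genuine bad triangle (in particular handling the positive-cut case where one must rule out $(v_i,w)$ being positive). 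Everything else — the partition structure, the peeling dynamics — follows directly from the definition of the greedy algorithm and does not depend on the sampling order, which is why the statement holds for any vertex-selection rule.
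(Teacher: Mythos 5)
Your proposal is correct and follows essentially the same route as the paper's proof: the paper also charges the disagreements created when a cluster $\mathcal{C}_v$ is peeled (negative pairs inside $\mathcal{C}_v$ plus positive edges leaving $\mathcal{C}_v$) to bad triangles through the center $v$, arguing round by round ``WLOG for the first step.'' Your version merely makes the cross-round bookkeeping (the ``first responsible center'' rule and the no-double-counting check) explicit, which the paper leaves implicit.
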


\begin{restatable}{lem}{lemcwcost}\label{lem:cwcost}
Let $\hat G$ denote the random graph induced by deleting all edges between active vertices per round, for a given run of \CW{}, and let  $\tau_{\text{new}}$ denote the number of additional bad triangles that $\hat G$ has compared to $G$.
Then, the expected cost of \CW{}  can be upper bounded as
$$\mathbb{E}\left\{\sum_{t\in\Tb} {\bf 1}_{\mathcal{P}_t} +\tau_{\text{new}}\right\},$$
where $\mathcal{P}_t$ is the event that triangle $t$, with end points $i,j,k$, is bad, and at least one of its end points becomes active, while $t$ is still part of the original unclustered graph.
\end{restatable}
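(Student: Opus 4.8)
The proof runs through the ``noisy graph'' reduction. First I would show that, for every fixed permutation $\pi$, the clustering that \CW{} produces on $G$ is \emph{identical} to the clustering that \KC{} (Algorithm~\ref{alg:KCpi}) produces when run on $\hat G$ with the same $\pi$. The point is that two vertices processed concurrently by \CW{} in a round $r$ both lie in the active set $\setA_r$, so in $\hat G$ their positive edge is deleted and they are non-adjacent; consequently serial \KC{} on $\hat G$, reaching them in $\pi$-order, makes each of them a pivot exactly as \CW{} does, and every remaining vertex joins the cluster of its smallest-$\pi$ incident pivot under both processes (this is the role of the $\min$ in \texttt{createCluster}). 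Establishing this equivalence rigorously is the bulk of the work: one argues round by round -- what happens in round $r$ depends only on edges among the vertices unclustered at the start of round $r$, so the circularity in ``$\hat G$ depends on the run'' is harmless -- and one checks that an active vertex is never absorbed by an earlier pivot of its own round, which holds precisely because \texttt{createCluster} skips $\setA$, the algorithmic counterpart of the deleted edges.

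Granting the reduction, Lemma~\ref{lem:algocost} applied to $\hat G$ says that the cost of \CW{}'s clustering, measured against $\hat G$, equals the number of bad triangles of $\hat G$ that are \emph{charged} to a pivot (a bad triangle $t$ of $\hat G$ is charged when its smallest-$\pi$ vertex is made a pivot while all three vertices of $t$ are still unclustered). The only edges on which $G$ and $\hat G$ disagree are the deleted positive edges; each runs between two pivots that sit in different clusters, so it adds to the cost against $G$ but not against $\hat G$, and bounding the number of such edges in expectation is exactly what produces the additive $O(\epsilon\, n\log^2 n)$ term of the final approximation theorem, separately from this lemma. Now split the charged bad triangles of $\hat G$ into those that are also bad triangles of $G$ and those that are not; the latter number at most $\tau_{\text{new}}$, yielding that term. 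For a charged triangle $t\in\Tb$: its charging pivot is one of the endpoints $i,j,k$ and, since \CW{} only ever pivots on active vertices, that endpoint becomes active in some round $r$; moreover nothing is ever un-clustered and only lower-$\pi$ members of $\setA_r$ act before it, so $t$ being intact when the pivot is processed forces $t$ to be intact already at the start of round $r$. Hence $\mathcal{P}_t$ holds, i.e.\ ${\bf 1}[t\text{ charged}]\le {\bf 1}_{\mathcal{P}_t}$. Summing over triangles gives, for every realization of $\pi$, that the cost is at most $\sum_{t\in\Tb}{\bf 1}_{\mathcal{P}_t}+\tau_{\text{new}}$; taking expectations completes the proof.

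I expect the first step to be the main obstacle: making the coupling between the coordination-free, BSP \CW{} and serial \KC{} on the run-dependent graph $\hat G$ airtight -- in particular the tie-breaking/min-assignment of non-center vertices across rounds, the identification of the pivot set with the set of ever-active vertices, and the claim that active vertices are never absorbed within their own round. Once that is in place, invoking Lemma~\ref{lem:algocost}, the old/new triangle split, and the implication ``charged $\Rightarrow \mathcal{P}_t$'' are each short. A secondary point is to pin down the phrase ``$t$ is still part of the original unclustered graph'' as ``all three endpoints of $t$ are still present''; the argument above shows this is automatically satisfied whenever $t$ is charged in the run.
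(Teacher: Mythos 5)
Your proof takes the same route as the paper's: treat \CW{} as \KC{} run on the noisy graph $\hat G$, invoke Lemma~\ref{lem:algocost} on $\hat G$, and split the charged bad triangles of $\hat G$ into those already bad in $G$ (bounded by $\sum_{t\in\Tb}\mathbf{1}_{\mathcal{P}_t}$, since a charging pivot is an active endpoint of an intact triangle) and the newly created ones (bounded by $\tau_{\text{new}}$). If anything, the paper's own argument is terser than yours---it asserts the \CW{}-as-\KC{}-on-$\hat G$ equivalence and the old/new split in a few sentences---so your round-by-round coupling supplies more detail than the paper does. One point needs to be resolved, though. You correctly observe that each deleted positive edge joins two pivots of the same round that land in different clusters, and is therefore a disagreement with respect to $G$ that the bad-triangle accounting on $\hat G$ cannot see; but your concluding claim that ``for every realization the cost is at most $\sum_{t\in\Tb}\mathbf{1}_{\mathcal{P}_t}+\tau_{\text{new}}$'' then silently drops exactly that contribution. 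What your argument actually establishes pointwise is a bound on the cost measured against $\hat G$; the cost against $G$ exceeds it by the number of deleted edges. The lemma as literally stated has the same omission (take $G$ to be a positive perfect matching and $\epsilon$ near $1$: then $\Tb=\emptyset$ and $\tau_{\text{new}}=0$, yet \CW{} cuts $\Theta(\epsilon^2 n)$ matched pairs in expectation), so either a third term counting the deleted edges must appear inside the expectation, or the lemma must be read as bounding the cost on $\hat G$ only. Since the expected number of deleted edges per round is at most $\epsilon^2 E_i/\Delta_i^2\le\epsilon^2 n$, the same calculation used for $\tau_{\text{new}}$ absorbs it into the $O(\epsilon\, n\log^2 n)$ term and the downstream theorem is unharmed; your instinct to handle it ``separately from this lemma'' is the right repair, but the displayed bound should then carry that extra term explicitly rather than contradict your own observation.
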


We provide the proof for the above two lemmas in the Appendix.
We continue with bounding the second term $\mathbb{E}\{\tau_{\text{new}}\}$ in the bound of Lemma~\ref{lem:cwcost}, by considering the number of new bad triangles $\tau_{\text{new},i}$ created at each round $i$ (in the following $\setA_i$, denotes the set of active vertices at round $i$):

$$\mathbb{E}\left\{\tau_{\text{new},i}\right\}
\le   \sum_{(u,v) \in E_i^+} \Pr(u,v\in \mathcal{A}_i) \cdot |N_i(u) \cup N_i(v)| 
\le \sum_{(u,v) \in E_i^+} \left(\frac{\epsilon}{\Delta_i}\right)^2 \cdot 2 \cdot \Delta_i
\quad\le\quad 2\cdot \epsilon^2\cdot \frac{E_i}{\Delta_i}
\quad\le\quad 2\cdot \epsilon^2 \cdot n$$
where $E_i^+$ is the set of remaining positive and $N_i(v)$ the neighborhood of vertex $v$ at round $i$, the second inequality is due to the fact that the size of the neighborhoods is upper bounded by $\Delta_i$, the maximum positive degree at round $i$, and the probability bound is true since we are sampling $\frac{n_i}{\Delta_i}$ vertices without replacement from a total of $n_i$, the number of unclustered vertices at round $i$; the final inequality is true since $E_i\le n\cdot \Delta_i$.
Using the result that \CW{} terminates after at most $O(\frac{1}{\epsilon}\log n \log \Delta)$ rounds, we get that\footnote{We skip the constants to simplify the presentation; however they are all smaller than 10.} 
$$\mathbb{E}\left\{\tau_{\text{new}}\right\}\le O(\epsilon\cdot n\cdot \log^2n ).$$
We are left to bound  
$$\mathbb{E}\left\{\sum_{t\in\Tb} {\bf 1}_{\mathcal{P}_t} \right\} =\sum_{t\in\Tb} p_t .$$
To do that we use the following lemma.
\begin{lem}
\label{lem:alphaapprox}
If $p_t$ satisfies 
$$\forall e,\;\sum_{t: e \subset t \in \mathcal{T}_b} \frac{p_t}{\alpha} \leq 1,$$
then,
$$\sum_{t \in \mathcal{T}_b} p_t \le \alpha \cdot OPT.$$
\end{lem}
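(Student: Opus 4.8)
The plan is to prove Lemma~\ref{lem:alphaapprox} by exhibiting an explicit feasible solution to the dual of the natural LP relaxation of correlation clustering, in the style of the classical \KC{} analysis of \cite{ailon2008aggregating}. Recall that the standard LP relaxation places a variable $x_e \in [0,1]$ on each edge $e$ (with $x_e$ interpreted as a ``distance'': $x_e$ small means the endpoints are clustered together), subject to triangle inequalities, and the objective is $\sum_{e \in E^+} x_e + \sum_{e \in E^-}(1-x_e)$. The key combinatorial fact is that every \emph{bad triangle} $t$ — two positive edges and one negative edge — contributes at least one unit of ``unavoidable cost'' to any integral clustering, and more precisely, for any fractional point satisfying the triangle inequalities, the three edges of a bad triangle together contribute at least a constant (in fact the relevant fact is just that bad triangles are the obstructions: a graph with no bad triangles can be clustered with zero disagreements).

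First I would set up the LP and its dual explicitly. The dual has a variable $y_t \ge 0$ for each bad triangle $t \in \mathcal{T}_b$, and the dual constraints say roughly that for each edge $e$, the sum of $y_t$ over bad triangles $t$ containing $e$ is at most the per-edge cost coefficient ($1$ for each edge, up to the $\pm$ sign bookkeeping). By LP duality (weak duality suffices), any feasible dual solution lower bounds $\OPT_{LP} \le \OPT$. So if I set $y_t = p_t / \alpha$, the hypothesis $\sum_{t : e \subset t \in \mathcal{T}_b} p_t/\alpha \le 1$ for every edge $e$ is exactly dual feasibility, and weak duality gives $\sum_t p_t/\alpha \le \OPT$, i.e. $\sum_t p_t \le \alpha \cdot \OPT$, which is the claim. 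The bulk of the writeup is therefore: (i) state the primal LP and check it relaxes the true correlation clustering objective so that $\OPT_{LP}\le\OPT$; (ii) write the dual; (iii) observe the hypothesis is dual feasibility; (iv) invoke weak duality.

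The main obstacle — really the only subtle point — is getting the edge-cost bookkeeping in the LP/dual exactly right so that the per-edge dual constraint reads ``$\sum_{t \ni e} y_t \le 1$'' uniformly over all edges, matching the hypothesis verbatim. In a bad triangle with two $+$ edges and one $-$ edge, the LP objective charges $x_e$ to the positive edges and $1-x_e$ to the negative edge; the triangle-inequality constraint that binds is of the form $x_{e_1} + x_{e_2} \ge x_{e_3}$ where $e_3$ is the negative edge (or the analogous one depending on which pair is negative), and this is exactly the constraint whose dual variable is the ``packing weight'' of that bad triangle. One must check that each such triangle inequality involves each of its three edges with coefficient $\pm1$ and that, after the change of variables $z_e = 1-x_e$ on negative edges, the dual constraint on every edge $e$ collapses to $\sum_{t \ni e} y_t \le 1$. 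This is the standard fact underlying the $3$-approximation (where the triangle-packing LP has integrality gap at most... well, where fractional triangle packings lower-bound $\OPT$), so I would cite \cite{ailon2008aggregating} for the precise LP and present the dual-feasibility argument compactly. I expect the proof to be about half a page: set up LP, set up dual, substitute $y_t = p_t/\alpha$, apply weak duality.
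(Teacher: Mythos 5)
Your plan is correct and would yield the lemma, but it is a heavier route than the one the paper takes. The paper's proof is a two-line charging argument against the \emph{integral} optimum: let $\mathcal{B}_*$ be the set of disagreement edges of an optimal clustering, so $\OPT=|\mathcal{B}_*|$; summing the hypothesis $\sum_{t:\,e\subset t\in\mathcal{T}_b}p_t/\alpha\le 1$ over $e\in\mathcal{B}_*$ and swapping the order of summation gives $\OPT\ge\sum_{t\in\mathcal{T}_b}|\mathcal{B}_*\cap t|\,p_t/\alpha\ge\sum_{t\in\mathcal{T}_b}p_t/\alpha$, where the last step uses the single combinatorial fact that every bad triangle contains at least one disagreement edge of \emph{any} clustering. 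That fact is exactly the ``unavoidable unit of cost'' you identify, so your argument and the paper's share the same core; but you then wrap it in the full metric LP relaxation and its formal dual, which forces you to handle the $x_e$ versus $1-x_e$ bookkeeping on negative edges and to verify that the relevant dual constraints collapse to $\sum_{t\ni e}y_t\le 1$ — the very step you flag as the main obstacle. That step does go through (it is the standard Ailon--Charikar--Newman packing bound, and your route even proves the slightly stronger statement $\sum_t p_t\le\alpha\cdot\OPT_{LP}$), but none of it is needed here: the hypothesis is already a packing constraint against \emph{integral} edge costs, so weak ``duality'' against the integral optimum is just the sum-swapping computation above. If you write it up, I would recommend the direct charging argument; it avoids introducing the LP entirely and eliminates the sign bookkeeping.
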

\begin{proof}
Let $\mathcal{B}_*$ be one (of the possibly many) sets of edges that attribute a $+1$ in the cost of an optimal algorithm.
Then,
$$\OPT  = \sum_{e\in \mathcal{B}^*} 1 
\ge \sum_{e\in \mathcal{B}^*} \sum_{t: e \subset t \in \mathcal{T}_b} \frac{p_t}{\alpha}
= \sum_{t\in \mathcal{T}_b} \underbrace{|B_*\cap t|}_{\ge 1} \frac{p_t}{\alpha}
\ge \sum_{t\in \mathcal{T}_b} \frac{p_t}{\alpha}.
$$
\end{proof}
Now, as with \cite{chierichetti2014correlation}, we will simply have to bound the expectation of the bad triangles, adjacent to an edge $(u,v)$: 
$$\sum_{t:\{u,v\}\subset t \in \mathcal{T}_b} {\bf 1}_{\mathcal{P}_t}.$$
Let $\mathcal{S}_{u,v} = \bigcup_{\{u,v\}\subset t \in \mathcal{T}_b} t$ be the union of the sets of nodes of the bad triangles that contain both vertices $u$ and $v$.
Observe that if some $w \in S \backslash \{u,v\}$ becomes active before $u$ and $v$, then a cost of $1$ (i.e., the cost of the bad triangle $\{u,v,w\}$) is incurred.
On the other hand, if either $u$ or $v$, or both, are selected as pivots in some round, then $\mathcal{C}_{u,v} $ can be as high as $|S|-2$, i.e., at most equal to all bad triangles containing the edge $\{u,v\}$.
Let $A_{uv} = \{\text{$u$ or $v$ are activated before any other vertices in $S_{u,v}$}\}$.
Then,
\begin{align*}
\mathbb{E}\left[\mathcal{C}_{u,v}\right]
&=\mathbb{E}\left[\left.\mathcal{C}_{u,v} \right| A_{u,v}\right] \cdot \Pr(A_{u,v})
+ \mathbb{E}\left[\left.\mathcal{C}_{u,v} \right| A_{u,v}^C\right] \cdot \Pr(A_{u,v}^C)\\
&\leq 1 + (|S|-2) \cdot \Pr(\{u,v\} \cap \mathcal{A} \neq \emptyset | \mathcal{S} \cap \mathcal{A} \neq \emptyset)\\
&\leq 1 + 2|S| \cdot \Pr(v\cap \mathcal{A} \neq \emptyset | \mathcal{S} \cap \mathcal{A} \neq \emptyset)
\end{align*}
where the last inequality is obtained by a union bound over $u$ and $v$.
We now bound the following probability:
\begin{align*}
\Pr\left\{
\left.v \cap \mathcal{A} \neq\emptyset\right|  \mathcal{S}\cap\mathcal{A} \neq\emptyset
\right\} 
= \frac{\Pr\left\{v\in\mathcal{A}\right\}\cdot \Pr\left\{\mathcal{S}\cap\mathcal{A} \neq\emptyset \left|v\in\mathcal{A} \right.\right\}}{\Pr\left\{\mathcal{S}\cap\mathcal{A} \neq\emptyset \right\}}
= \frac{\Pr\left\{v\in\mathcal{A}\right\}}{\Pr\left\{\mathcal{S}\cap\mathcal{A} \neq\emptyset \right\}}
=\frac{\Pr\left\{v\in\mathcal{A}\right\}}{1-\Pr\left\{\mathcal{S}\cap\mathcal{A} =\emptyset\right\}}.
\end{align*}
Observe that 
$\Pr\left\{v\in\mathcal{A}\right\} = \frac{\epsilon}{\Delta}$, hence we need to upper bound $\Pr\left\{\mathcal{S}\cap\mathcal{A} =\emptyset\right\}$.
The probability, per round, that no positive neighbors in $\mathcal{S}$ become activated is upper bounded by
\begin{align*}
\frac{{n-|\mathcal{S}|\choose P}}{{n\choose P}}
&= \prod_{t=1}^{|\mathcal{S}|}\left(1- \frac{P}{n-|\mathcal{S}|+t}\right)\le \left(1- \frac{P}{n}\right)^{|\mathcal{S}|}\\
&= \left[\left(1- \frac{P}{n}\right)^{n/P}\right]^{|\mathcal{S}|n/P}
\le \left(\frac{1}{e}\right)^{|\mathcal{S}|n/P}.
\end{align*}
Hence, we obtain the following bound
\begin{align*}
|\mathcal{S}|\Pr\left\{
\left.v \cap \mathcal{A} \neq\emptyset\right|  \mathcal{S}\cap\mathcal{A} \neq\emptyset
\right\} 
&\le \frac{\epsilon\cdot \sfrac{|\mathcal{S}|}{\Delta}}{1-e^{-\epsilon\cdot \sfrac{|\mathcal{S}|}{\Delta}}}.
\end{align*}
We now know that $|\mathcal{S}|\le 2\cdot\Delta+2$ and also $\epsilon\le 1$.
Then, 
$$0\le \epsilon\cdot \frac{|\mathcal{S}|}{\Delta}\le\epsilon\cdot \left(2+\frac{2}{\Delta}\right)\le 4. $$
Hence, we have 
$$\mathbb{E}(\mathcal{C}_{u,v}) \leq 1 + 2\cdot \frac{4\epsilon}{1-\exp\{-4\epsilon\}}.$$
The overall expectation is then bounded by
$$
\mathbb{E}\left\{\sum_{t\in\Tb} {\bf 1}_{\mathcal{P}_t} +\tau_{\text{new}}\right\}
\le \left(1+2\cdot \frac{4\cdot \epsilon }{1-e^{-4\cdot \epsilon}}\right)\cdot \OPT+O(\epsilon\cdot n\cdot \log^2 n)
\le \left(3+\epsilon\right)\cdot \OPT+O(\epsilon\cdot n\cdot \log^2 n)
$$
which establishes our approximation ratio for \CW{}.

\begin{wrapfigure}{RH}{0.5\columnwidth}
\begin{minipage}{0.5\columnwidth}
\begin{algorithm}[H]
   \caption{\CC{}  \& \CW{}\\ \centerline{(asynchronous execution)}}
          \scriptsize{
\begin{algorithmic}[1]
\STATE {\bf Input}: $G$
\STATE $\text{clusterID}(1) = \ldots = \text{clusterID}(n) = \infty$
\STATE $\pi=$ a random permutation of $\{1,\ldots, n\}$
\WHILE{$V\ne \emptyset $} 
\STATE$v = \text{first element in }V$
\STATE $V = V-\{v\}$
\IF{ {\color{blue}\CC{}}} {\color{blue}// concurrency control}
\STATE \texttt{attemptCluster}($v$) 
\ELSIF{ {\color{red}\CW{}}}  {\color{red}// coordination free}
\STATE \texttt{createCluster}($v$)  
\ENDIF
\STATE Remove clustered vertices from $V$ and $\pi$

\ENDWHILE
 \STATE {\bf Output:} $\{\text{clusterID}(1),\ldots, \text{clusterID}(n)\}$.
\end{algorithmic}
}
   \label{alg:CCasync}
 \end{algorithm}
 \end{minipage}
 \end{wrapfigure}

\subsection{BSP Algorithms as a Proxy for Asynchronous Algorithms}
\label{sec:bsp}
We would like to note that the analysis under the BSP model can be a useful proxy for the performance of completely asynchronous variants of our algorithms.
Specifically, see Alg.~\ref{alg:CCasync}, where we remove the synchronization barriers.

The only difference between the asynchronous execution in Alg.~\ref{alg:CCasync}, compared to Alg.~\ref{alg:CC}, is the complete lack of bulk synchronization, at the end of the processing of each active set $\setA$.
Although the analysis of the BSP variants of the algorithms is tractable, unfortunately analyzing precisely the speedup of the asynchronous \CC{} and the approximation guarantees for the asynchronous \CW{} is challenging.
However, in our experimental section we test the completely asynchronous algorithms against the BSP algorithms of the previous section, and observe that they perform quite similarly both in terms of accuracy of clustering, and running times.

\section{Related Work}

Correlation clustering was formally introduced by Bansal et al.~\cite{bansal2002correlation}.
In the general case, minimizing disagreements is NP-hard and hard to approximate within an arbitrarily small constant (APX-hard)~\cite{bansal2002correlation,charikar2003clustering}.
There are two variations of the problem: 
{\it i)} CC on complete graphs where all edges are present and all weights are $\pm1$, and 
{\it ii)} CC on general graphs with arbitrary edge weights.
Both problems are hard, however the general graph setup seems fundamentally harder.
The best known approximation ratio for the latter is $O(\log n)$, and a reduction to the minimum multicut problem indicates that any improvement to that requires fundamental breakthroughs in theoretical algorithms \cite{demaine2006correlation}.

In the case of complete unweighted graphs, a long series of results establishes 
a 2.5 approximation via a rounded linear program (LP)~\cite{ailon2008aggregating}.
A recent result establishes a 2.06 approximation using an elegant rounding to the same LP relaxation~\cite{chawla2015near}. 
By avoiding the expensive LP, and by just using the rounding procedure of \cite{ailon2008aggregating} as a basis for a greedy algorithm yields \KC{}:
a 3 approximation for CC on complete unweighted graphs.

Variations of the cost metric for CC change the algorithmic landscape: maximizing agreements (the dual measure of disagreements) \cite{bansal2002correlation,swamy2004correlation,giotis2006correlation}, or maximizing the difference between the number of  agreements and disagreements~\cite{charikar2004maximizing, alon2006quadratic}, come with different hardness and approximation results.
There are also several variants: chromatic CC \cite{bonchi2012chromatic},
overlapping CC \cite{bonchi2011overlapping},
or CC with small number of clusters and added constraints that are suitable for biology applications \cite{puleo2014correlation}.

The way \CC{} finds the cluster centers can be seen as a variation of the MIS algorithm of \cite{blelloch2012greedy}; 
the main difference is that in our case, we ``passively" detect the MIS, by locking on memory variables, and by waiting on preceding ordered threads.
This means, that a vertex only ``pushes" its cluster ID and status (cluster center/clustered/unclustered) to its neighbors, versus ``pulling" (or asking) for its neighbors' cluster status.
This saves a substantial amount of computational effort.
A sketch of the idea of using parallel MIS algorithms for CC was presented in \cite{bonchi2014correlation}, where the authors suggest using Luby's algorithm for finding an MIS, and then using the MIS vertices as cluster centers.
However, a closer look on this approach reveals that there is fundamentally more work need to be done to cluster the vertices.

\section{Experiments}
Our parallel algorithms were all implemented in Scala---we defer a full discussion of the implementation details to Appendix \ref{app:exptresults}.
We ran all our experiments on Amazon EC2's r3.8xlarge (32 vCPUs, 244Gb memory) instances, using 1-32 threads.
\begin{table}[h]
\centering\scriptsize
\begin{tabular}{|c|r|r|c|}\hline
Graph & \# vertices & \# edges & Description \\\hline\hline
DBLP-2011    &     986,324 &      6,707,236 & 2011 DBLP co-authorship network              \cite{BoVWFI, BRSLLP, BCSU3}. \\\hline
ENWiki-2013  &   4,206,785 &    101,355,853 & 2013 link graph of English part of Wikipedia \cite{BoVWFI, BRSLLP, BCSU3}. \\\hline
UK-2005      &  39,459,925 &    921,345,078 & 2005 crawl of the .uk domain                 \cite{BoVWFI, BRSLLP, BCSU3}. \\\hline
IT-2004      &  41,291,594 &  1,135,718,909 & 2004 crawl of the .it domain                 \cite{BoVWFI, BRSLLP, BCSU3}. \\\hline
WebBase-2001 & 118,142,155 &  1,019,903,190 & 2001 crawl by WebBase crawler                \cite{BoVWFI, BRSLLP, BCSU3}. \\\hline
\end{tabular}
\caption{\footnotesize Graphs used in the evaluation of our parallel algorithms.}
\label{tab:graphstats}
\end{table}
The real graphs listed in Table \ref{tab:graphstats} were each tested with 100 different random $\pi$ orderings.
We measured the runtimes, speedups (ratio of runtime on 1 thread to runtime on $p$ threads), and objective values obtained by our parallel algorithms.
For comparison, we also implemented the algorithm presented in \cite{chierichetti2014correlation}, which we denote as CDK for short\footnote{
CDK was only tested on the smaller graphs of DBLP-2011 and ENWiki-2013,
because CDK was prohibitively slow, often 2-3 orders of magnitude slower than \CC{}, \CW{}, and even serial \KC{}.
}.
Values of $\epsilon = 0.1, 0.5, 0.9$ were used for \CC{} BSP, \CW{} BSP and CDK.
In the interest of space, we present only representative plots of our results;
full results are given in our appendix.

\begin{figure}[t!]
  \centering
  \begin{tabular}{cc}
    \begin{subfigure}[b]{0.49\textwidth}
      \includegraphics[width=185pt]{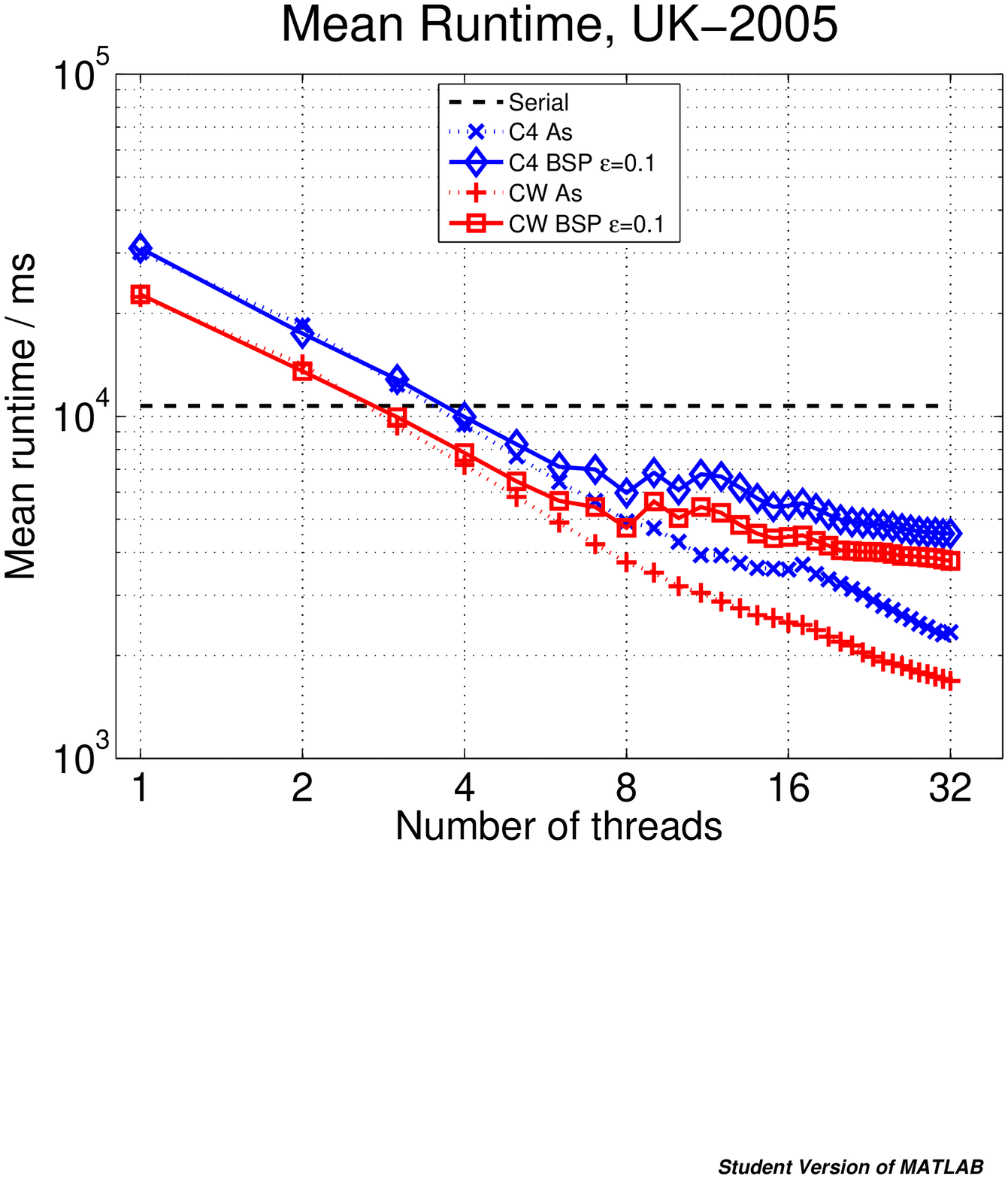}
      \caption{ Mean runtimes, UK-2005, $\epsilon = 0.1$}
      \label{fig:runtimes_uk05_ave_01}
    \end{subfigure} &
    \begin{subfigure}[b]{0.49\textwidth}
      \includegraphics[width=185pt]{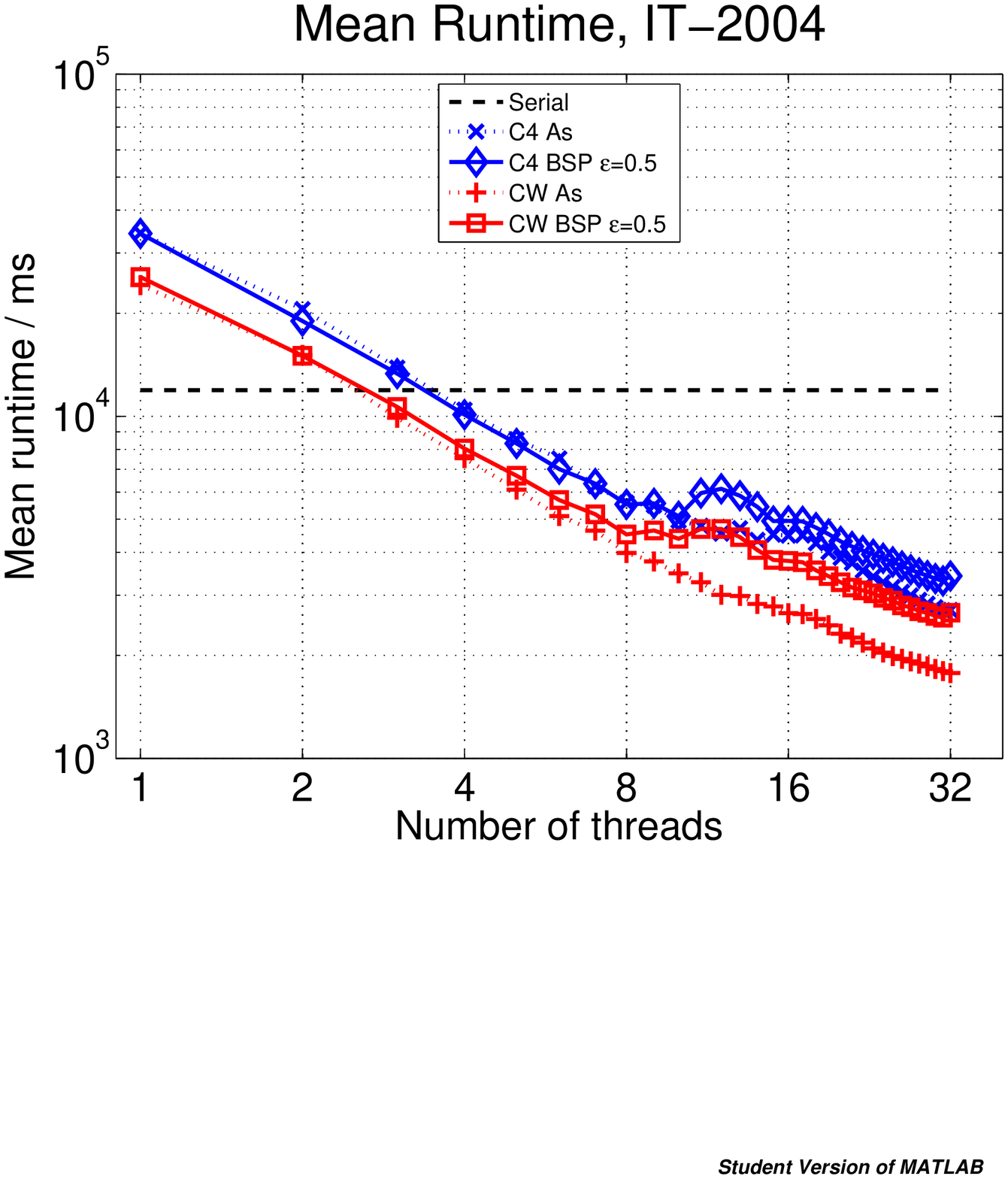}
      \caption{ Mean runtimes, IT-2004, $\epsilon = 0.5$}
      \label{fig:runtimes_it04_ave_05}
    \end{subfigure} \\
    \begin{subfigure}[b]{0.49\textwidth}
      \includegraphics[width=185pt]{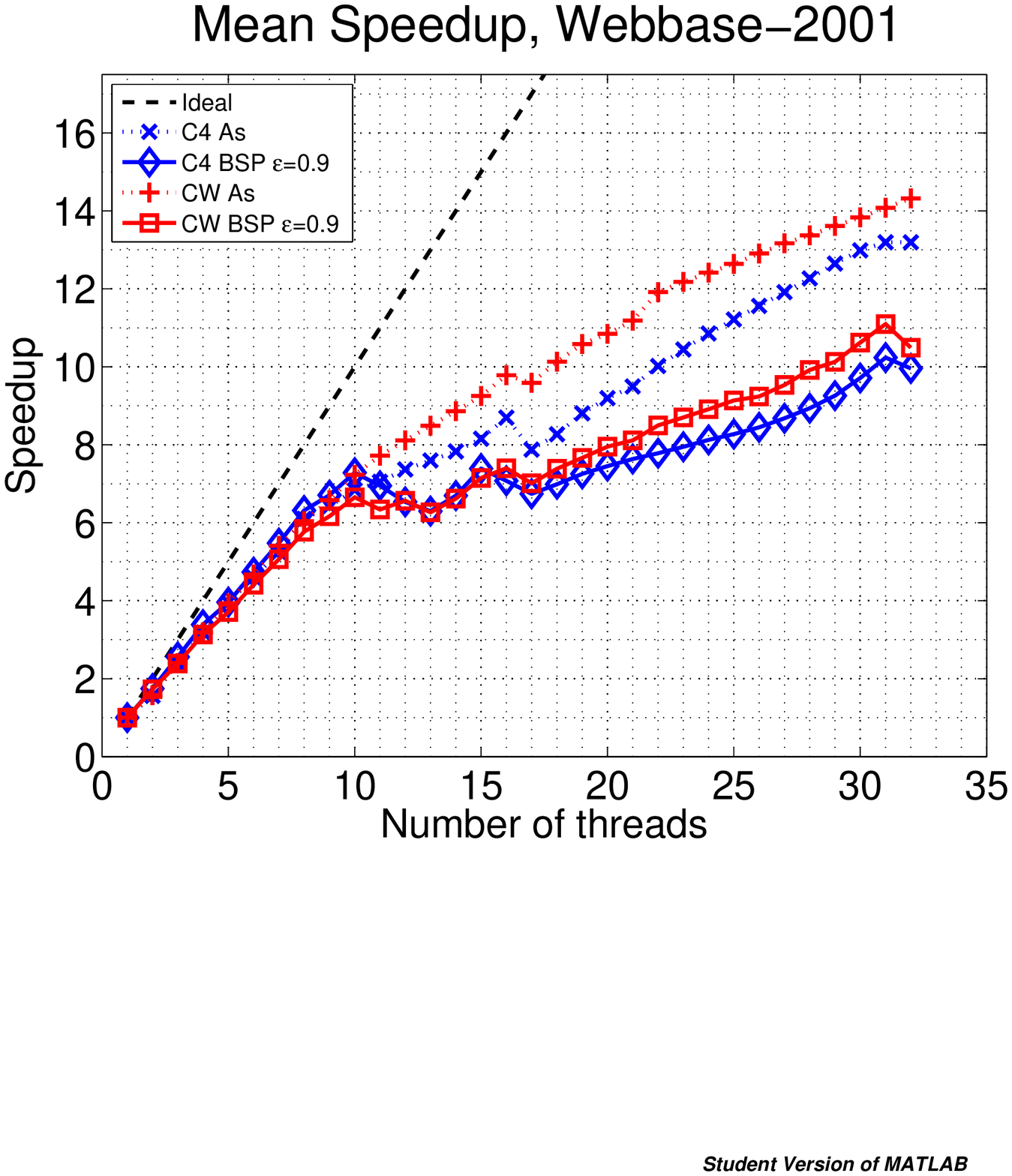}
      \caption{ Mean speedup, WebBase, $\epsilon = 0.9$}
      \label{fig:speedups_wb01_09}
    \end{subfigure} &
    \begin{subfigure}[b]{0.49\textwidth}
      \includegraphics[width=185pt]{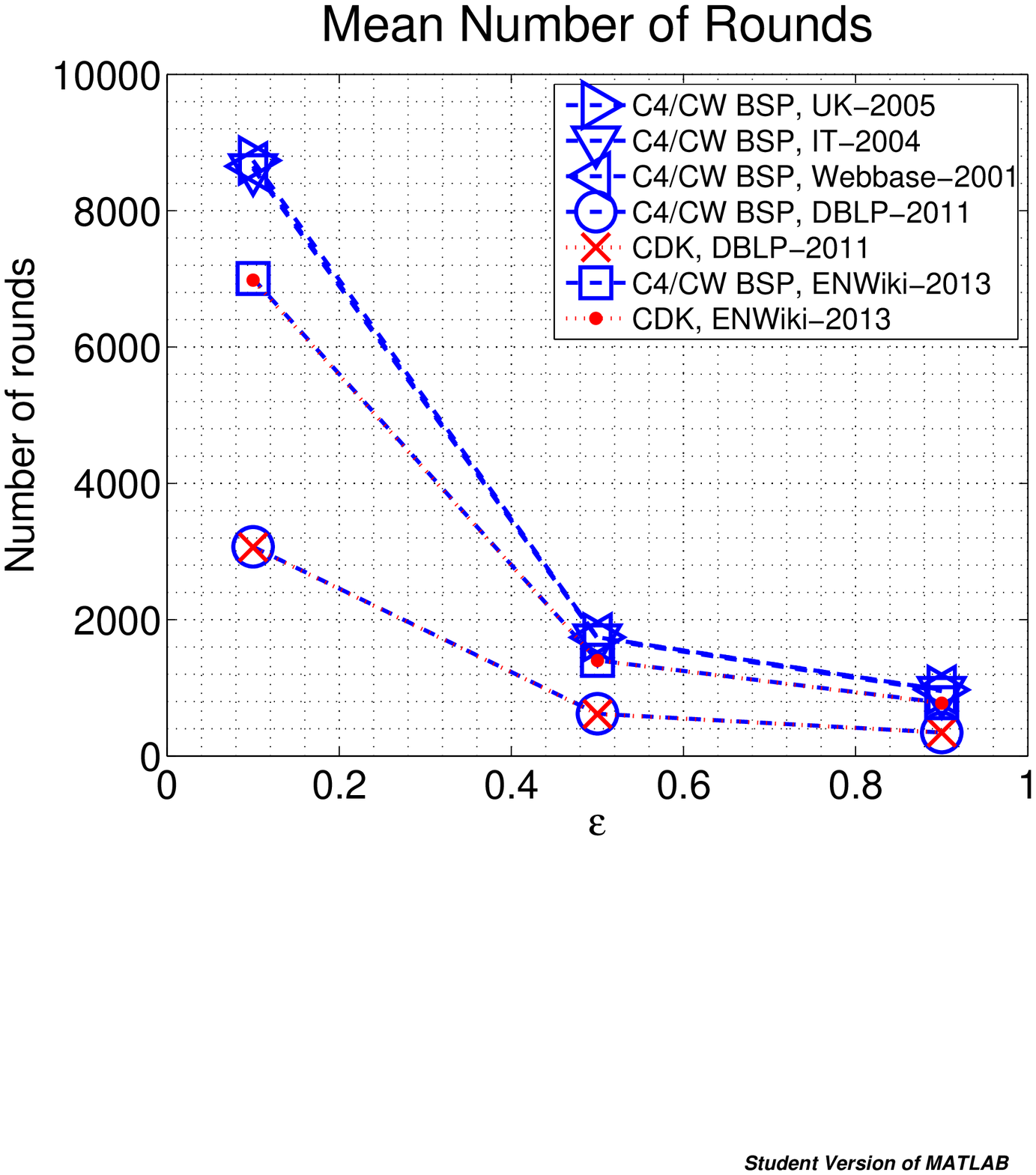}
      \caption{ Mean number of synchronization rounds for BSP algorithms}
      \label{fig:numround}
    \end{subfigure} \\
    \begin{subfigure}[b]{0.49\textwidth}
      \includegraphics[width=185pt]{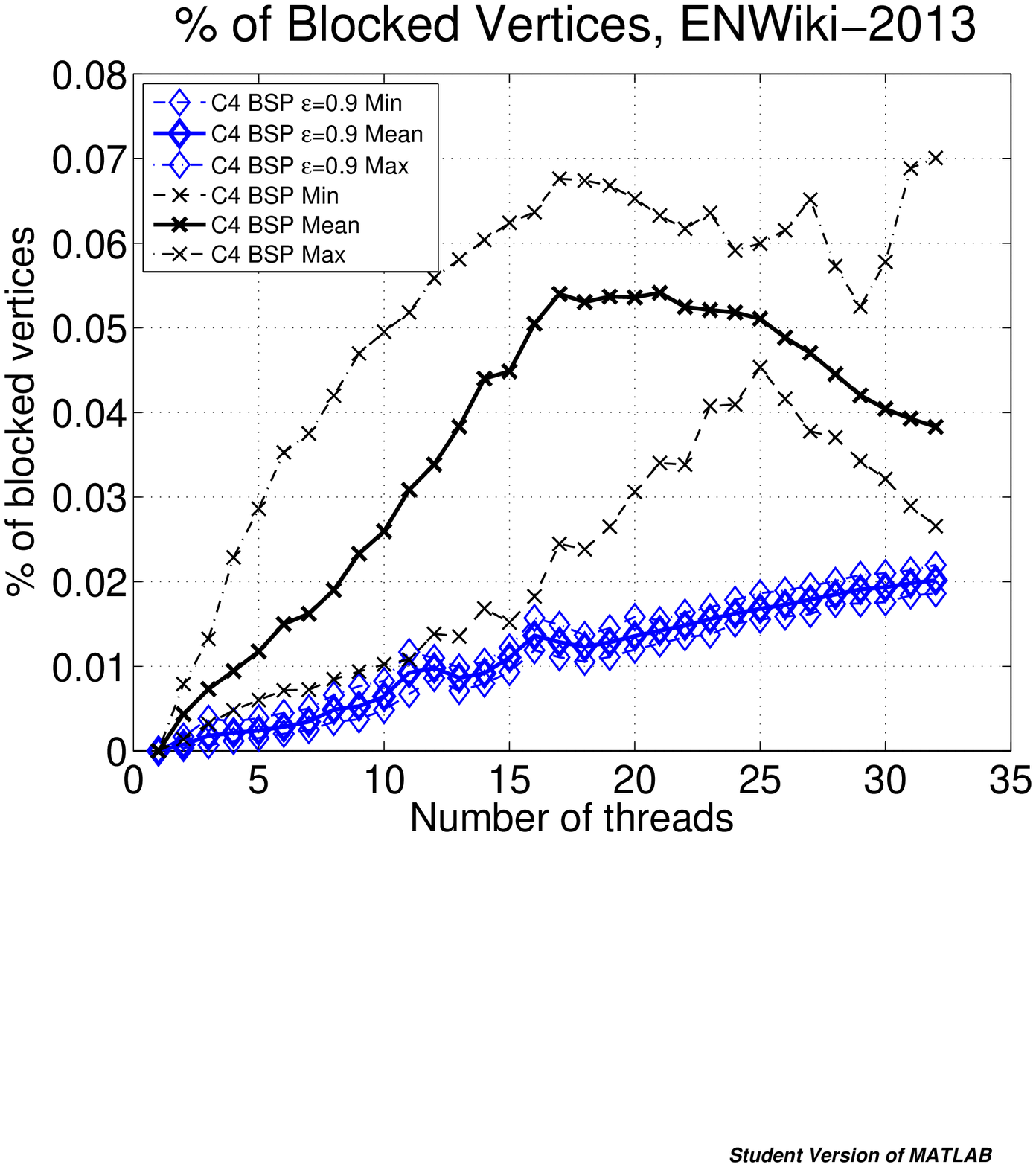}
      \caption{ Percent of blocked vertices for C4, ENWiki-2013. BSP run with $\epsilon=0.9$.}
      \label{fig:blktrans_ew13-09}
    \end{subfigure} &
    \begin{subfigure}[b]{0.49\textwidth}
      \includegraphics[width=185pt]{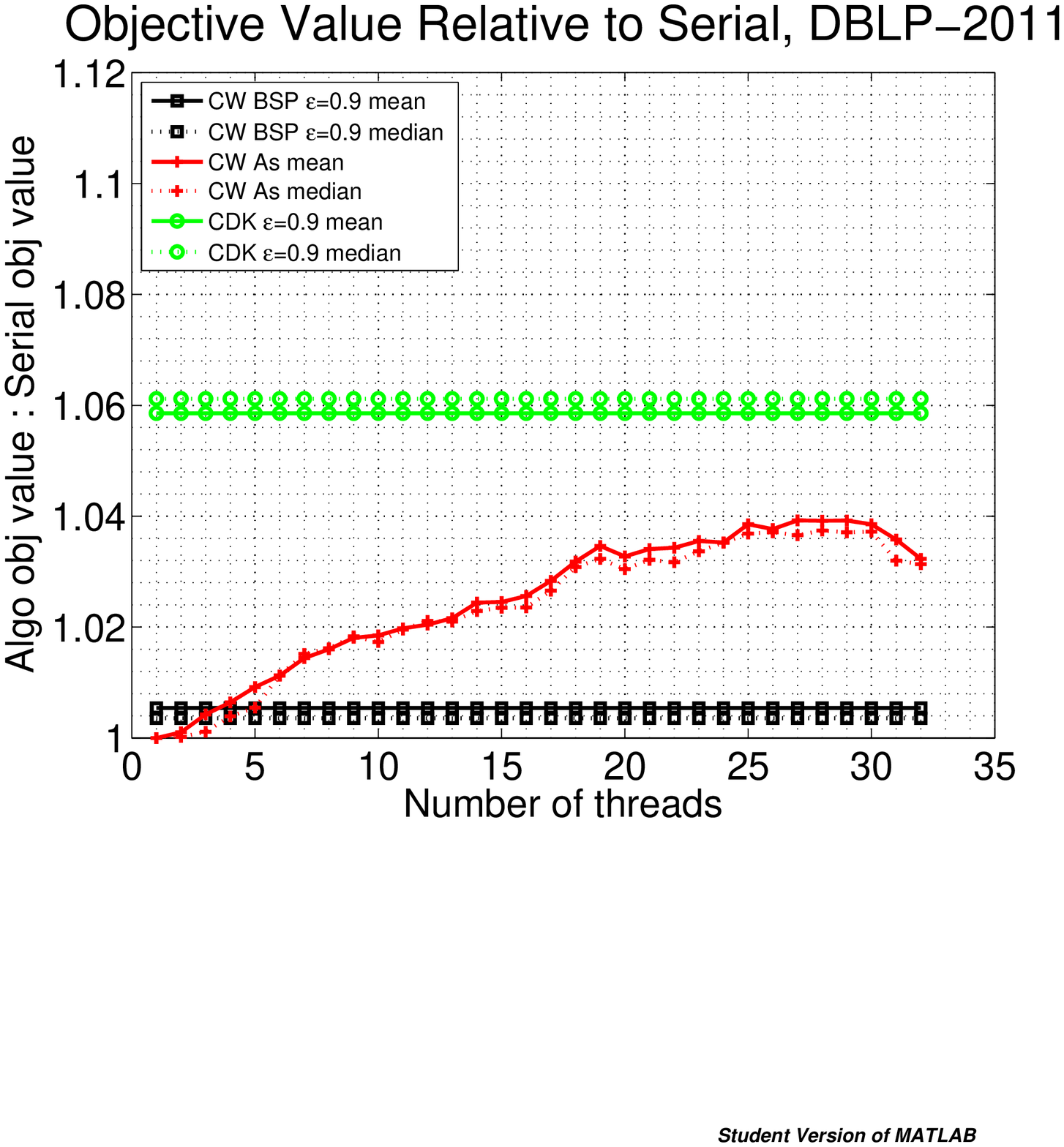}
      \caption{ Median objective values, DBLP-2011. CW BSP and CDK run with $\epsilon = 0.9$}
      \label{fig:objvalue_db11-09}
    \end{subfigure} \\
  \end{tabular}
  \caption{In the above figures, `CW' is short for \CW{}, `BSP' is short for the bulk-synchronous variants of the parallel algorithms, and `As' is short for the asynchronous variants.}
\end{figure}

\subsection{Runtimes}
\CC{} and \CW{} are initially slower than serial, due to the overheads required for atomic operations in the parallel setting.
However, all our parallel algorithms outperform serial \KC{} with 3-4 threads.
As more threads are added, the asychronous variants become faster than their BSP counterparts as there are no synchronization barrriers.
The difference between BSP and asychronous variants is greater for smaller $\epsilon$.
\CW{} is also always faster than \CC{} since there are no coordination overheads.
\subsection{Speedups}
The asynchronous algorithms are able to achieve a speedup of 13-15x on 32 threads.
The BSP algorithms have a poorer speedup ratio, but nevertheless achieve 10x speedup with $\epsilon=0.9$.
\subsection{Synchronization rounds}
The main overhead of the BSP algorithms lies in the need for synchronization rounds.
As $\epsilon$ increases, the amount of synchronization decreases, and with $\epsilon=0.9$, our algorithms have less than 1000 synchronization rounds, which is small considering the size of the graphs and our multicore setting.
\subsection{Blocked vertices}
Additionally, \CC{} incurs an overhead in the number of vertices that are blocked waiting for earlier vertices to complete.
We note that this overhead is extremely small in practice---on all graphs, less than 0.2\% of vertices are blocked.
On the larger and sparser graphs, this drops to less than 0.02\% (i.e., 1 in 5000) of vertices.
\subsection{Objective value}
By design, the \CC{} algorithms also return the same output (and thus objective value) as serial \KC{}.
We find that \CW{} BSP is at most 1\% worse than serial across all graphs and values of $\epsilon$.
The behavior of asynchronous \CW{} worsens as threads are added, reaching 15\% worse than serial for one of the graphs.
Finally, on the smaller graphs we were able to test CDK on, we find that CDK returns a worse median objective value than both \CW{} variants.

\section{Conclusions and Future Directions}
We presented two parallel algorithms for correlation clustering that admit provable nearly linear speedups and approximation ratios.
Our algorithms can cluster billion-edge graphs in under 5 seconds on 32 cores, while achieving a $15\times$ speedup.
The two approaches complement each other: when \CC{} is  fast relative to \CW{}, we may prefer it for its guarantees of accuracy; and when \CW{} is accurate relative to \CC{}, we may prefer it for its speed.

Both \CC{} and \CW{} are well-suited for a distributed setup since they run for at most a polylogarithmic number of rounds.
In the future, we intend to implement our algorithms in a distributed environment, where synchronization and communication often account for the highest cost.

\section*{Acknowledgments}
XP is generously supported by a DSO National Laboratories Postgraduate Scholarship.
DP is generously supported by NSF awards CCF-1217058 and CCF-1116404 and MURI AFOSR grant 556016. 
SO is generously supported by the Simons Institute for the Theory of Computing and NSF award CCF-1217058. 
BR is generously supported by ONR awards N00014-11-1-0723 and N00014-13-1-0129, NSF awards CCF-1148243 and CCF-1217058, AFOSR award FA9550-13-1-0138, and a Sloan Research Fellowship. 
KR is generously supported by NSF award CCF-1116404 and MURI AFOSR grant 556016.
This research is supported in part by NSF CISE Expeditions Award CCF-1139158, LBNL Award 7076018, and DARPA XData Award FA8750-12-2-0331, and gifts from Amazon Web Services, Google, SAP, The Thomas and Stacey Siebel Foundation, Adatao, Adobe, Apple, Inc., Blue Goji, Bosch, C3Energy, Cisco, Cray, Cloudera, EMC2, Ericsson, Facebook, Guavus, HP, Huawei, Informatica, Intel, Microsoft, NetApp, Pivotal, Samsung, Schlumberger, Splunk, Virdata and VMware.

\bibliographystyle{unsrtnat}
{\small
\bibliography{parallelCC}
}
\appendix
\counterwithin{theo}{section}
\counterwithin{lem}{section}
\counterwithin{cor}{section}
\counterwithin{prop}{section}
\appendix
\section{Proofs of Theoretical Guarantees}

\subsection{Number of rounds for \CC{} and \CW{}}

\lemnumrounds*
\begin{proof}
We split our proof in two parts.

For \CW{}, we wish to upper bound the probability
\begin{align*}
q_t = 
\Pr\left\{\text{$v$ not clustered by round $i+t$}\left| \deg_{i+j}(v) \ge \frac{\Delta_i}{2}, 1\le j\le t \right.\right\}.
\end{align*}
Observe that the above event happens either if no neighbors of $v$ become activated by round $i+t$, or if $v$ itself does not become activated.
Hence, $q_t$ can be upper bounded by the probability that no neighbors of $v$ become activated by round $i+t$.

In the following, let $d_{i+j}$ denote the degree of vertex $v$ at roudn $i+j$;
for simplicity we drop the round indices on $n$ and $P$.
The probability, per round, that no neighbors of $v$ become activated is equal to\footnote{This follows from a simple calculation on the pdf of the hypergeometric distribution.}
\begin{align*}
&\frac{{n-d_{i+j}\choose P}}{{n\choose P}}
= \frac{ (n-P)!}{ (n-P-d_{i+j})! }\cdot \frac{(n-d_{i+j})!}{n!}\\
&= \frac{\prod_{t=1}^{d_{i+j}}(n-d_{i+1}+t-P)}{\prod_{t=1}^{d_{i+j}}(n-d_{i+1}+t)}= \prod_{t=1}^{d_{i+j}} \frac{n-d_{i+1}+t-P}{n-d_{i+1}+t}\\
&= \prod_{t=1}^{d_{i+j}}\left(1- \frac{P}{n-d_{i+1}+t}\right)\le \left(1- \frac{P}{n}\right)^{d_{i+j}}\\
&\le \left(1- \frac{\epsilon}{\Delta_i}\right)^{\Delta_i/2}= \left[\left(1- \frac{\epsilon}{\Delta_i}\right)^{\Delta_i/\epsilon}\right]^{\epsilon/2}
\le e^{-\epsilon/2}.
\end{align*}
where the last inequality is due to the fact that 
$$(1-x)^{1/x}<e^{-1}\text{ for all }x\le 1.$$
Therefore, the probability of vertex $v$ failing to be clustered after $t$ rounds is at most
$q_t\le e^{-t\cdot \epsilon /2}.$
Hence, we have that for any round $i$, the probability that any vertex has degree more than $\Delta_i/2$ after $t$ rounds is at most $n\cdot e^{-t\cdot \epsilon /2}$, due to a simple union bound.
If we want that that probability to be smaller than $\delta$, then
\begin{align*}
n\cdot e^{-t\cdot \epsilon /2} &< \delta
\Leftrightarrow \ln n-t\cdot \epsilon /2 < \ln(\delta)
\Leftrightarrow t > \frac{2}{\epsilon}\cdot\ln(n/\delta)
\end{align*}
Hence, with probability $1-\delta$, after $\frac{2}{\epsilon}\cdot\ln(n/\delta)$ rounds either all nodes of degree greater than $\Delta/2$ are clustered, or the maximum degree is decreased by half.
Applying this argument $\log\Delta$ times yields the result, as the maximum degree of the remaining graph becomes $1$.

For \CC{} the proof follows simply from the analogous proof of \cite{blelloch2012greedy}.
Consider any round of the algorithm, and break it into $k$ steps (each step, for each vertex in $\mathcal{A}$ that becomes a cluster center).
Let $v$ be a vertex that has degree at most $\Delta/2$, and is not active.
During step $1$ of round $1$, the probability that $v$ is not adjacent to $\pi(1)$ is at most $1-\frac{\epsilon}{2n}$.
If $v$ is not selected at step $1$, then during step $2$ of round $1$, the probability that $v$ is not adjacent to the next cluster center is again at most $1-\frac{\epsilon}{2n}$.
After processing all vertices in $\mathcal{A}$, during the first round, either $v$ was clustered, or its degree became strictly less than $\Delta/2$, or the probability that neither of the previous happened is at most $(1-\frac{\epsilon}{2n})^{\frac{\epsilon\Delta}{n}}\le 1-\epsilon/2$.
It is easy to see that after $O(\frac{1}{\epsilon}\log n)$ rounds  vertex $v$ will have either been clustered or its degree would be smaller than $\Delta/2$.
Union bounding for $n$ vertices and all rounds, we get that the max degree of the remaining graph gets halved after $O(\frac{1}{\epsilon}\log n)$ rounds, hence the total number of rounds needed is at most $O(\frac{1}{\epsilon}\log n \log \Delta)$, with high probability.
\end{proof}

\subsection{Running times}
In this section, we prove the running time theorem for our Algorithms.
We first present the following recent graph-theoretic result.
\begin{theo}[Theorem 1 in \cite{krivelevich2014phase}]\label{thm:krivelevich}
Let $G$ be an undirected graph on $n$ vertices, with maximum vertex degree $\Delta$.
Let us sample each vertex independently with probability 
$p = \frac{1-\epsilon}{\Delta}$
and define as $G'$ the induced subgraph on the activated vertices.
Then, the largest connected component of the resulting graph $G'$ has size at most 
$O(\frac{1}{\epsilon^2} \log n)$
with high probability.
\label{theo:kriv}
\end{theo}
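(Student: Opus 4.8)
The plan is to analyze this percolation process through a breadth-first exploration of the connected component of a fixed vertex, to stochastically dominate that exploration by a subcritical Galton--Watson branching process with offspring mean $p\Delta = 1-\epsilon < 1$, and then to union-bound over all $n$ possible roots. Throughout we assume $0 < \epsilon < 1$ (otherwise $p \le 0$ and $G'$ is empty).

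First I would fix a vertex $v$ and bound $\Pr\{v \in G' \text{ and } |C_{G'}(v)| \ge k\}$, where $C_{G'}(v)$ is the component of $v$ in $G'$. Reveal the vertex-activation coins in BFS order from $v$: maintain a queue of discovered, activated, not-yet-processed vertices; pop one vertex at a time; and for each of its at most $\Delta$ neighbors that has not yet been discovered, flip a fresh independent $\mathrm{Bernoulli}(p)$ coin to decide whether it is activated and thus enqueued. Letting $B_i$ be the number of newly discovered vertices when the $i$-th vertex is processed, the principle of deferred decisions gives that the $B_i$ are each stochastically dominated by independent $\mathrm{Binomial}(\Delta,p)$ variables with mean $p\Delta = 1-\epsilon$. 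The component has size at least $k$ only if the exploration survives $k-1$ pops, which forces $\sum_{i=1}^{k-1} B_i \ge k-1$, and this sum is dominated by $\mathrm{Binomial}((k-1)\Delta,\,p)$ with mean $(k-1)(1-\epsilon)$.

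Next I would apply a multiplicative Chernoff bound: reaching the value $k-1$ from mean $(k-1)(1-\epsilon)$ is a deviation of $\epsilon(k-1)$, so $\Pr\{\sum_{i=1}^{k-1} B_i \ge k-1\} \le \exp(-c\,\epsilon^2 (k-1))$ for an absolute constant $c > 0$ (working out the Chernoff exponent and using $\epsilon \le 1$ gives, e.g., $c = 1/2$). Hence $\Pr\{v\in G',\ |C_{G'}(v)| \ge k\} \le \Pr\{|C_{G'}(v)| \ge k \mid v \in G'\} \le e^{-c\epsilon^2(k-1)}$. Since any component of size $\ge k$ contains at least one such $v$, a union bound over the $n$ vertices shows the probability that $G'$ has a component of size $\ge k$ is at most $n\,e^{-c\epsilon^2(k-1)}$, which is $o(1)$ once $k \ge 1 + \frac{2}{c\epsilon^2}\ln n = O\!\left(\frac{1}{\epsilon^2}\log n\right)$; this is the claimed bound.

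The main obstacle is making the stochastic-domination step fully rigorous: as the BFS proceeds, some neighbors of a processed vertex may already be discovered (so strictly fewer than $\Delta$ fresh coins are flipped), and one must ensure each activation coin is inspected exactly once and, at that moment, is independent of everything revealed so far. Setting up the exploration with this bookkeeping is precisely what legitimizes the coupling with i.i.d.\ $\mathrm{Binomial}(\Delta,p)$ increments; after that the argument is a routine Chernoff-plus-union-bound computation. It is worth noting that this exploration coupling is what lets the bound reach the critical window $p\Delta < 1$: the cruder first-moment estimate over spanning subtrees loses a factor of $e^{k}$ from tree-counting and only works for $p\Delta < 1/e$.
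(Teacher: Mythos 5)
Your argument is correct: the BFS exploration with deferred decisions dominates each increment by a $\mathrm{Binomial}(\Delta,p)$ draw on fresh coins, the padded coupling legitimately dominates the sum by $\mathrm{Binomial}((k-1)\Delta,p)$ with mean $(1-\epsilon)(k-1)$, and the multiplicative Chernoff bound plus a union bound over the $n$ possible roots gives the $O(\epsilon^{-2}\log n)$ component bound with high probability. Note that the paper does not prove this statement itself --- it imports it verbatim as Theorem 1 of the cited reference \cite{krivelevich2014phase} --- so there is no internal proof to compare against; your write-up is essentially the standard subcritical site-percolation argument used in that reference, and the one genuine subtlety (that the increments $B_i$ are dependent, so marginal domination alone does not bound the sum) is exactly the point you flag and resolve via the coupling.
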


To apply Theorem \ref{thm:krivelevich}, we first need to convert it into a result for sampling without replacement (instead of i.i.d. sampling).
\begin{lem}
Let us define two sequences of binary random variables $\{X_i\}_{i=1}^n, \{Y_i\}_{i=1}^n$.
The first sequence comprises $n$ i.i.d. Bernoulli random variables with probability $p$,
and the second sequence a random subset of $B$ random variables is set to $1$ without replacement, 
where $B$ is integer that satisfies 
$$(n+1)\cdot p-1\le B < (n+1)\cdot p.$$

Let us now define
$\rho_X = \Pr \left( f(X_1,\ldots, X_n) > C \right)$
for some $f$ (in our case this will be the largest connected component of a subgraph defined on the sampled vertices) and some number $C$,
and similarly define $\rho_Y$.
Let us further assume that we have an upper bound on the above probability 
$\rho_X\le \delta$. 
Then, $\rho_Y \le  n\cdot \delta$.
\end{lem}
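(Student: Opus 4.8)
## Proof Proposal

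The plan is to relate the distribution of the without-replacement sequence $\{Y_i\}$ to a mixture of i.i.d. Bernoulli sequences conditioned on their sum, and to exploit the fact that the Bernoulli sum concentrates near its mean $np$, which is within $1$ of $B$. First I would observe that the distribution of $(X_1,\dots,X_n)$ conditioned on $\sum_i X_i = B$ is exactly the uniform distribution over all $B$-subsets, which is precisely the law of $(Y_1,\dots,Y_n)$. Hence $\rho_Y = \Pr(f(X) > C \mid \sum_i X_i = B)$, and by Bayes' rule
\begin{equation}
\rho_Y = \frac{\Pr\bigl(f(X) > C,\ \textstyle\sum_i X_i = B\bigr)}{\Pr\bigl(\textstyle\sum_i X_i = B\bigr)} \le \frac{\rho_X}{\Pr\bigl(\textstyle\sum_i X_i = B\bigr)} \le \frac{\delta}{\Pr\bigl(\textstyle\sum_i X_i = B\bigr)}. \nonumber
\end{equation}
So the whole argument reduces to showing that the anticoncentration bound $\Pr(\sum_i X_i = B) \ge 1/n$ holds, given the choice of $B$.

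The second step is to establish that lower bound on the point mass of a Binomial$(n,p)$ at the value $B$ closest to its mean. The cleanest route is to note that the Binomial pmf is unimodal with mode at $\lfloor (n+1)p \rfloor$, and the condition $(n+1)p - 1 \le B < (n+1)p$ says exactly that $B = \lfloor (n+1)p \rfloor$ (or one of the two modal values when $(n+1)p$ is an integer), i.e.\ $B$ is the mode. A unimodal probability distribution supported on $\{0,1,\dots,n\}$ — that is, on $n+1$ points — must place mass at least $\frac{1}{n+1} > \frac{1}{n}$ on its mode, since otherwise every one of the $n+1$ atoms would have mass less than $\frac{1}{n+1}$ and the total would be strictly less than $1$. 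Plugging $\Pr(\sum_i X_i = B) \ge \frac{1}{n+1}$ into the bound above gives $\rho_Y \le (n+1)\delta$; if one wants the sharper $\rho_Y \le n\delta$ as stated, one uses the strict inequality for large $n$ or simply notes the modal mass exceeds $1/n$ whenever $n \ge 1$ by the same pigeonhole argument applied after discarding the (possibly zero-mass) atom that cannot be the mode, but the essential content is the $(n+1)$-point pigeonhole.

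I would present the unimodality of the Binomial either by citing it as standard or by the one-line ratio computation $\Pr(S=k+1)/\Pr(S=k) = \frac{n-k}{k+1}\cdot\frac{p}{1-p}$, which is decreasing in $k$ and crosses $1$ exactly at $k \approx (n+1)p - 1$, so the mode sits at $B$. The main obstacle — though it is a mild one — is bookkeeping the edge case where $(n+1)p$ is an integer, in which case there are two modal values and one must check $B$ is one of them; the interval condition $(n+1)p - 1 \le B < (n+1)p$ was evidently chosen precisely so that $B$ always lands on a mode, and verifying this is where a little care is needed. Everything else is a short Bayes/pigeonhole argument, and the lemma then feeds directly into applying Theorem~\ref{thm:krivelevich} with $p = \Theta(1/\Delta)$ to the active set $\setA$ of size $B = \epsilon n/\Delta$ sampled without replacement, concluding that its induced subgraph has largest component $O(\log n)$ with high probability.
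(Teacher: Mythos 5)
Your proof is correct and follows essentially the same route as the paper's: both identify the law of $(Y_1,\dots,Y_n)$ with that of $(X_1,\dots,X_n)$ conditioned on $\sum_i X_i = B$ (your Bayes'-rule step is the paper's ``keep only the $b=B$ term'' in the total-probability expansion), and both reduce the claim to lower-bounding the Binomial pmf at its mode $B$. Your pigeonhole bound of $1/(n+1)$ (hence $\rho_Y \le (n+1)\delta$) is actually slightly more careful than the paper, which asserts the modal mass is at least $1/n$ over the $(n+1)$-point support without justification; the discrepancy is immaterial to how the lemma is used downstream.
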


\begin{proof}
By expanding $\rho_X$ using law of total probability we have
\begin{align}
\rho_X& =\sum_{b=0}^n \Pr \left( f(X_1,\ldots, X_n ) > C \left| \sum_{i=1}^n X_i = b\right.\right) \cdot \Pr\left( \sum_{i=1}^n X_i = b\right)\nonumber\\
&=\sum_{b=0}^n q_b \cdot \Pr\left( \sum_{i=1}^n X_i = b\right)
\end{align}
where $q_b$ is the probability that $f(X_1,\ldots, X_n ) > C$ given that a uniformly random subset of $b$ variables was set to $1$.
Moreover, we have 
\begin{align}
\rho_Y& =\sum_{b=0}^{n} \Pr \left( f(Y_1,\ldots, Y_n) > C \left| \sum_{i=1}^n Y_i = b\right.\right) \cdot \Pr\left( \sum_{i=1}^n Y_i =b\right )\nonumber\\
&\overset{(i)}{=}\sum_{b=0}^{n} q_b \cdot \Pr\left( \sum_{i=1}^n Y_i =b\right )\nonumber\\
&\overset{(ii)}{=} q_{B} \cdot 1
\end{align}
where $(i)$ comes form the fact that $\Pr \left( f(Y_1,\ldots, Y_n) > C \left| \sum_{i=1}^n Y_i = b\right.\right) $ is the same as the probability that that $f(X_1,\ldots, X_n ) > C$ given that a uniformly random subset of $b$ variables where set to $1$, and 
$(ii)$ comes from the fact that since we sample without replacement in $Y$,we have that $\sum_{i}^n Y_i = B $ always.

If we just keep the $b=B$ term in the expansion of $\rho_X$ we get
\begin{equation}
\rho_X=\sum_{b=0}^n q_b \cdot \Pr\left( \sum_{i=1}^n X_i = b\right)\ge q_{B} \cdot \Pr\left( \sum_{i=1}^n X_i = B\right) = \rho_Y \cdot \Pr\left( \sum_{i=1}^n X_i = B\right)
\label{eq:rho_X_LB}
\end{equation}
since all terms in the sum are non-negative numbers.
Moreover, since $X_i$s are Bernoulli random variables, 
then $\sum_{i=1}^n X_i$ is Binomially distributed with parameters $n$ and $p$.
We know that the maximum of the Binomial pmf with parameters $n$ and $p$ occurs at $\Pr\left(\sum_i X_i = B\right)$ where $B$ is the integer that satisfies $(n+1)\cdot p-1\le B < (n+1)\cdot p$.
Furthermore we know that the maximum value of the Binomial pmf cannot be less than $\frac{1}{n}$, that is
\begin{equation}
\Pr\left( \sum_{i=1}^n X_i =B\right) \ge \frac{1}{n}.
\label{eq:maxBinomial}
\end{equation}
If we combine \eqref{eq:rho_X_LB} and \eqref{eq:maxBinomial} we get
$\rho_X\ge \rho_Y / n \Leftrightarrow \rho_Y \le  n\cdot \delta$. 
\end{proof}

\begin{cor}\label{cor:kriv2}
Let $G$ be an undirected graph on $n$ vertices, with maximum vertex $\Delta$.
Let us sample $\epsilon \cdot \frac{n}{\Delta}$ vertices without replacement,
and define as $G'$ the induced subgraph on the activated vertices.
Then, the largest connected component of the resulting graph $G'$ has size at most 
$$O\left(\frac{\log n}{\epsilon^2}\right) $$
with high probability.
\end{cor}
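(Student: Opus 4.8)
The plan is to obtain this as an immediate consequence of Theorem~\ref{thm:krivelevich} combined with the preceding lemma, which converts a tail bound for i.i.d.\ vertex sampling into one for sampling without replacement. We may assume $\epsilon\in(0,1)$ and $\Delta=o(n)$; otherwise $\epsilon n/\Delta=O(1)$, the sampled set has constant size, and the claim is trivial.

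First I would match the two sampling models. The without-replacement sample of $B=\epsilon n/\Delta$ vertices is exactly the ``$Y$''-process of the preceding lemma, and I would pick the i.i.d.\ Bernoulli rate $p$ so that $B$ falls in the admissible window $(n+1)p-1\le B<(n+1)p$; this forces $p=\frac{\epsilon}{\Delta}\bigl(1+O(1/n)\bigr)<\frac1\Delta$. Writing $p=(1-\epsilon')/\Delta$ gives $\epsilon'=1-\epsilon-O(\Delta/n)$, a positive constant, so Theorem~\ref{thm:krivelevich} applies to the i.i.d.\ sample and yields that its induced subgraph has largest connected component of size at most $C=O(\log n/\epsilon'^2)=O(\log n/\epsilon^2)$, the last equality absorbing the $\epsilon$-dependent constant since $\epsilon$ is fixed, with probability at least $1-n^{-c}$ for any desired constant $c$. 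Next I would apply the preceding lemma with $f$ taken to be the size of the largest connected component of the induced subgraph and with this $C$: since the i.i.d.\ event $\{f>C\}$ has probability at most $n^{-c}$, the corresponding without-replacement event has probability at most $n\cdot n^{-c}=n^{-(c-1)}$. Choosing $c>1$, which we may since Theorem~\ref{thm:krivelevich} holds w.h.p.\ with arbitrary polynomial decay, makes this $o(1)$, which is exactly the assertion of the corollary.

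The argument is essentially bookkeeping, so there is no serious obstacle; the two points requiring a little care are (i) checking that the deterministic batch size $\epsilon n/\Delta$ can be realized as the window integer $B$ of some admissible rate $p\approx\epsilon/\Delta$, which is handled by the $O(1/n)$ slack above, and (ii) confirming that the factor-$n$ blow-up of the failure probability incurred by the reduction is harmless, which it is precisely because Krivelevich's bound comes with polynomially small failure probability of arbitrary degree. A purely cosmetic point is the relabeling between the rate $\epsilon/\Delta$ used here and the rate $(1-\epsilon)/\Delta$ in Theorem~\ref{thm:krivelevich}; with $\epsilon$ a fixed constant, the induced change of constants is swallowed by the $O(\cdot)$.
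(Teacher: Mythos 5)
Your proposal is correct and is exactly the argument the paper intends: the corollary is obtained by applying the without-replacement conversion lemma (with $f$ the size of the largest connected component) to Theorem~\ref{thm:krivelevich}, absorbing the factor-$n$ loss in the failure probability into the polynomially small error of Krivelevich's bound. The paper does not spell these steps out, but your bookkeeping—matching the batch size to the admissible Bernoulli rate and relabeling the $\epsilon$'s—is the same route.
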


We use this in the proof of our theorem that follows.

\thmrunningtime*
\begin{proof}
We start with analyzing \CC{}, as the running time of \CW{} follows from a similar, and simpler analysis. 
Observe, that we operate on Bulk Synchronous Parallel model: we sample a batch of vertices, $P$ cores asynchronously process the vertices in the batch, and once the batch is empty there is a bulk synchronization step.
The computational effort spent by \CC{} can be split in three parts: i) computing the maximum degree, ii) creating the clusters, per batch, iii) syncronizing at the end of each batch.

\paragraph{Computing $\Delta$ and synchronizing cost} Computing $\Delta_i$ at the beginning of each batch, can be implemented in time $\frac{m_i}{P}+\log P$, where each thread picks $n_i/P$ vertices and computes locally their degrees, and inserts it to a sorted data structure ({\it e.g.}, a B-tree that admits parallel operations), and then we get the largest item in logarithmic time.
Moreover, the third part of the computation, i.e., synchronization among cores, can be done in $O(P)$.
A little more involved argument is needed for establishing the running time of the second part, where the algorithms create the clusters.

\paragraph{Clustering cost}  For a single vertex $v$ sampled by a thread, the time required by the thread to process that vertex is the sum of the time needed to 
1) wait  inside the \text{attemptCluster} for preceding neighbors (by the order of $\pi$),
2) ``send" its $\pi(v)$ to its neighbors, if $v$ is a cluster center,
3) if $v$ is a cluster center, then for each $u$ neighbors it will attempt to update $\text{clusterID}(u)$; however, this thread  potentially competes with other threads that are attempting to write in $\text{clusterID}(u)$ at the same time.

Using Corollary \ref{cor:kriv2}, we can show that no more than $O(\log n)$ threads compete with each other at the same time, with high probability.
Observe, that in our sampling scheme of batches of vertices, we are taking the first $B_i= \frac{\epsilon}{\Delta_i}\cdot n_i$ elements of a random prefix $\pi$.
This is equivalent to sampling $B_i$ vertices without replacement from the graph $G_i$ of the current round.
The result in Corollary~\ref{cor:kriv2}, asserts that the largest connected component in the sampled subgraph is at most $O(\log n)$, with high probability. 
This directly implies that
a thread cannot be waiting for more than $O(\log n)$ other threads inside \texttt{attemptCluster}($v$).
Therefore, the time spent by each thread to wait on other threads in \texttt{attemptCluster}($v$) is upper bounded by the number of maximum threads that it can be neighbors with (which assuming $\epsilon$ is set to $1/2$) is at most $O(\log n)$, times the time it takes each of these threads to be done with their execution, which is at most $\Delta_i \log n $ (even assuming the worst case conflict pattern when updating at most $\Delta_i$ entries in the clusterID array).
Hence, for \CC{} the processing time of a single vertex is upper bounded by $O(\Delta_i \cdot \log^2n)$.

\paragraph{Job allocation}Now, observe that when each thread is done processing  vertex, it picks the next vertex from $\setA$ (if $\setA$ is not empty). 
This process essentially models a classical greedy task allocation to cores, that leads to a $2$ approximation in terms of the optimum weight allocation; here the optimum allocation leads to a max weight among cores that is at most equal to 
$\max(\Delta_i, B_i\Delta_i/P)$.
This implies that the running time on $P$ asynchronous threads of a single batch, is upperbounded by
$$O\left(\max\left(\Delta_i\log n, \frac{B_i\Delta_i\log^2 n}{P}\right)\right)=O\left(\max\left(\Delta_i\log n, \frac{n_i\log^2 n}{P}\right)\right).$$
Assuming, that the number of cores, is always less than the batch size (a reasonable assumption, as more cores, would not lead to further benefits), we obtain that the time for a single batch is 
$$O\left(\frac{E_i}{P}+\frac{n_i\log^2 n}{P}+P\right).$$

Observe that a difference in \CW{}, is that waiting is avoided, hence, the running time, per batch of \CW{} is
$$O\left(\frac{E_i}{P}+\frac{n_i}{P}+P\right).$$
Multiplying the above, with the number of rounds given by Lemma \ref{lem:rounds}, we obtain the theorem.

\end{proof}

\subsection{Approximation Guarantees}
One can view the execution of ClusterWild! on $G$ as having \KC{} run on a ``noisy version" of $G$.
A main issue is that \KC{} never allows two neighbors in the original graph to become cluster centers.
Hence, since ClusterWild! ignores these edges among active vertices, one can view these edges as ``adverserially" deleted.
The major technical contribution of this work is to quantify how these ``ignored" edges affect the quality of the output solution.
The following simple lemma presented in our main text, is useful in quantifying the cost of the output clustering for {\it any} peeling algorithm.
\lemalgocost*
\begin{proof}
Consider the first step of the algorithm, for simplicity, and without loss of generality.
Let us define as $T_{\text{in}}$ the  number of vertex pairs inside $\mathcal{C}_v$ that are not neighbors (i.e., they are joined by a negative edge).
Moreover, let $T_{\text{out}}$ denote the number of vertices outside $\mathcal{C}_v$ that are neighbors with vertices inside $\mathcal{C}_v$.
Then, the number of disagreements (i..e, number of misplaced pairs of vertices) generated by cluster $\mathcal{C}_v$, is equal to $T_{\text{in}}+T_{\text{out}}$.

Observe that all the $T_{\text{in}}$ edges are negative, and all $T_{\text{out}}$ are positive ones.
Let for example $(u,w)$ be one of the $T_{\text{in}}$ negative edges inside $\Cv$, hence both $u,w$ belong to $\Cv$ (i.e., are neighbors with $v$).
Then, $(u,v,w)$ forms a {\it bad triangle}.
Similarly, for every edge that is incident to a vertex in $\Cv$, with one end point say $u'\in \Cv$ and one $w'\in V\backslash v$, the triangle formed by $(v,u',w')$, is also a bad triangle.

Hence, all edges that are accounted for in the final cost of the algorithm (i..e, total number of disagreements) are equal to the $T_{\text{in}}+T_{\text{out}}$ bad triangles that include these edges and each cluster center per round.
\end{proof}

Let us now consider the set of all cluster centers generated by ClusterWild!; call these vertices $\mathcal{C}_{\text{CW}}$.
Then, consider the graph $G'$ that is generated by deleting all edges between $\mathcal{C}_{\text{CW}}$.
Observe that this is a random graph, since the set of edges deleted depends on the specific random sampling that is performed in ClusterWild!.
We will use the following simple technical proposition to quantify how many more bad triangles $G'$ has compared to $G$.

\begin{prop}
Given any graph $G$ with positive and negative edges, then let us obtain a graph $G_e$ where we have removed a single edge, $e$ from $G$.
Then, the $G_e$ has at most $\Delta$ more bad triangles compared to $G$.
\end{prop}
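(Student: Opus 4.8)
The plan is to analyze which bad triangles are gained or lost when we delete a single edge $e = (a,b)$ from $G$. A bad triangle is determined by its three vertices together with the sign pattern on its edges, so the \emph{set} of bad triangles can only change for triangles that contain $e$ as one of their three edges — any triple not using $e$ sees exactly the same edges and signs in $G_e$ as in $G$. Hence the first step is to reduce the count to triples of the form $\{a,b,w\}$ for $w \notin \{a,b\}$, and bound how many such triples are bad in $G_e$ but not in $G$.

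Next I would split on the sign of $e$ in $G$. Since $G$ is complete and signed, deleting $e$ is really a sign-flip-or-removal operation; in either reading, a triple $\{a,b,w\}$ changes its "bad" status only if the edge $e$ was one of the edges whose sign mattered. I would observe that for each such $w$, the triple can become newly bad in at most the obvious way, so the number of \emph{new} bad triangles is at most the number of choices of the third vertex $w$ that is adjacent (in the relevant sense) to one of $a$ or $b$. That count is bounded by $\max(d_a, d_b) \le \Delta$, the positive maximum degree — this is where the hypothesis about $\Delta$ being the positive max degree enters. So the proof is essentially: (i) only triangles through $e$ are affected; (ii) among those, a new bad triangle requires the third vertex to lie in a positive neighborhood of an endpoint of $e$; (iii) that neighborhood has size at most $\Delta$.

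The main obstacle — really the only subtlety — is being careful about what "removing an edge from a complete signed graph" means for the purpose of counting bad triangles, and making sure the case analysis on the sign of $e$ and on the signs of the other two edges of the triangle is genuinely exhaustive, so that the bound $\Delta$ (rather than $2\Delta$ or $n$) is justified. I expect the cleanest route is to fix $w$ and check directly: the status of $\{a,b,w\}$ as a bad triangle is a function of the three signs $(\sigma_{ab}, \sigma_{aw}, \sigma_{bw})$, and changing only $\sigma_{ab}$ flips "badness" only when exactly one of $\sigma_{aw}, \sigma_{bw}$ is negative, i.e. when $w$ is a positive neighbor of exactly one of $a,b$; the number of such $w$ is at most $d_a + d_b$ minus the common neighbors, which one then bounds by $\Delta$ after noting the triangle could only have been bad-creating in one direction. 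I would then conclude that $G_e$ has at most $\Delta$ more bad triangles than $G$, which is the statement. This proposition is then applied edge-by-edge to the set of deleted edges in the analysis of \CW{}, feeding into the bound on $\mathbb{E}\{\tau_{\text{new}}\}$ already sketched above.
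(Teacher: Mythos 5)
Your overall plan is the same as the paper's: only triples containing $e$ can change status, so it suffices to count third vertices $w$ for which $\{a,b,w\}$ is bad in $G_e$ but not in $G$, and to bound that count by a degree. That reduction is correct. The problem is in your final case analysis, which you yourself flag as ``the only subtlety'' and which is exactly where the argument goes wrong. You claim that flipping $\sigma_{ab}$ changes badness \emph{only} when exactly one of $\sigma_{aw},\sigma_{bw}$ is negative. That is false: badness means exactly two positive edges, so flipping $\sigma_{ab}$ also changes badness when \emph{both} $\sigma_{aw}$ and $\sigma_{bw}$ are positive (the count of positive edges moves between $3$ and $2$). Worse, the case you isolated is the wrong direction: when a positive edge $(a,b)$ is deleted (i.e., turned negative), a triple with exactly one of $\sigma_{aw},\sigma_{bw}$ positive goes from bad to not bad --- these are \emph{destroyed} bad triangles, not new ones. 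The triangles that become \emph{newly} bad in $G_e$ are precisely those with $w\in N(a)\cap N(b)$: the all-positive triangle through $e$ acquires one negative edge and becomes bad. Your proposed count, $d_a+d_b$ minus the common neighbors, is the size of $N(a)\cup N(b)$, which can be as large as $2\Delta$, and the remark that ``the triangle could only have been bad-creating in one direction'' does not repair this, because the set of $w$ adjacent to exactly one of $a,b$ genuinely can have size $2\Delta$.

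The fix is short: the number of newly bad triangles is $|N(a)\cap N(b)|\le\min(d_a,d_b)\le\Delta$, which gives the stated bound exactly (if $e$ is a negative/absent edge, nothing changes, so only positive $e$ matters --- consistent with how the proposition is applied to edges of $E_i^+$). For what it is worth, the paper's own proof is also loose here: it bounds the count by $|N(i)\cup N(j)|\le 2\Delta$ and then asserts the conclusion ``at most $\Delta$''; the intersection argument above is what actually delivers $\Delta$ rather than $2\Delta$. So your instinct to do a sign-by-sign check of $(\sigma_{ab},\sigma_{aw},\sigma_{bw})$ was the right one --- you just need to carry it out correctly and keep creations separate from destructions.
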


\begin{proof}
Let $(i,j,k)$ be a bad triangle in $G$ but not in $G_e$.
Then it must be the case that $e \in t$.
WLOG let $e = (i,j)$, and so $k \in N(i) \cup N(j)$.
Since $|N(i) \cup N(j)| \leq 2\max(\deg_i, \deg_j) \leq 2\Delta$, there can be at most $\Delta$ new bad triangles in $G_e$.
\end{proof}

The above proposition is used to establish the $\tau_{\text{new}}$ bound for Lemma~\ref{lem:cwcost}.
Now, assume a random permutation $\pi$ for which we run \CW{}, and let $\hat{\mathcal{A}}=\cup_{r=1}^R\mathcal{A}_r$ denote the union of all active sets of vertices, for each round $r$ of the algorithm.
Moreover, let $\hat G$, denote the graph that is missing all edges between the vertices in the sets $\mathcal{A}_r$.
A simple way to bound the clustering error of \CW{}, is splitting it in to two terms: the number of old bad triangles of $G$ adjacent to active vertices (i.e., we need to bound the expectation of the event that an active vertex is adjacent to an ``old" triangle), plus the number of {\it all} new triangles induced by ignoring edges.
Observe that this bound can be loose, since not all ``new" bad triangles of $\hat G$ count towards the clustering error, and some ``old" bad triangles can disappear.
However, this makes the analysis tractable.
Lemma \ref{lem:cwcost} then follows.

\lemcwcost*

\section{Implementation Details}
Our implementation is highly optimized in our effort to have practically scalable algorithms.
We discuss these details in this section.

\subsection{Atomic and non-atomic variables in Java/Scala}
In Java/Scala, processors maintain their own local cache of variable values, which could lead to spinlocks in \CC{} or greater errors in \CW{}.
It is necessary to enforce a consistent view across all processors by the use of synchronization or AtomicReferences, but doing so will incur high overheads that render the algorithm not scalable.

To mitigate this overhead, we exploit a monoticity property of our algorithms---the clusterID of any vertex is a non-increasing value.
Thus, many of the checks in \CC{} and \CW{} may be sufficiently performed using only an outdated version of clusterID.
Hence, we may maintain both an inconsistent but cheap clusterID array as well as an expensive but consistent atomic clusterID array.
Most reads can be done using the cheap inconsistent array, but writes must propagate to the consistent atomic array.
Since each clusterID is written a few times but read often, this allows us to minimize the cost of synchronizing values without any substantial changes to the algorithm itself.

We point out that the same concepts may be applied in a distributed setting to minimize communication costs.

\subsection{Estimating but not computing $\Delta$}
As written, the BSP variants require a computation of the maximum degree $\Delta$ at each round.
Since this effectively involves a scan of all the edges, it can be an expensive operation to perform at each iteration.
We instead use a proxy $\hat\Delta$ which is initialized to $\Delta$ in the first round, and halved every $\frac{2}{\epsilon}\ln(n\log\Delta/\delta)$ rounds.

With a simple modification to Lemma \ref{lem:numrounds}, we can see that w.h.p. any vertex with degree greater than $\hat\Delta$ will either be clustered or have its degree halved after $\frac{2}{\epsilon}\ln(n\log\Delta/\delta)$ rounds, so $\hat\Delta$ upper-bounds $\Delta$ and our algorithms complete in logarithmic number of rounds.

\subsection{Lazy deletion of vertices and edges}
In practice, we do not remove vertices and edges as they are clustered, but simply skip over them when they are encountered later in the process.
We find that this approach decreases the runtimes and overall complexity of the algorithm.
(In particular, edges between vertices adjacent to cluster centers may never be touched in the lazy deletion scheme, but must nevertheless be removed in the proactive deletion approach.)
Lazy deletions also allow us to avoid expensive mutations of internal data structures.

\subsection{Binomial sampling instead of fixed-size batches}
Lazy deletion does introduce an extra complication, namely it is now more difficult to sample a fixed-size batch of $\epsilon n_i  / \Delta_i$ vertices, where $n_i$ is the number of remaining unclustered vertices.
This is because we do not maintain a separate set of $n_i$ unclustered vertices, nor explicitly compute the value of $n_i$.

We do, however, maintain a set of \emph{unprocessed} vertices, that is, a suffix of $\pi$ containing $n_i$ unclustered vertices and $m_i$ clustered vertices that have not been passed through by the algorithm.
We may therefore resort to an i.i.d. sampling of these vertices, choosing each with probability $\epsilon / \Delta_i$.
Since processing an unprocessed but clustered vertex has no effect, we effectively simulate an i.i.d. sampling of the $n_i$ unclustered vertices.

Furthermore, we do not have to actually sample each vertex---because $\pi$ is a uniform random permutation, it suffices to draw $B \sim Bin(n_i+m_i, \epsilon/\Delta_i)$ and extract the next $B$ elements from $\pi$ for processing, reducing the number of random draws from $n_i+m_i$ Bernoullis to a single Binomial.

All of our theorems hold in expectation when using i.i.d. sampling instead of fixed-size batches.

\subsection{Comment on CDK Implementation}
A crucial difference between the CDK algorithm and our algorithms lies in the fact that CDK might reject vertices from the active set, which are then placed back into the set of unclustered vertices for potential selection at later rounds.
Conversely, our algorithms ensure that the active set is always completely processed, so any vertex that has been selected will no longer be selected in an active set again.
We are therefore able to exploit a single random permutation $\pi$ and use the tricks with lazy deletions and binomial sampling that are not available to CDK, which instead has to perform the complete i.i.d. sampling.
We believe that this accounts for the largest difference in runtimes between CDK and our algorithms.

\section{Full experiment results}
\label{app:exptresults}

\begin{figure}[ht]
  \centering
  \begin{tabular}{ccc}

    \begin{subfigure}[b]{0.31\textwidth}
      \includegraphics[width=120pt]{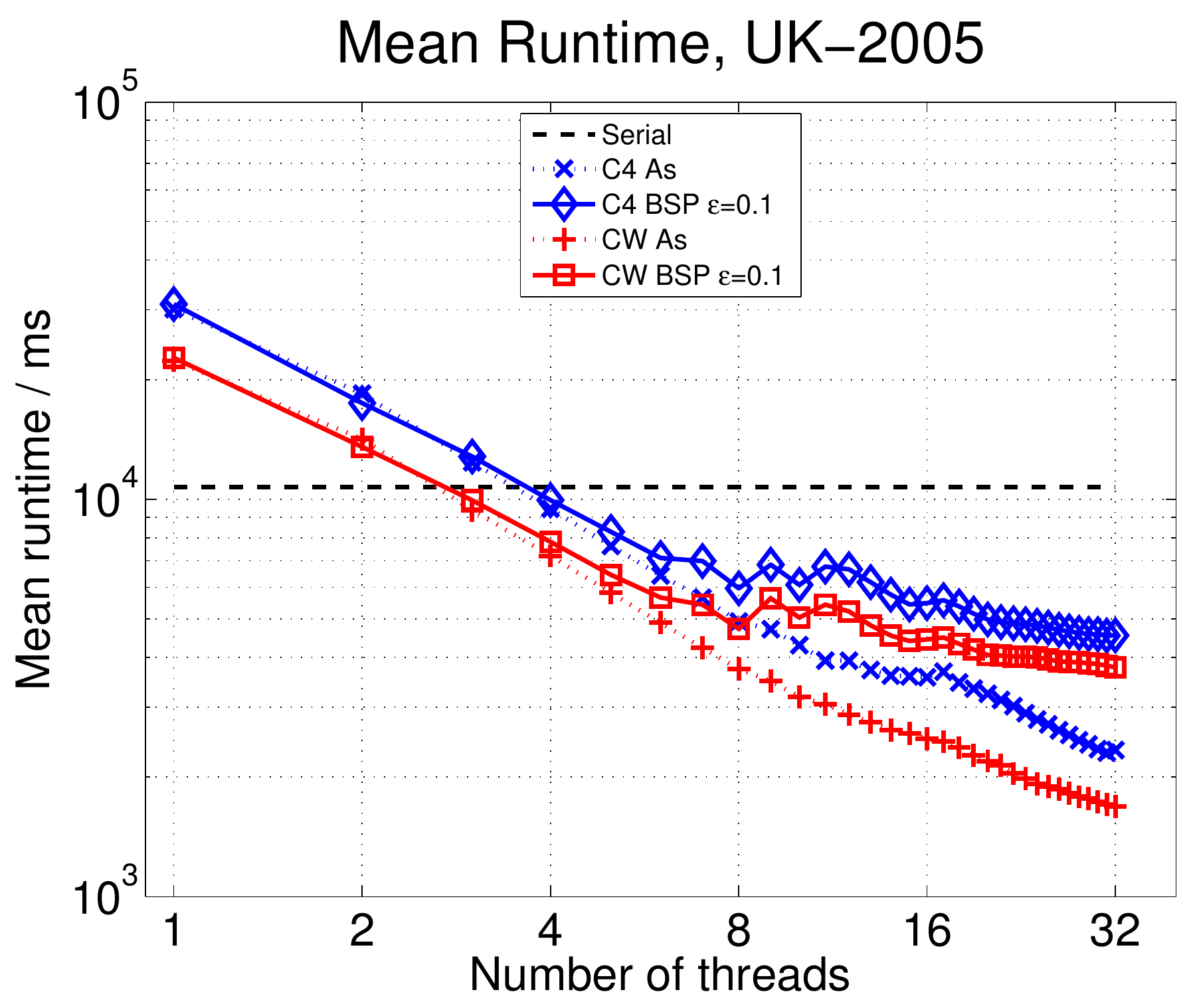}
      \caption{UK-2005, $\epsilon = 0.1$}
      \label{appfig:runtimes_uk05_ave_01}
    \end{subfigure} &
    \begin{subfigure}[b]{0.31\textwidth}
      \includegraphics[width=120pt]{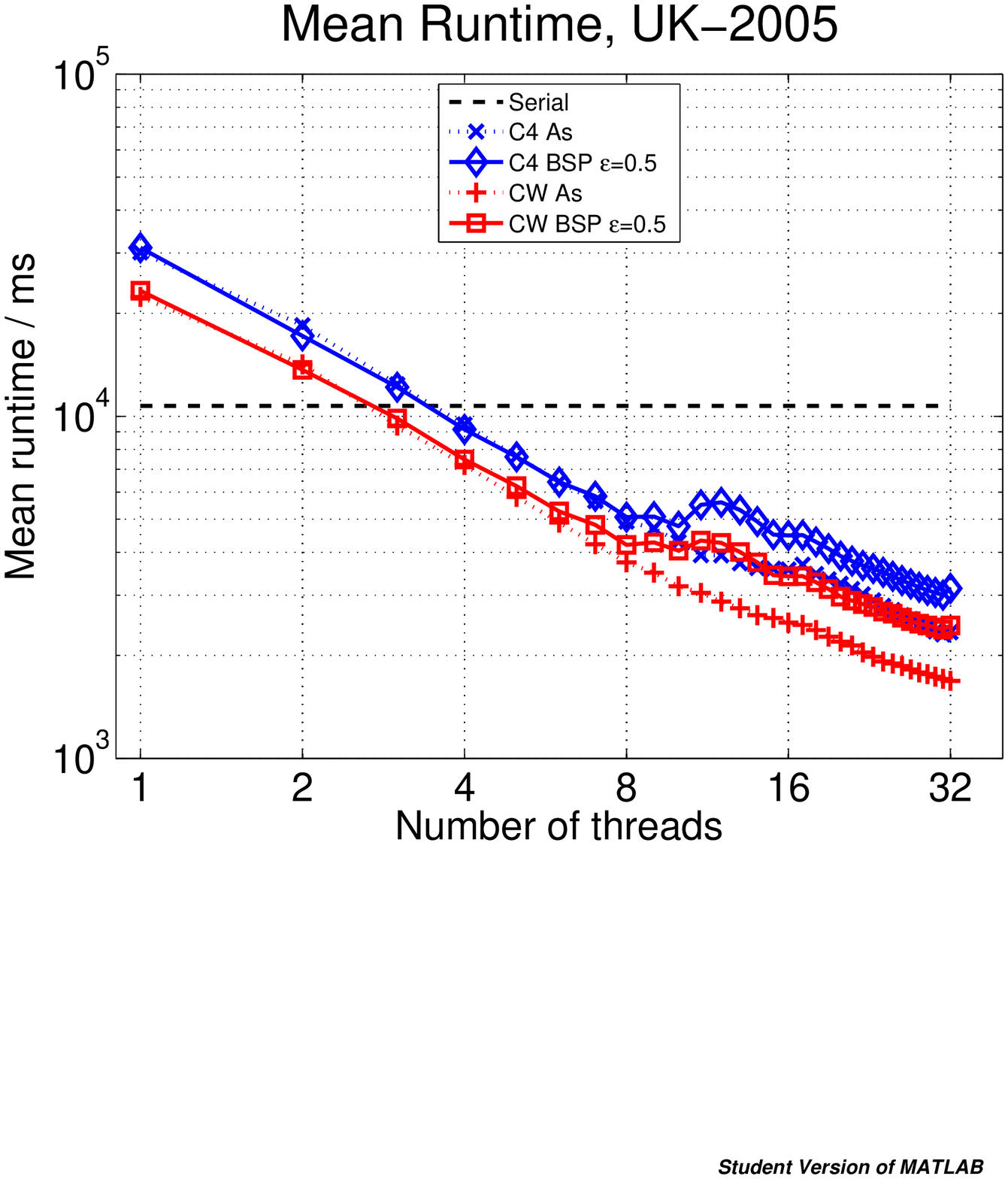}
      \caption{UK-2005, $\epsilon = 0.5$}
      \label{appfig:runtimes_uk05_ave_05}
    \end{subfigure} &
    \begin{subfigure}[b]{0.31\textwidth}
      \includegraphics[width=120pt]{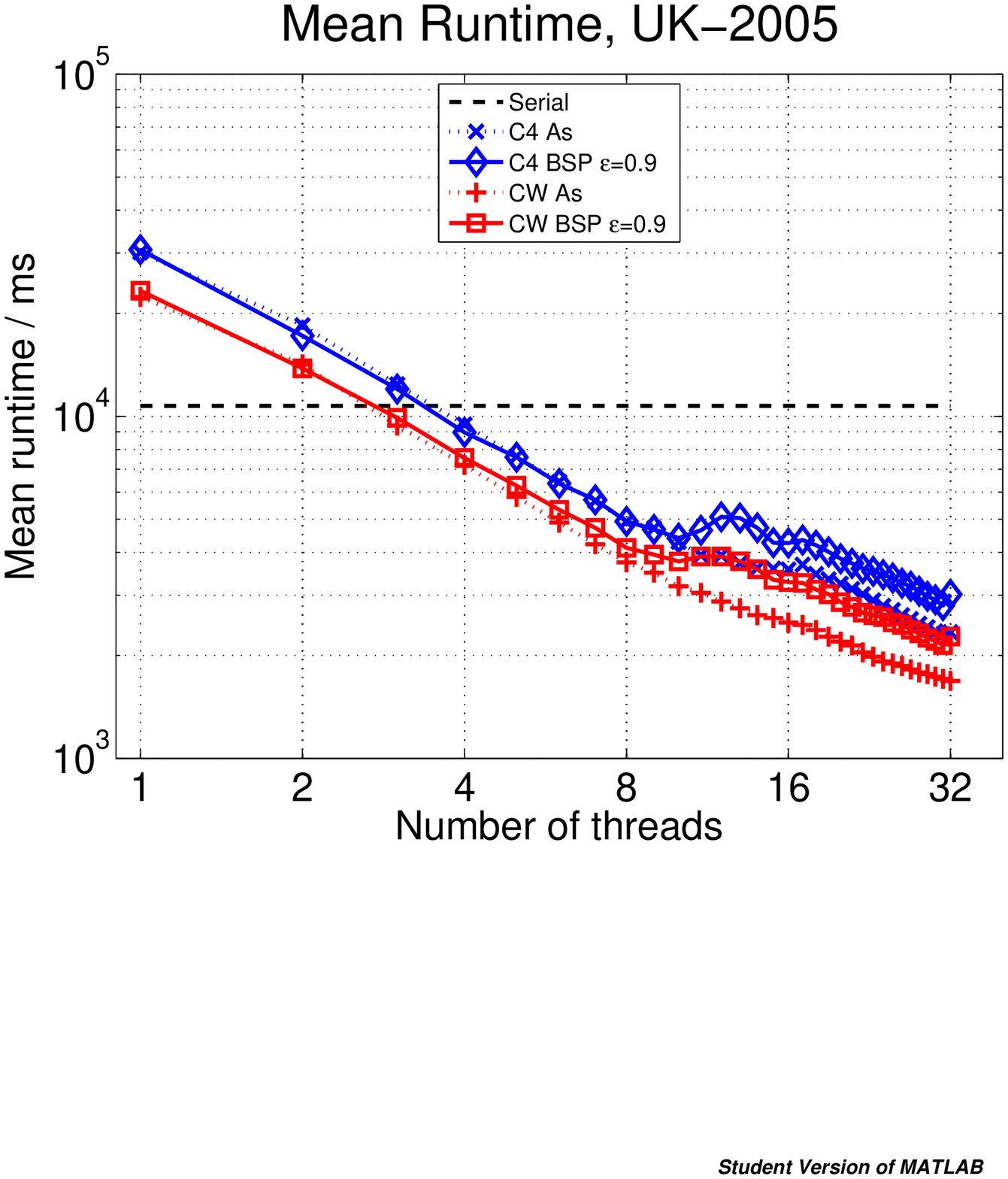}
      \caption{UK-2005, $\epsilon = 0.9$}
      \label{appfig:runtimes_uk05_ave_09}
    \end{subfigure} \\

    \begin{subfigure}[b]{0.31\textwidth}
      \includegraphics[width=120pt]{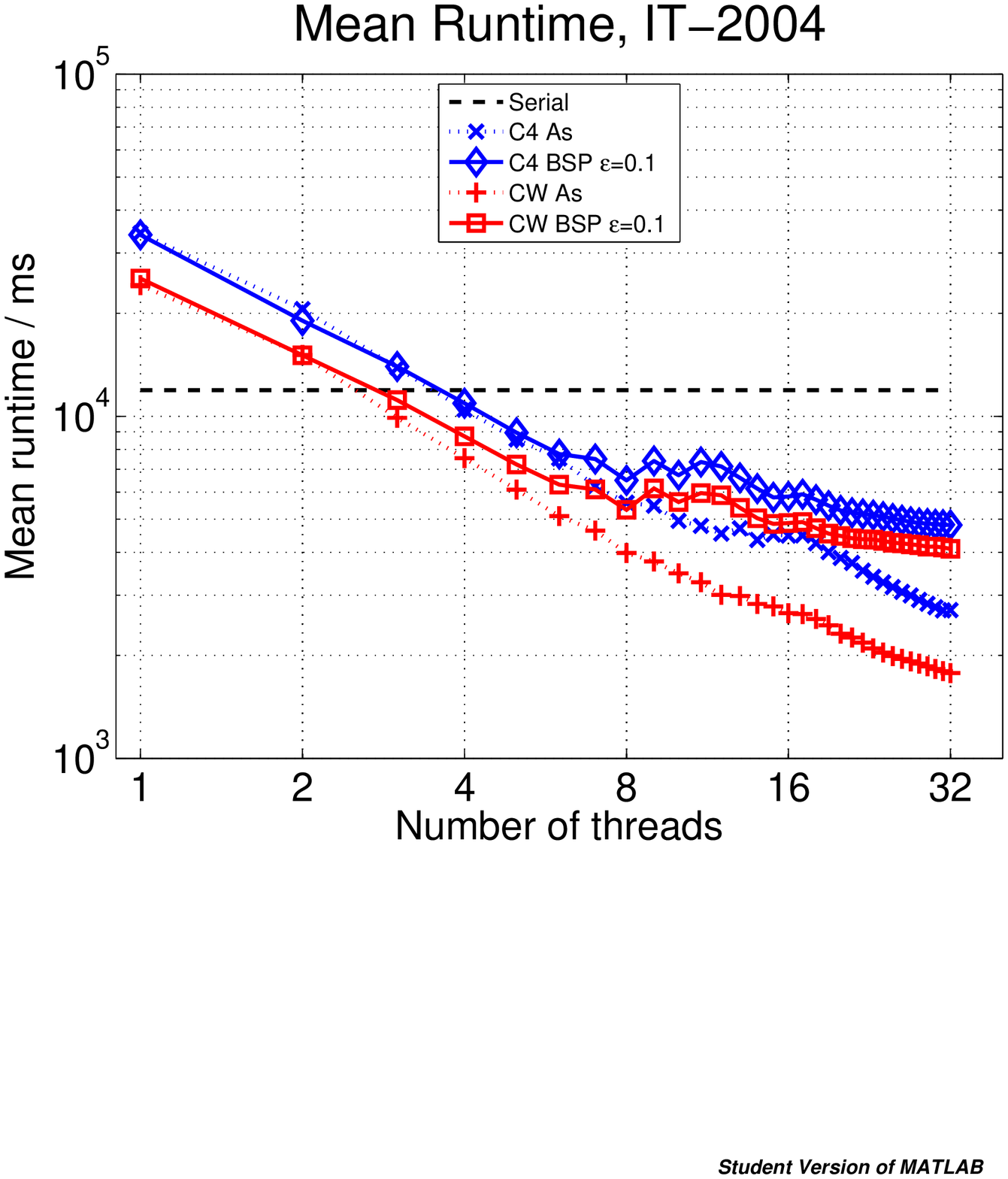}
      \caption{IT-2004, $\epsilon = 0.1$}
      \label{appfig:runtimes_it04_ave_01}
    \end{subfigure} &
    \begin{subfigure}[b]{0.31\textwidth}
      \includegraphics[width=120pt]{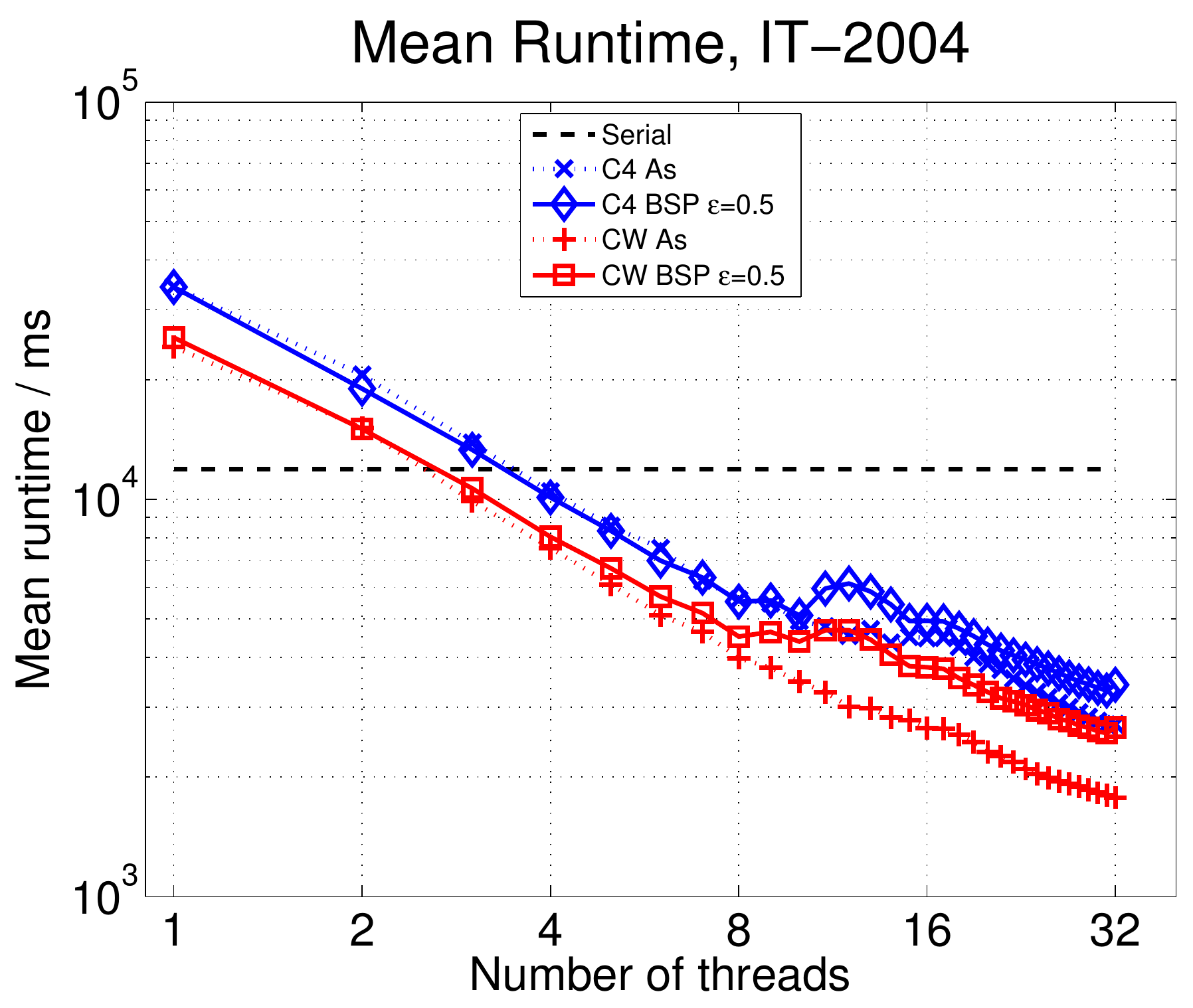}
      \caption{IT-2004, $\epsilon = 0.5$}
      \label{appfig:runtimes_it04_ave_05}
    \end{subfigure} &
    \begin{subfigure}[b]{0.31\textwidth}
      \includegraphics[width=120pt]{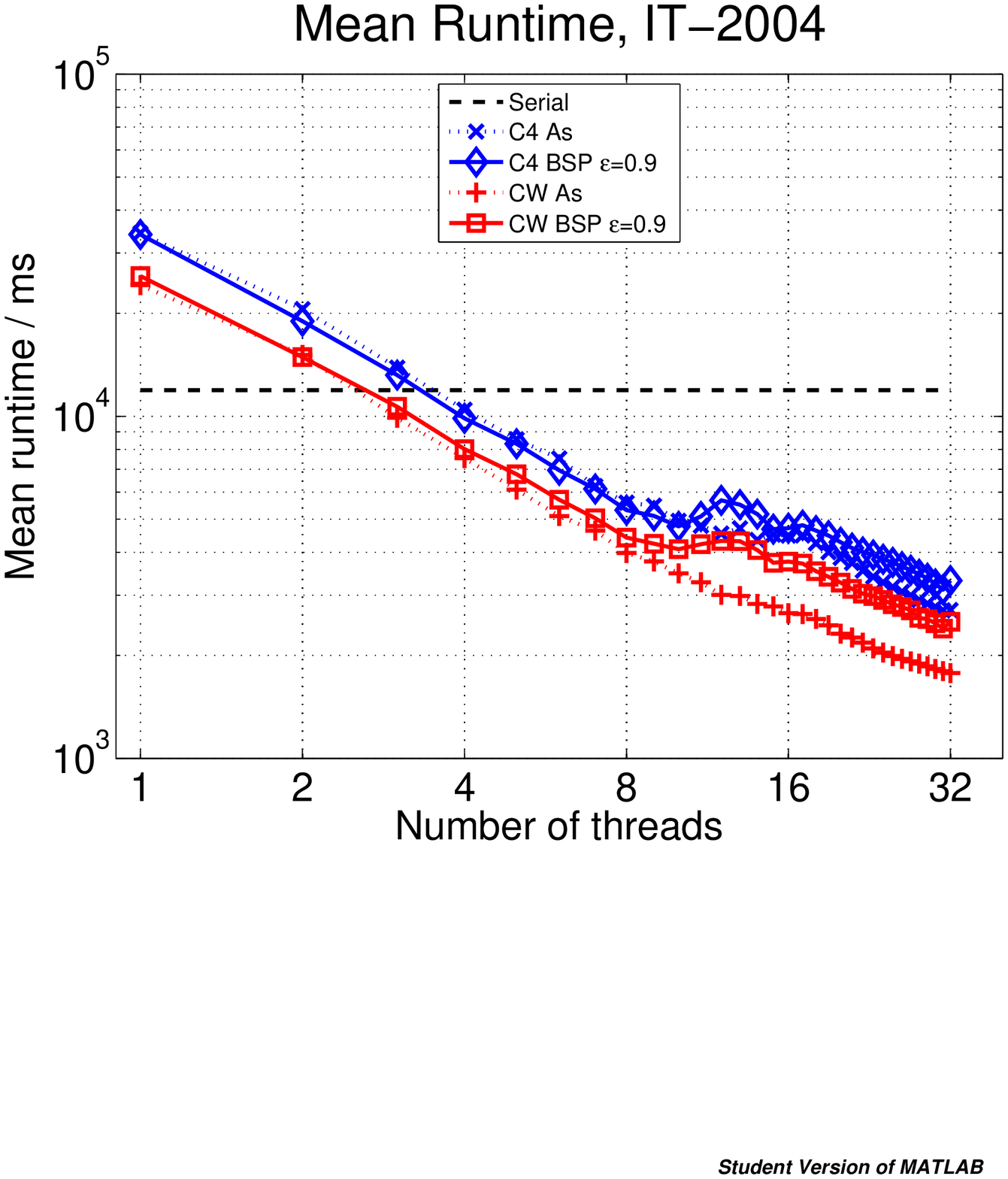}
      \caption{IT-2004, $\epsilon = 0.9$}
      \label{appfig:runtimes_it04_ave_09}
    \end{subfigure} \\

    \begin{subfigure}[b]{0.31\textwidth}
      \includegraphics[width=120pt]{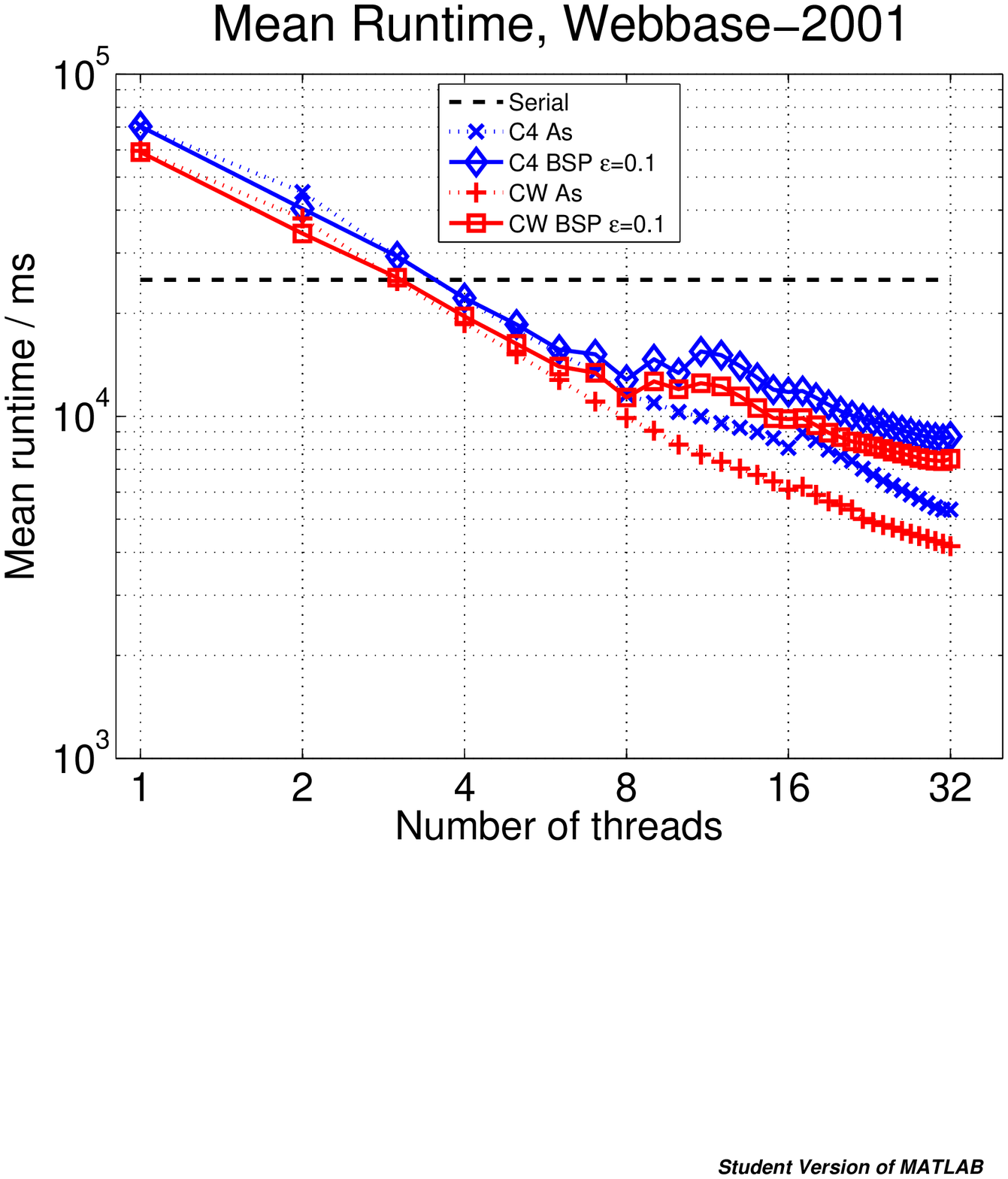}
      \caption{Webbase-2001, $\epsilon = 0.1$}
      \label{appfig:runtimes_wb01_ave_01}
    \end{subfigure} &
    \begin{subfigure}[b]{0.31\textwidth}
      \includegraphics[width=120pt]{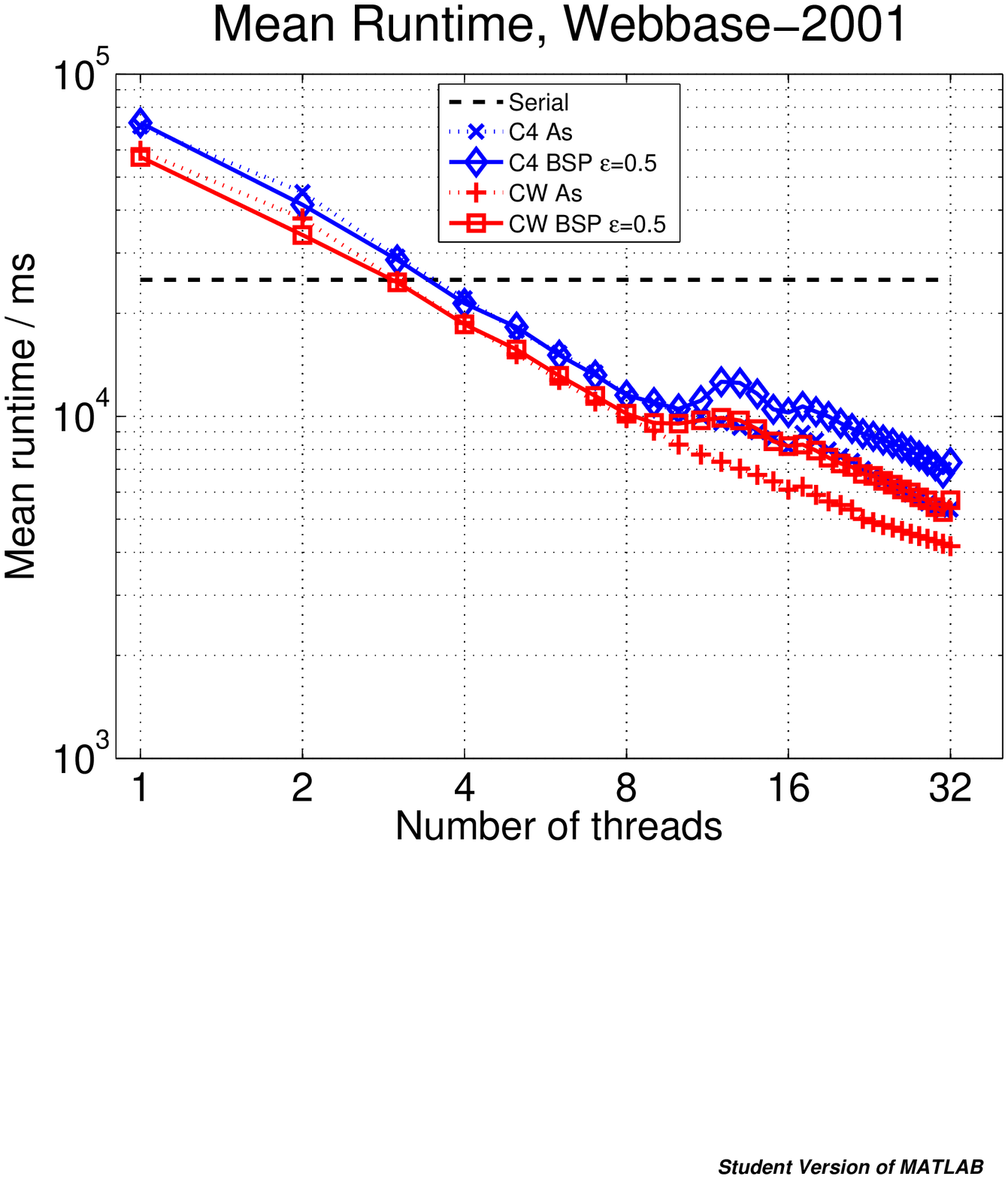}
      \caption{Webbase-2001, $\epsilon = 0.5$}
      \label{appfig:runtimes_wb01_ave_05}
    \end{subfigure} &
    \begin{subfigure}[b]{0.31\textwidth}
      \includegraphics[width=120pt]{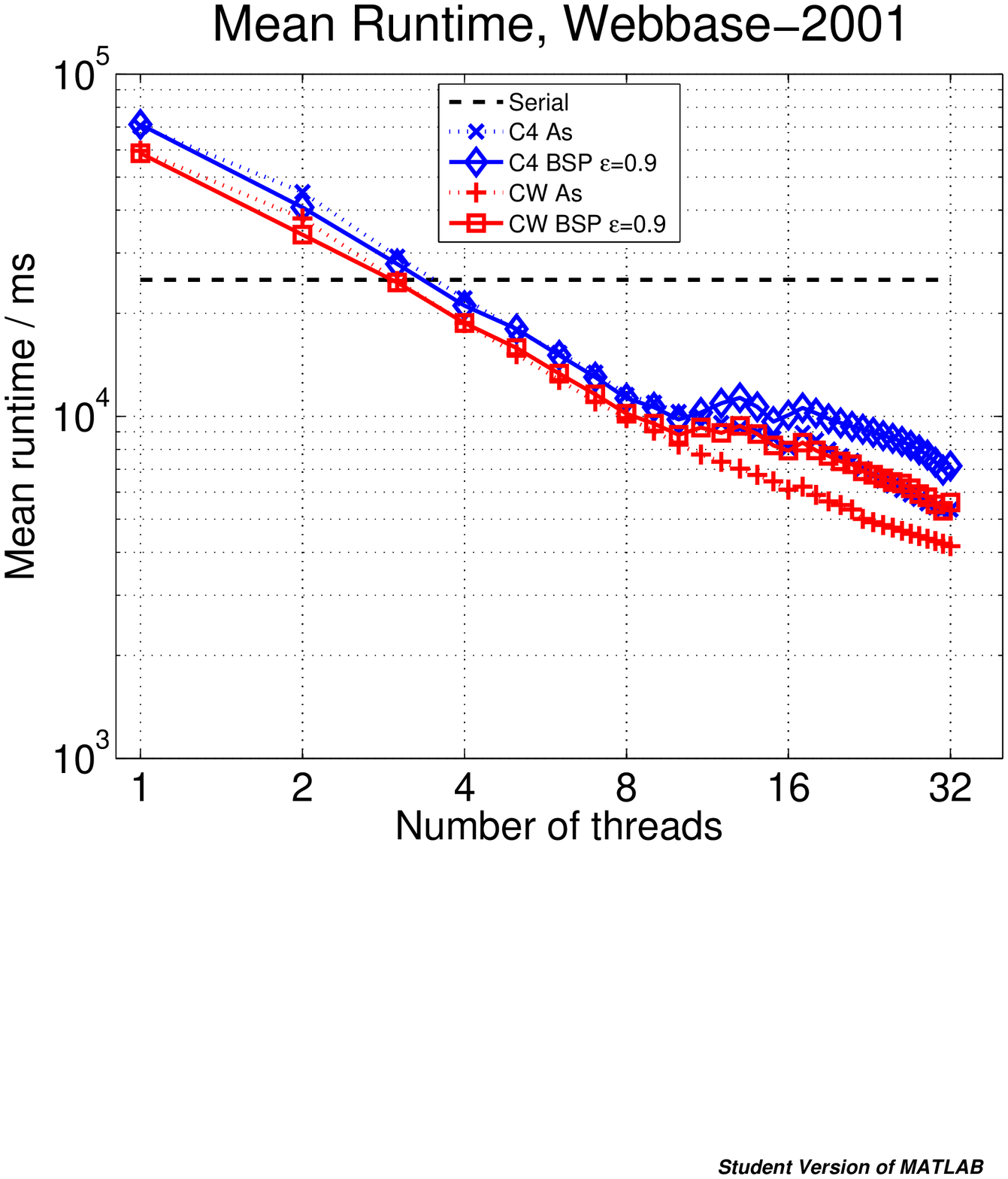}
      \caption{Webbase-2001, $\epsilon = 0.9$}
      \label{appfig:runtimes_wb01_ave_09}
    \end{subfigure} \\

    \begin{subfigure}[b]{0.31\textwidth}
      \includegraphics[width=120pt]{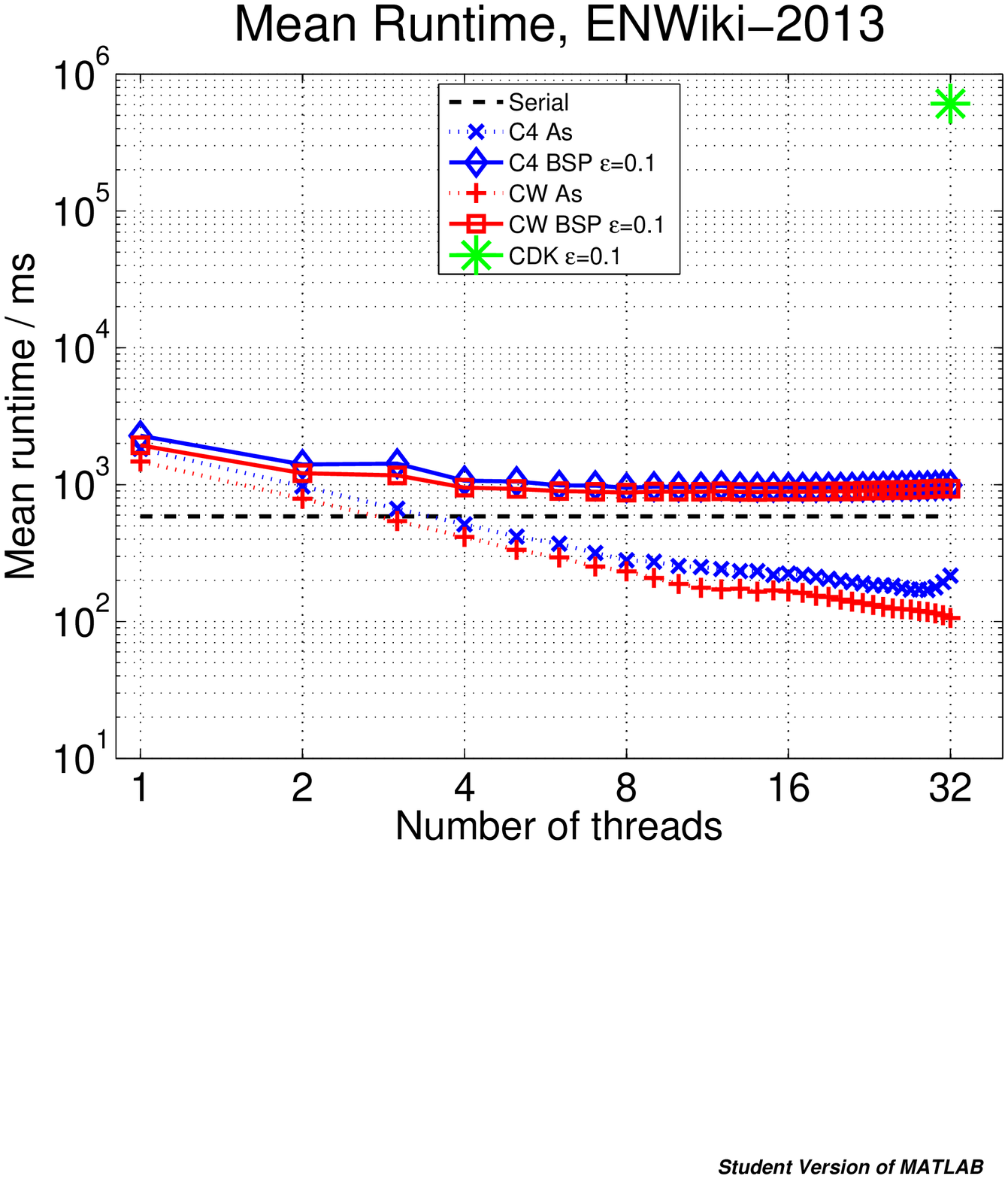}
      \caption{ENWiki-2013, $\epsilon = 0.1$}
      \label{appfig:runtimes_ew13_ave_01}
    \end{subfigure} &
    \begin{subfigure}[b]{0.31\textwidth}
      \includegraphics[width=120pt]{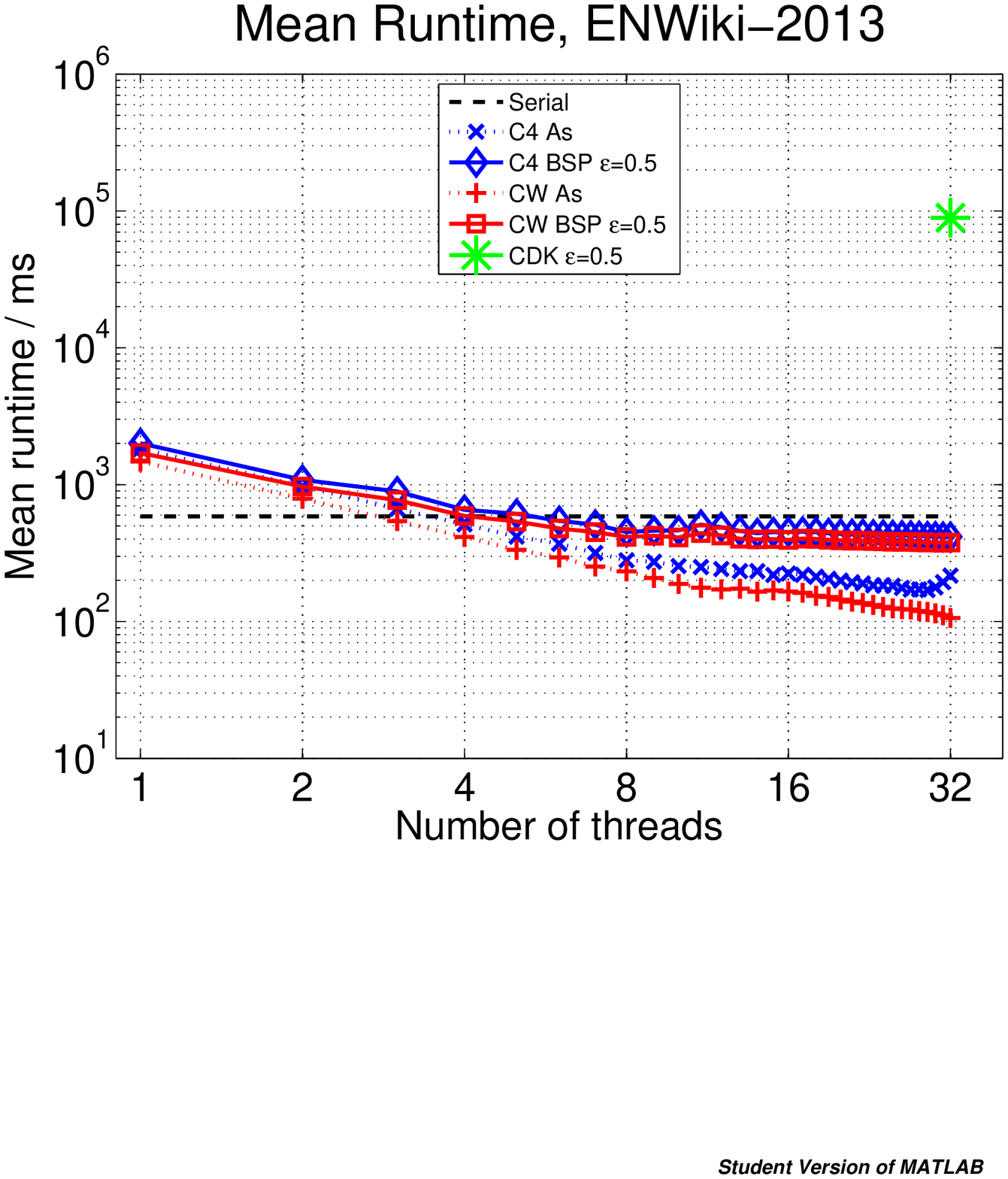}
      \caption{ENWiki-2013, $\epsilon = 0.5$}
      \label{appfig:runtimes_ew13_ave_05}
    \end{subfigure} &
    \begin{subfigure}[b]{0.31\textwidth}
      \includegraphics[width=120pt]{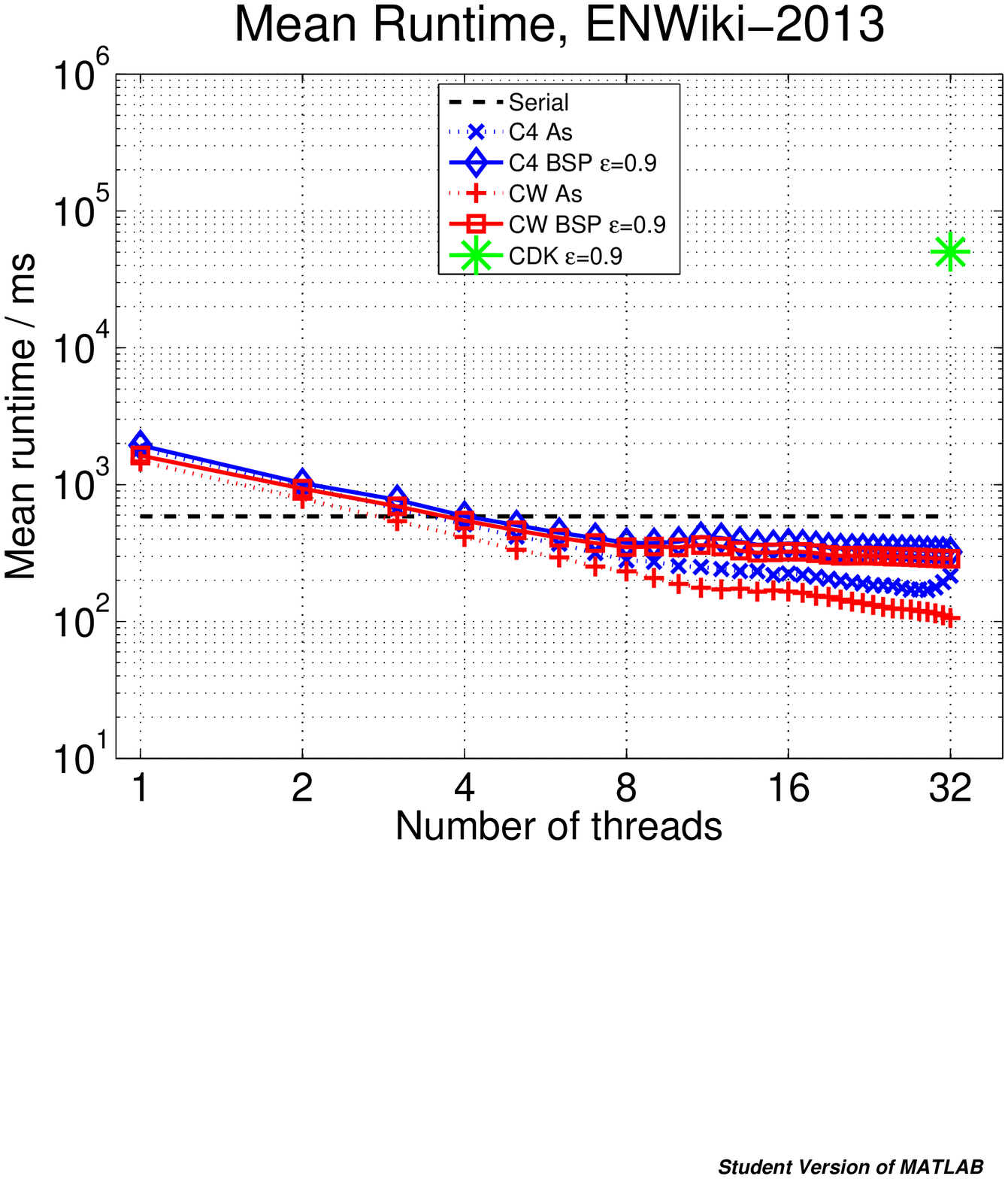}
      \caption{ENWiki-2013, $\epsilon = 0.9$}
      \label{appfig:runtimes_ew13_ave_09}
    \end{subfigure} \\

    \begin{subfigure}[b]{0.31\textwidth}
      \includegraphics[width=120pt]{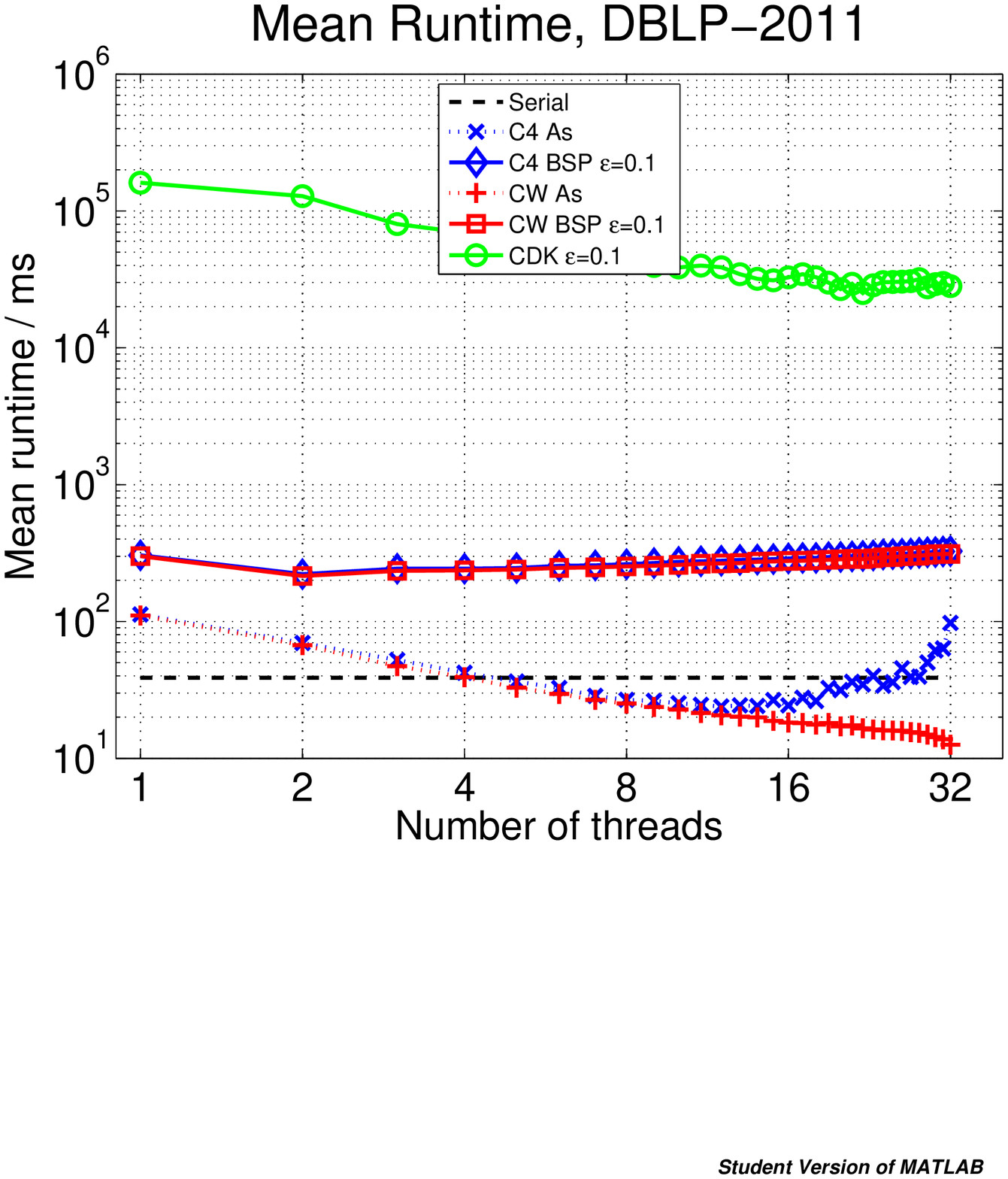}
      \caption{DBLP-2011, $\epsilon = 0.1$}
      \label{appfig:runtimes_db11_ave_01}
    \end{subfigure} &
    \begin{subfigure}[b]{0.31\textwidth}
      \includegraphics[width=120pt]{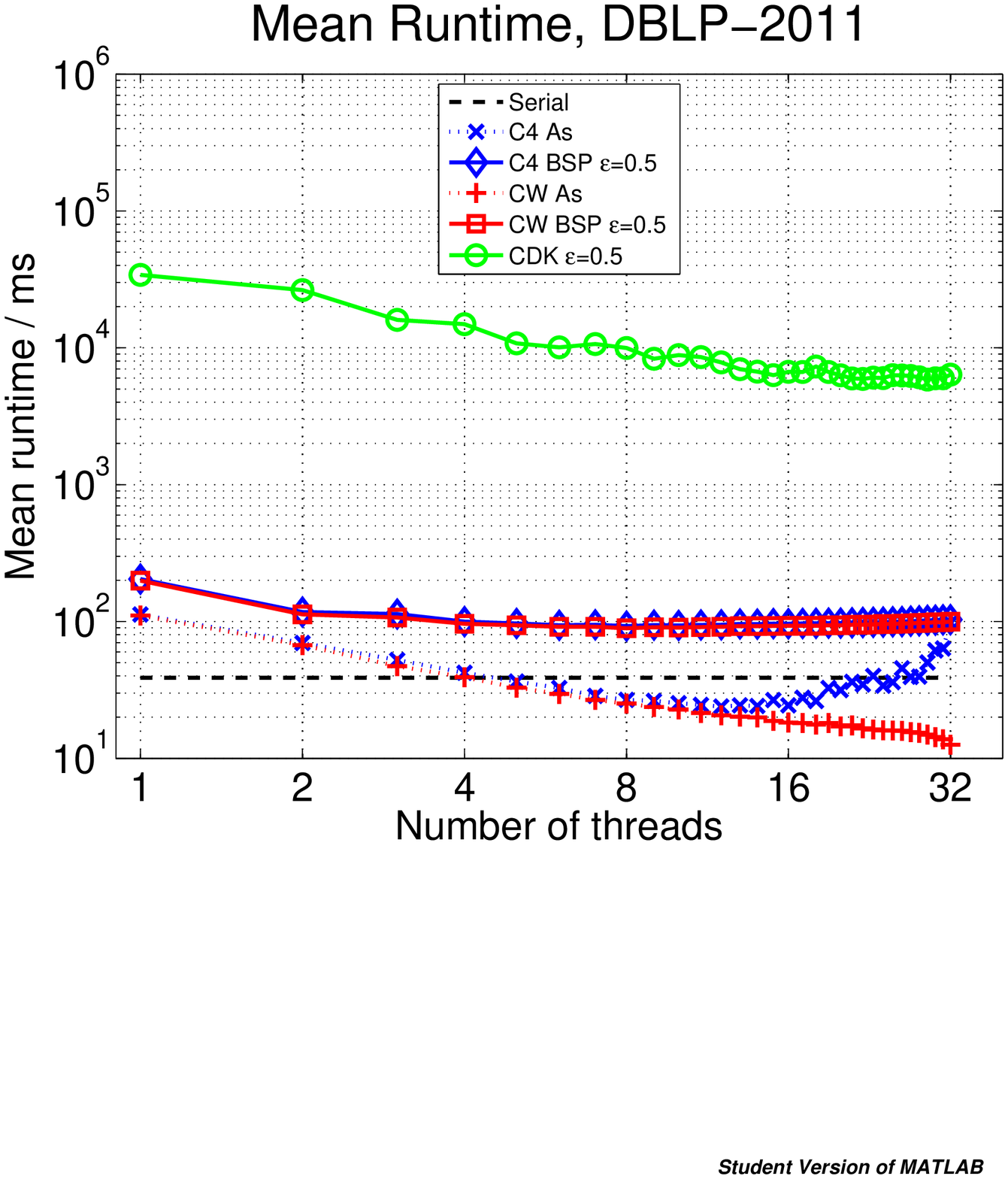}
      \caption{DBLP-2011, $\epsilon = 0.5$}
      \label{appfig:runtimes_db11_ave_05}
    \end{subfigure} &
    \begin{subfigure}[b]{0.31\textwidth}
      \includegraphics[width=120pt]{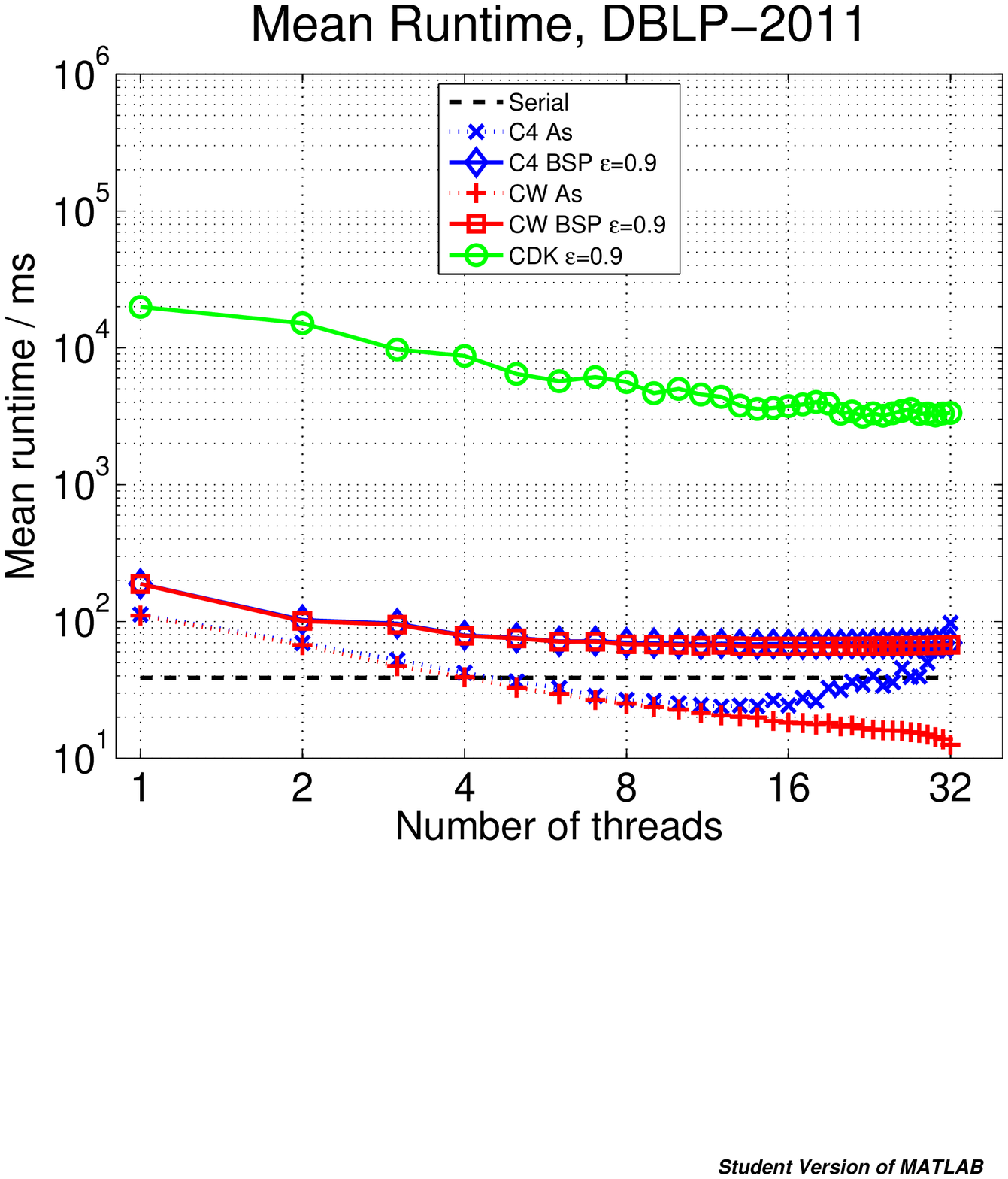}
      \caption{DBLP-2011, $\epsilon = 0.9$}
      \label{appfig:runtimes_db11_ave_09}
    \end{subfigure} \\

  \end{tabular}
  \caption{\scriptsize
  Empirical mean runtimes.
  For short, `CW' is \CW{} and `As' refers to the asynchronous variants.
  On larger graphs, our parallel algorithms on 3-4 threads are faster than serial \KC{}.
  On the smaller graphs, the BSP variants have expensive synchronization barriers (relative to the small amount of actual work done) and do not necessary run faster than serial \KC{}; the asynchronous variants do outperform serial \KC{} with 4-5 threads.
  We were only able to run CDK on the smaller graphs, for which CDK was 2-3 orders of magnitude slower than serial.
  Note also that the BSP variants have improved runtimes for larger $\epsilon$.}
\end{figure}

\begin{figure}[ht]
  \centering
  \begin{tabular}{ccc}

    \begin{subfigure}[b]{0.31\textwidth}
      \includegraphics[width=120pt]{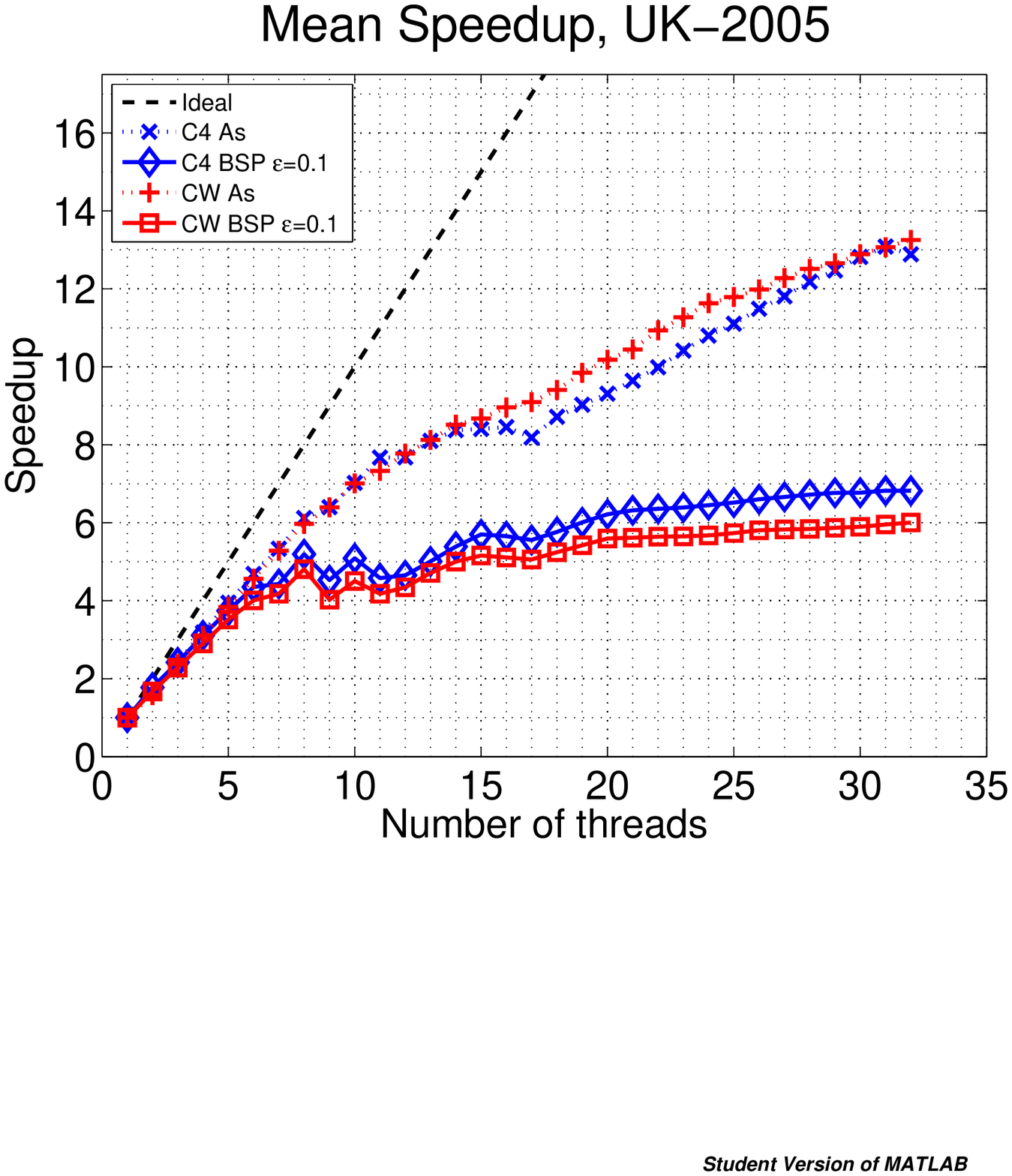}
      \caption{UK-2005, $\epsilon = 0.1$}
      \label{appfig:speedups_uk05_01}
    \end{subfigure} &
    \begin{subfigure}[b]{0.31\textwidth}
      \includegraphics[width=120pt]{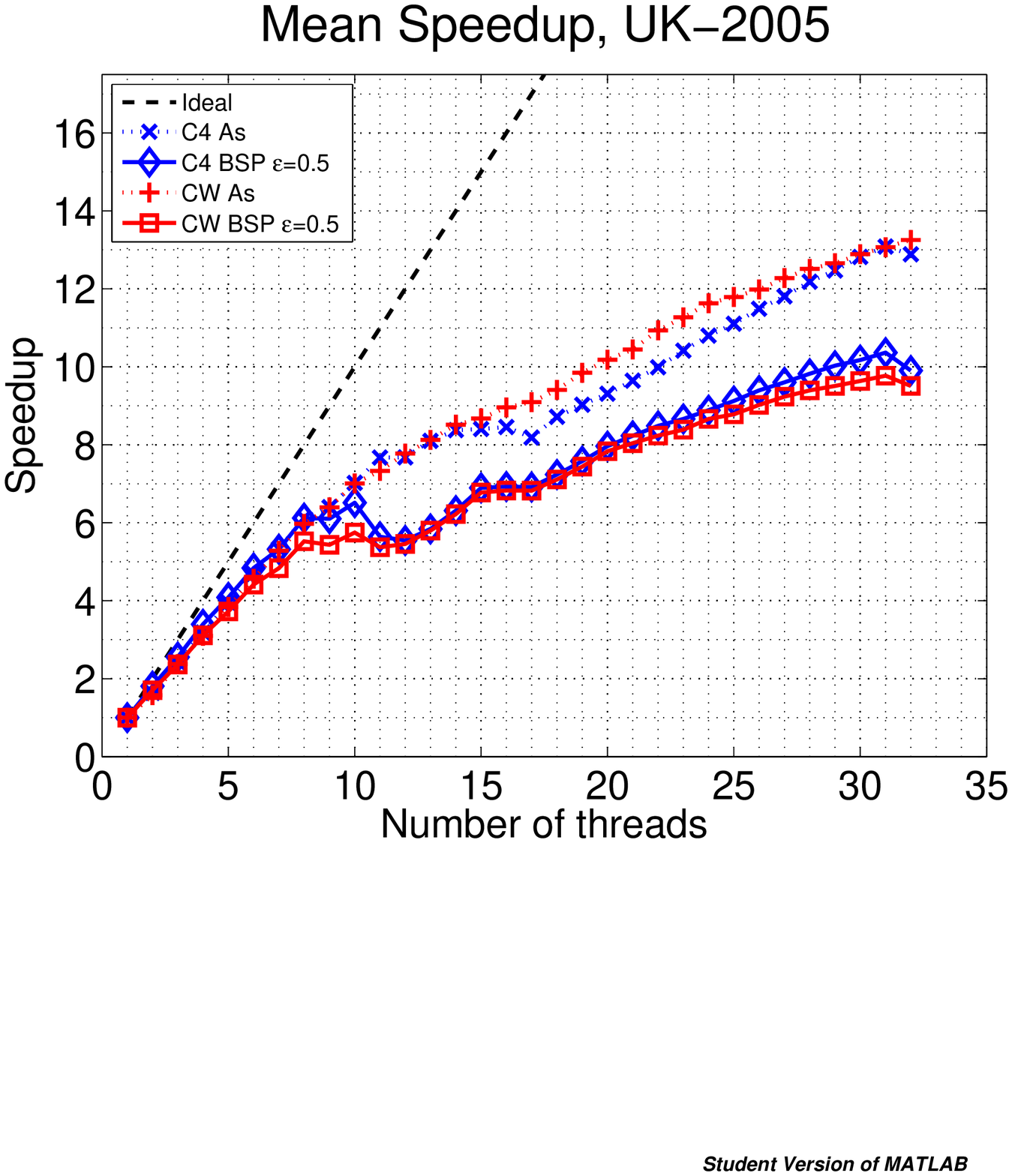}
      \caption{UK-2005, $\epsilon = 0.5$}
      \label{appfig:speedups_uk05_05}
    \end{subfigure} &
    \begin{subfigure}[b]{0.31\textwidth}
      \includegraphics[width=120pt]{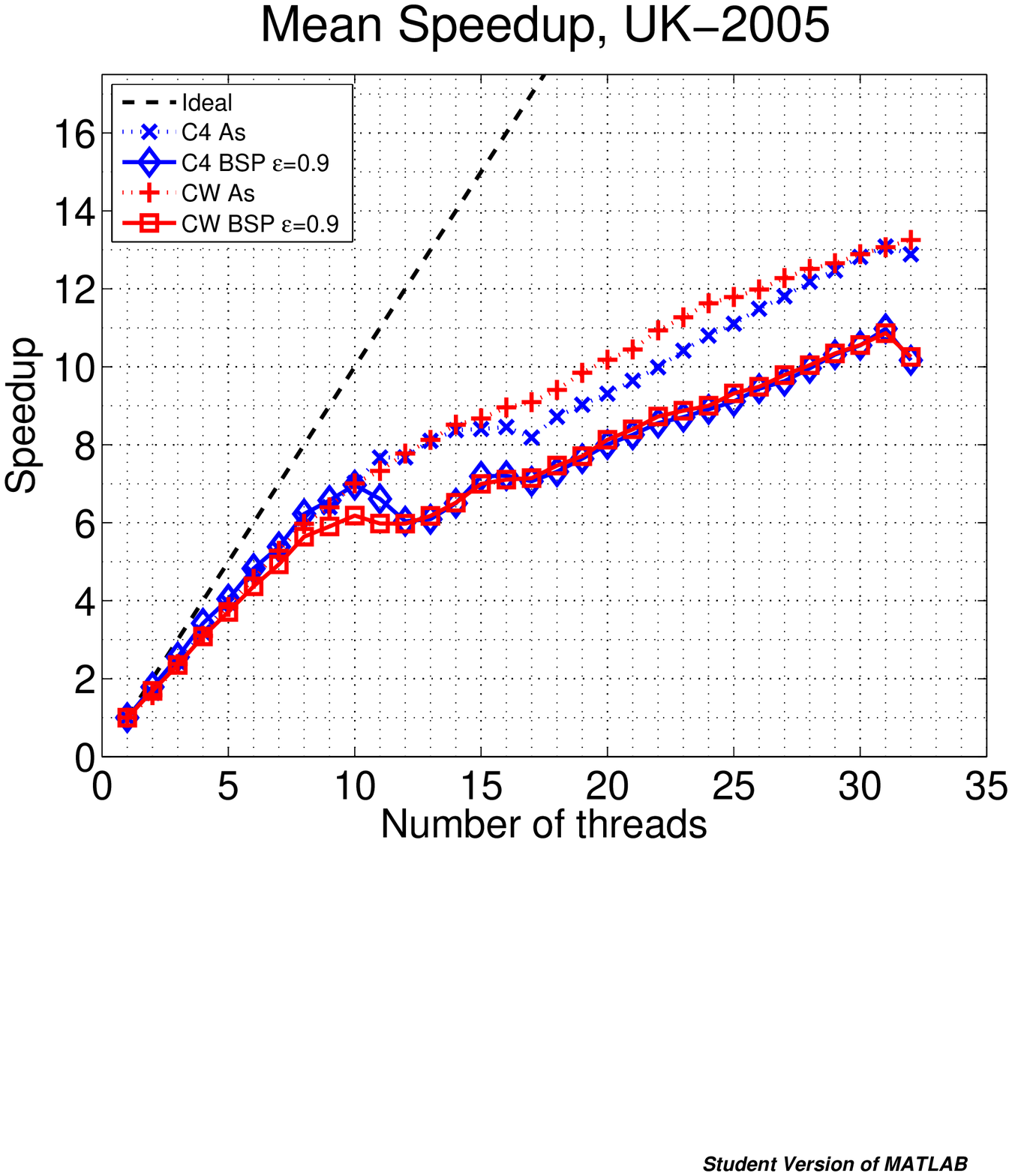}
      \caption{UK-2005, $\epsilon = 0.9$}
      \label{appfig:speedups_uk05_09}
    \end{subfigure} \\

    \begin{subfigure}[b]{0.31\textwidth}
      \includegraphics[width=120pt]{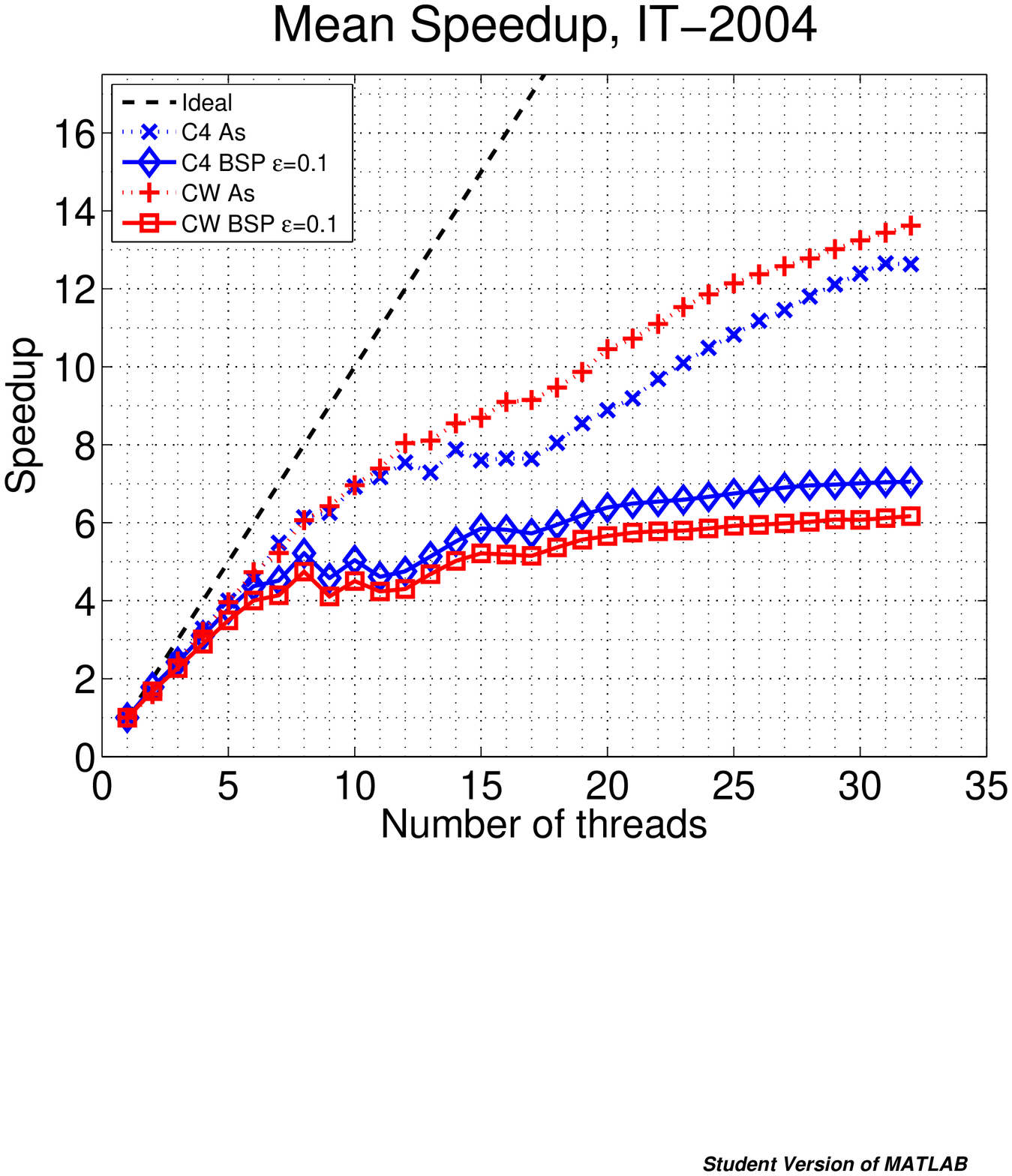}
      \caption{IT-2004, $\epsilon = 0.1$}
      \label{appfig:speedups_it04_01}
    \end{subfigure} &
    \begin{subfigure}[b]{0.31\textwidth}
      \includegraphics[width=120pt]{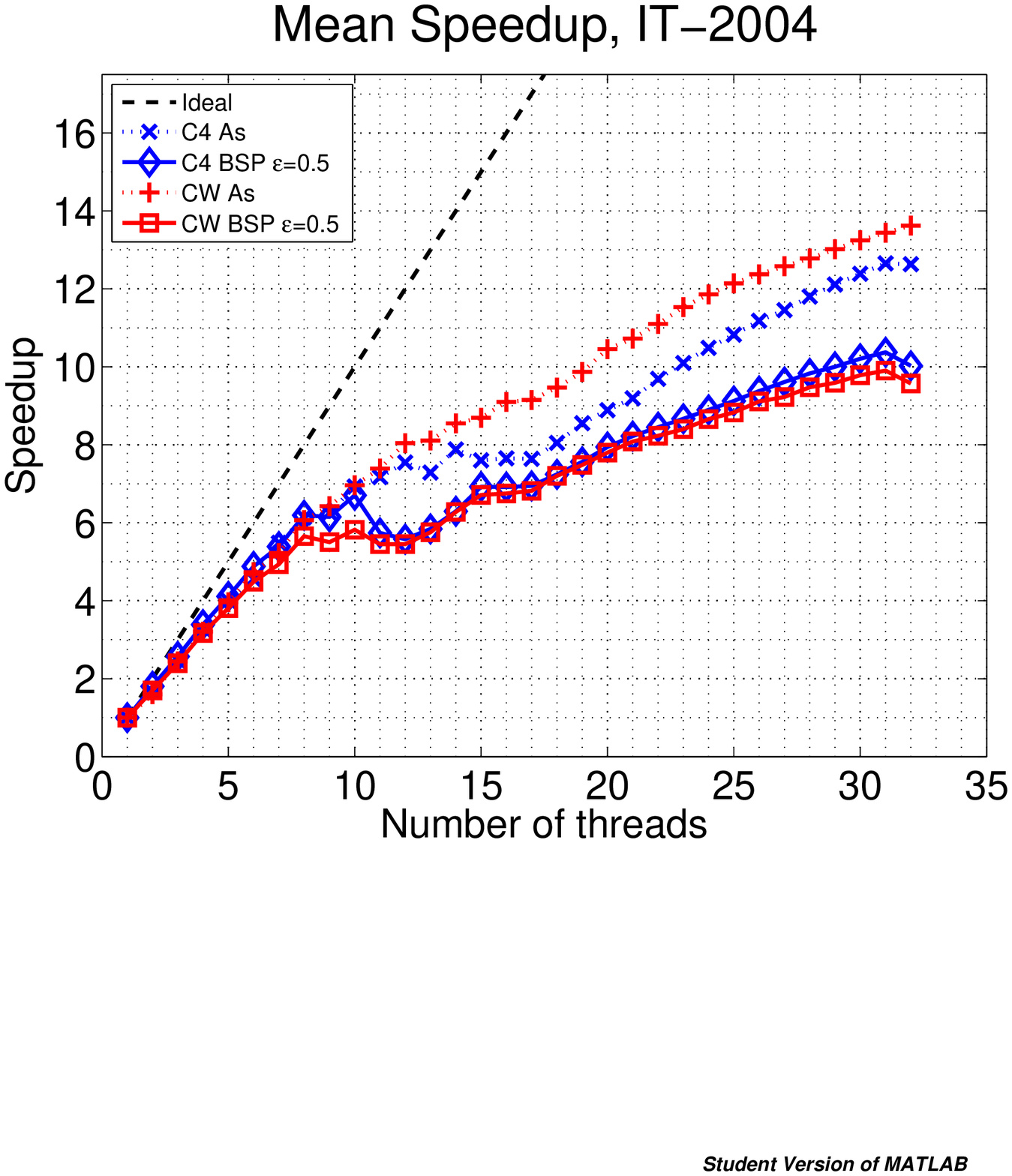}
      \caption{IT-2004, $\epsilon = 0.5$}
      \label{appfig:speedups_it04_05}
    \end{subfigure} &
    \begin{subfigure}[b]{0.31\textwidth}
      \includegraphics[width=120pt]{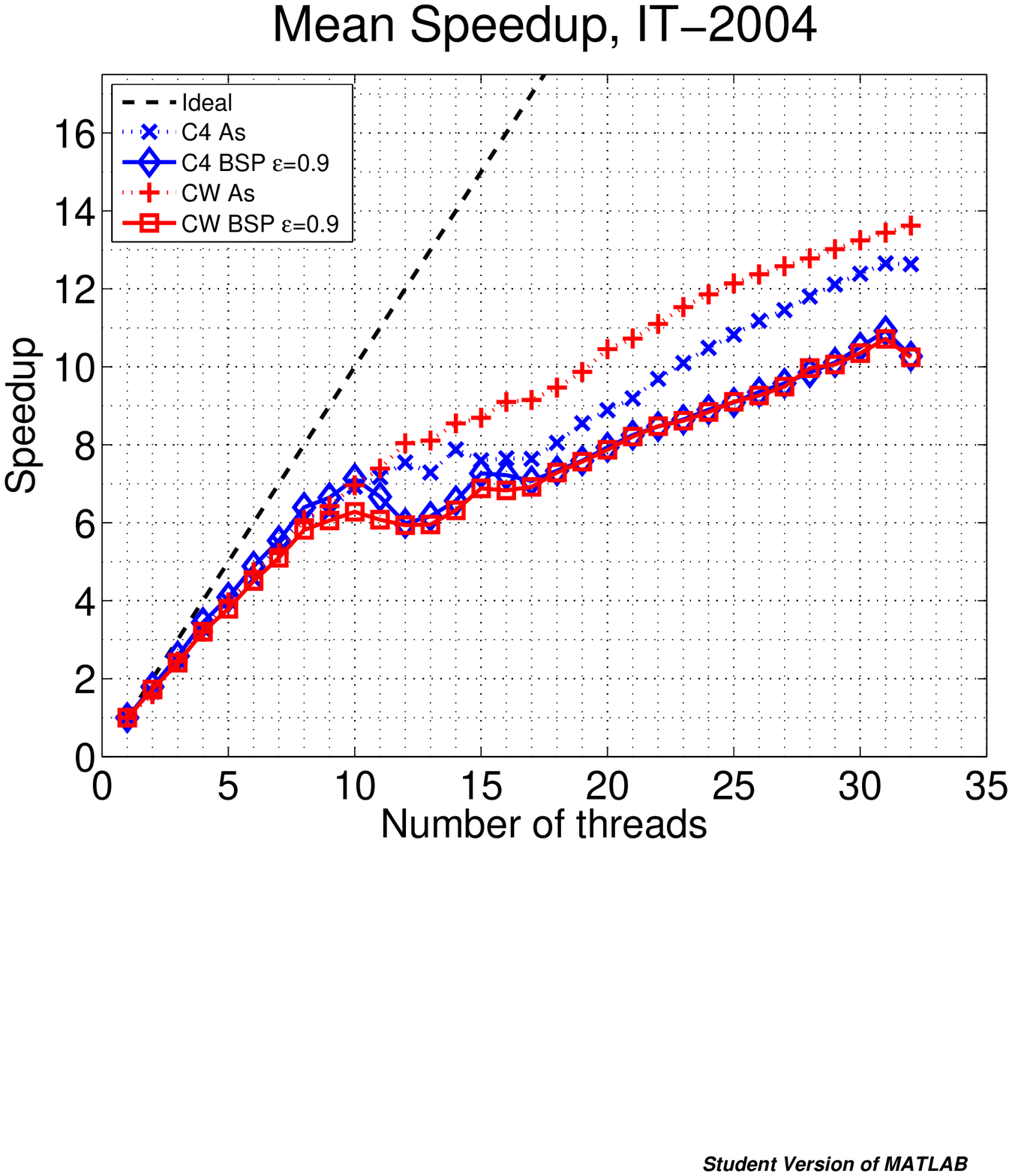}
      \caption{IT-2004, $\epsilon = 0.9$}
      \label{appfig:speedups_it04_09}
    \end{subfigure} \\

    \begin{subfigure}[b]{0.31\textwidth}
      \includegraphics[width=120pt]{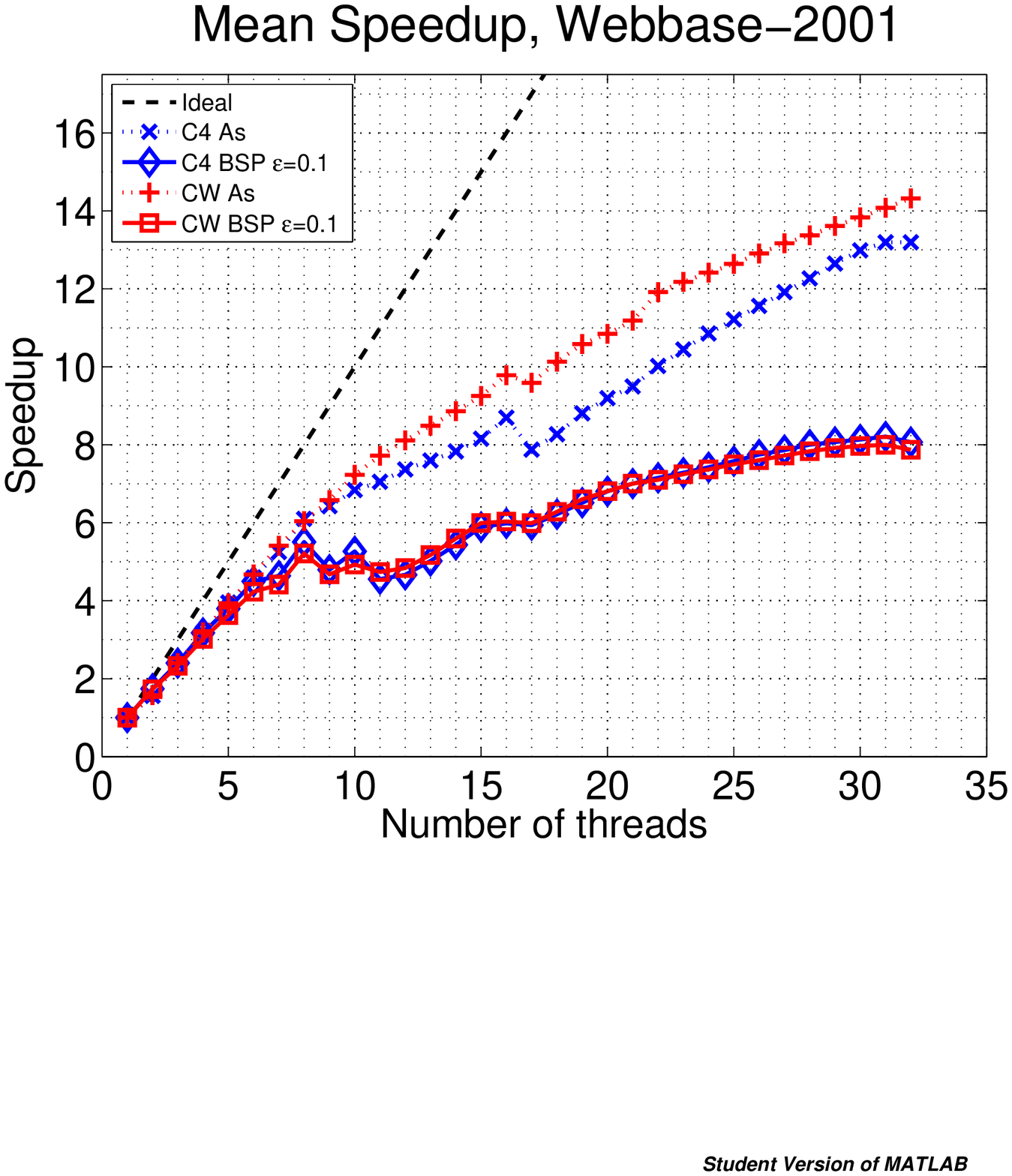}
      \caption{Webbase-2001, $\epsilon = 0.1$}
      \label{appfig:speedups_wb01_01}
    \end{subfigure} &
    \begin{subfigure}[b]{0.31\textwidth}
      \includegraphics[width=120pt]{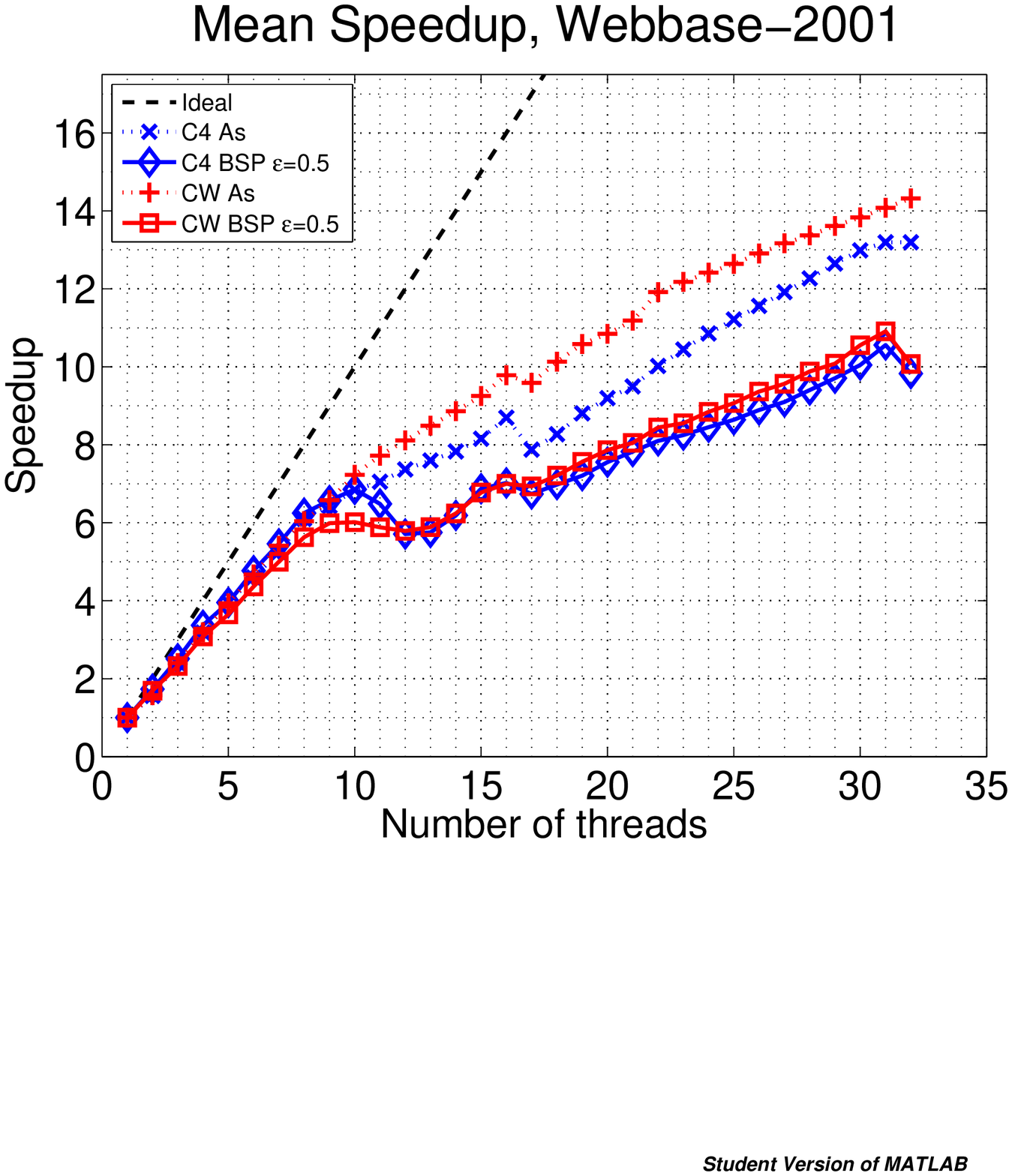}
      \caption{Webbase-2001, $\epsilon = 0.5$}
      \label{appfig:speedups_wb01_05}
    \end{subfigure} &
    \begin{subfigure}[b]{0.31\textwidth}
      \includegraphics[width=120pt]{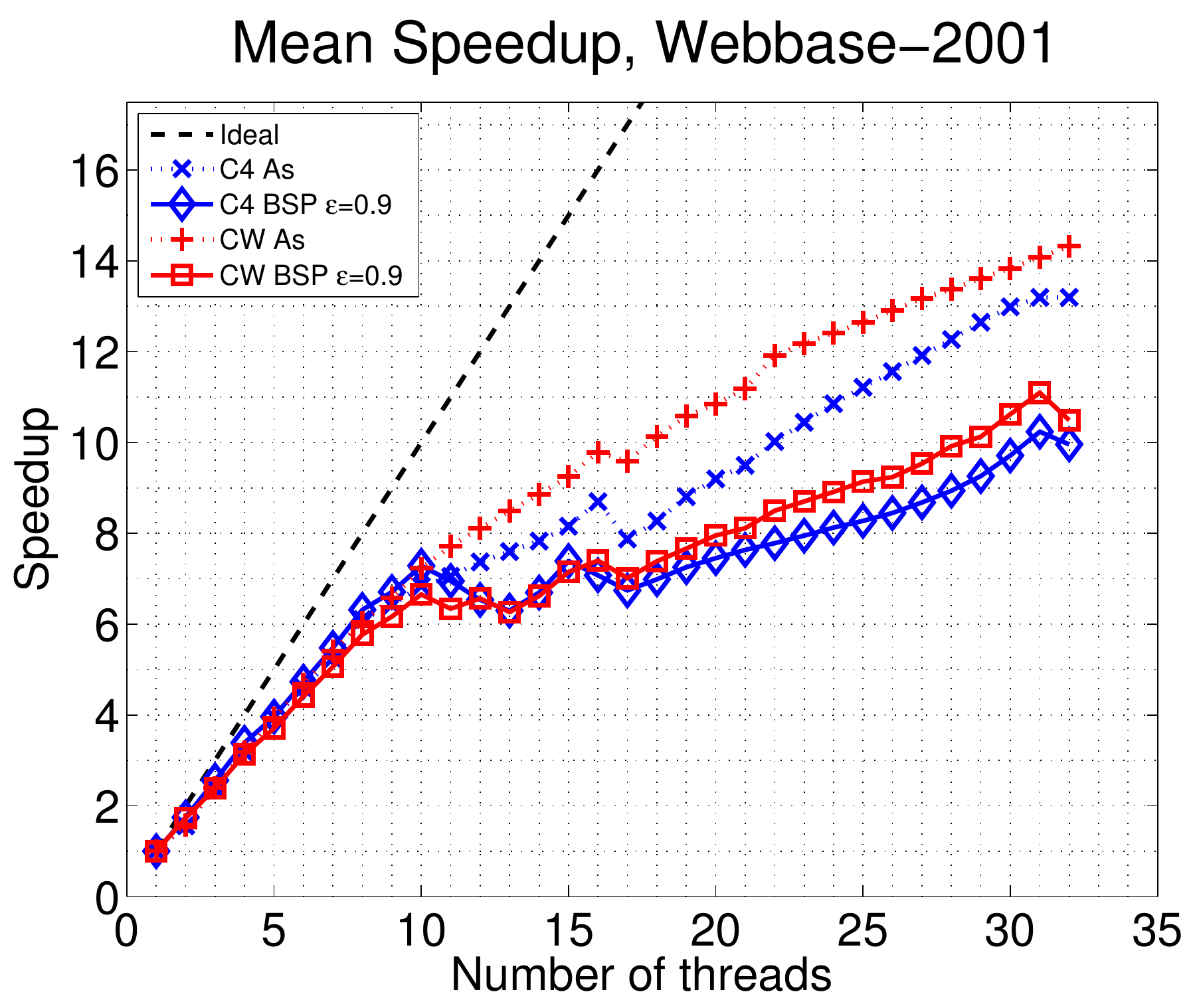}
      \caption{Webbase-2001, $\epsilon = 0.9$}
      \label{appfig:speedups_wb01_09}
    \end{subfigure} \\

    \begin{subfigure}[b]{0.31\textwidth}
      \includegraphics[width=120pt]{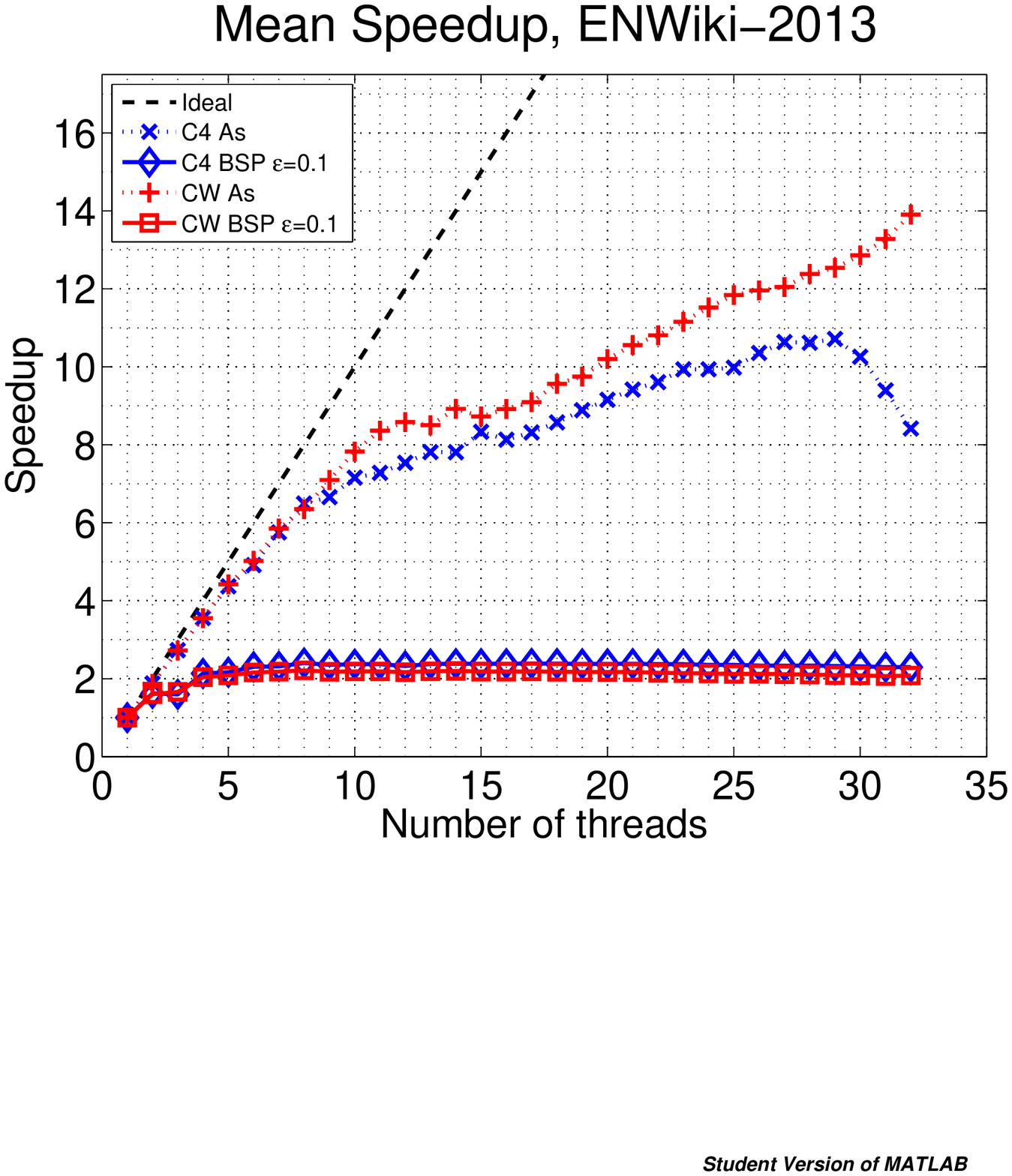}
      \caption{ENWiki-2013, $\epsilon = 0.1$}
      \label{appfig:speedups_ew13_01}
    \end{subfigure} &
    \begin{subfigure}[b]{0.31\textwidth}
      \includegraphics[width=120pt]{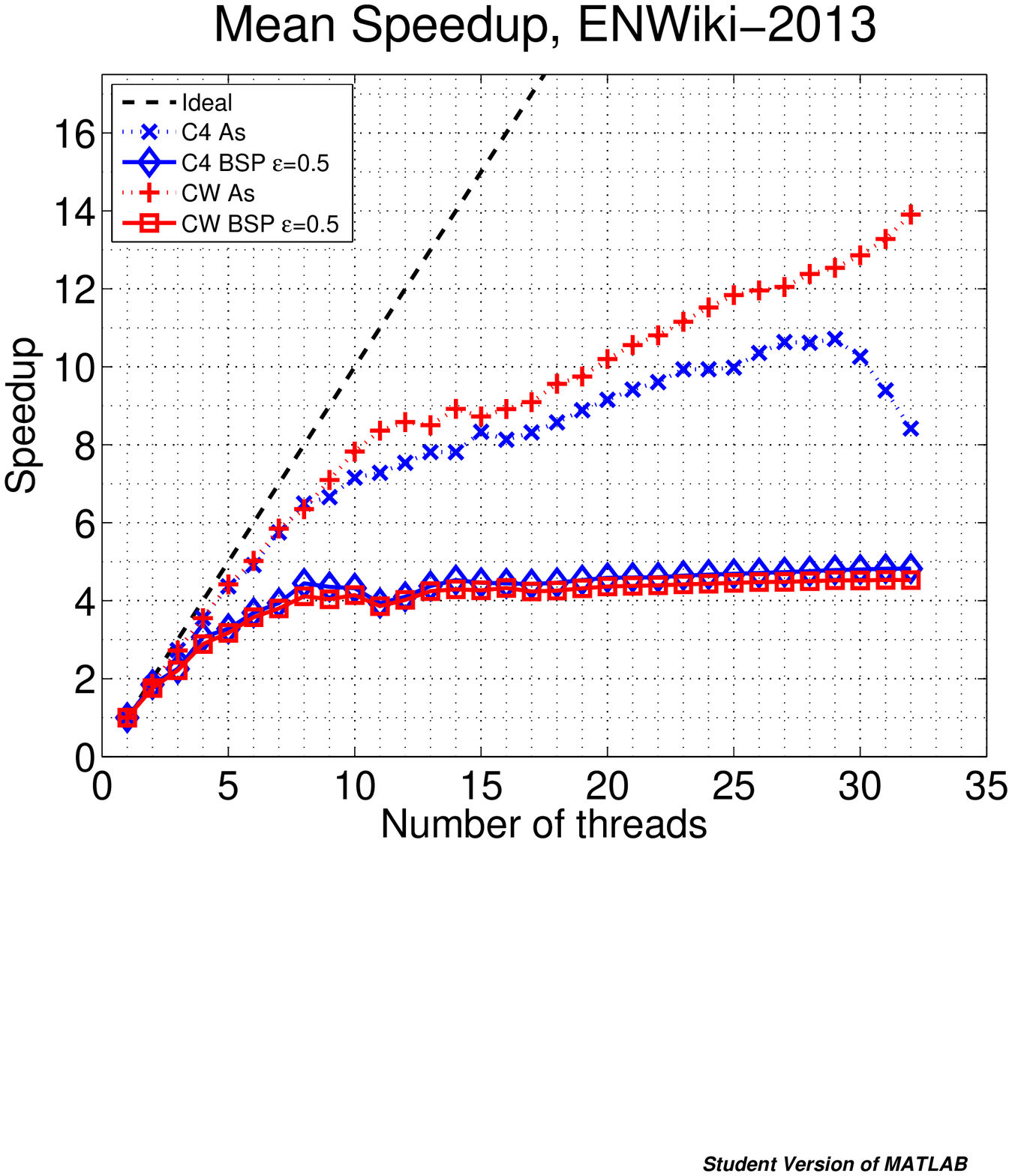}
      \caption{ENWiki-2013, $\epsilon = 0.5$}
      \label{appfig:speedups_ew13_05}
    \end{subfigure} &
    \begin{subfigure}[b]{0.31\textwidth}
      \includegraphics[width=120pt]{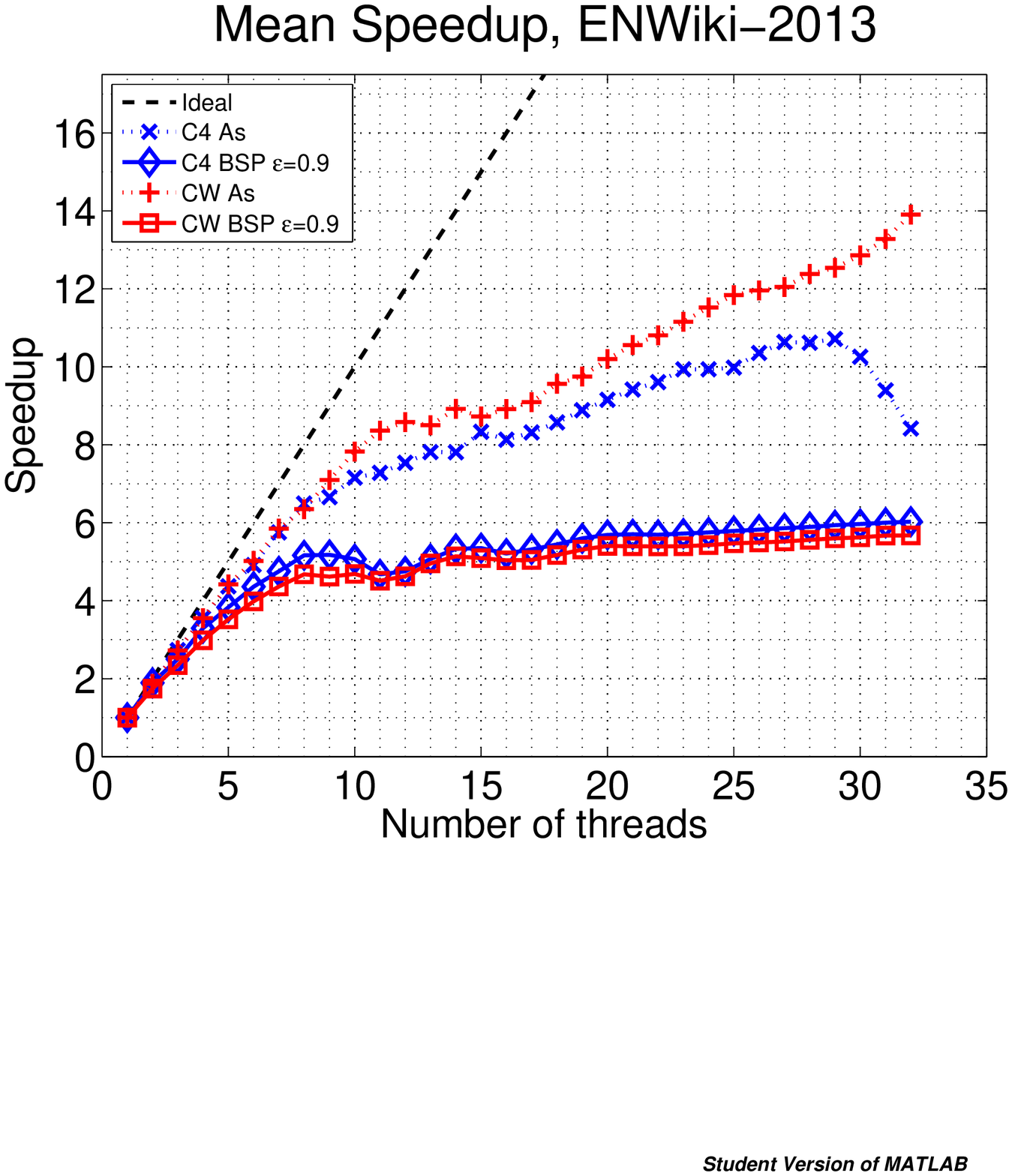}
      \caption{ENWiki-2013, $\epsilon = 0.9$}
      \label{appfig:speedups_ew13_09}
    \end{subfigure} \\

    \begin{subfigure}[b]{0.31\textwidth}
      \includegraphics[width=120pt]{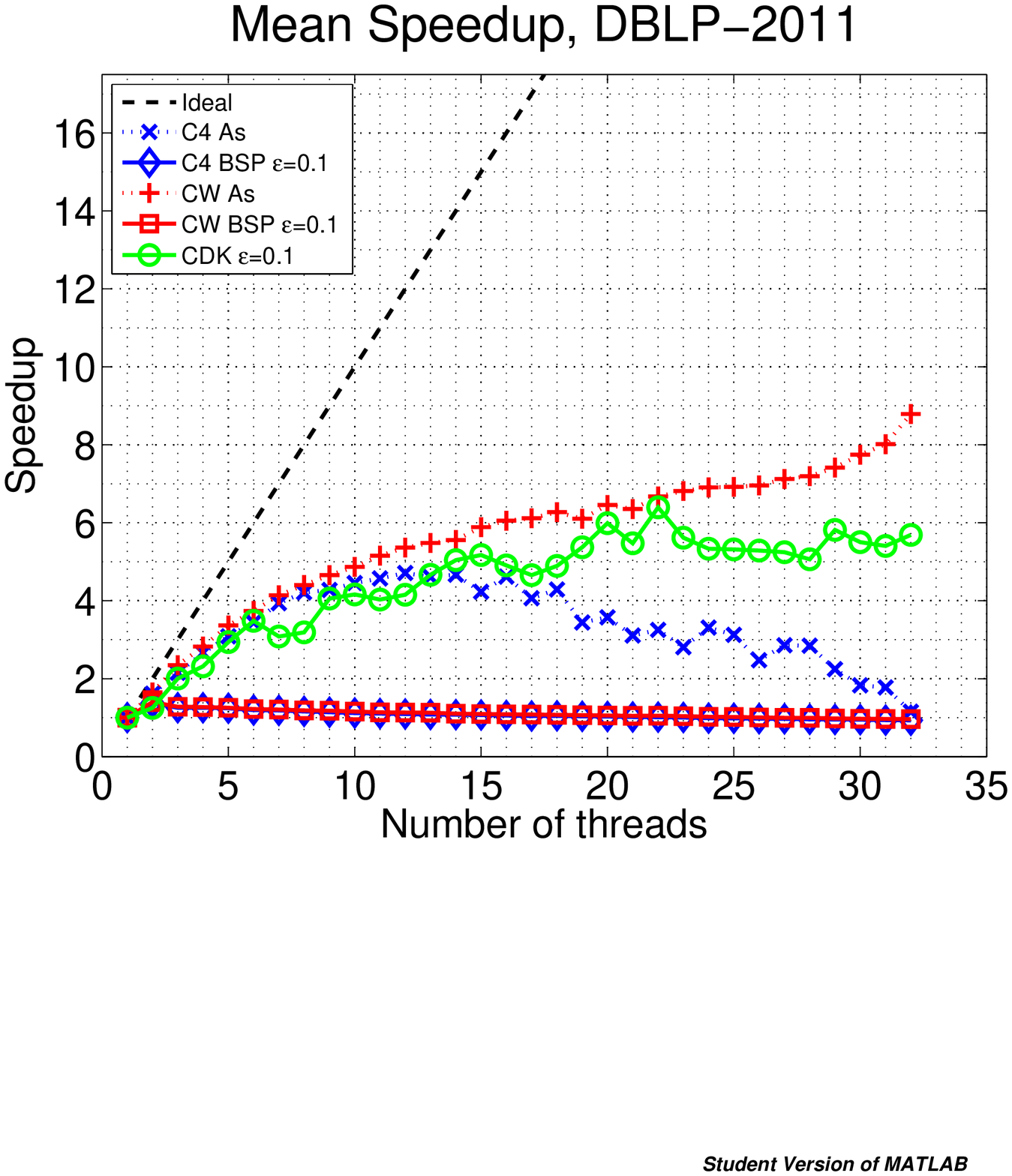}
      \caption{DBLP-2011, $\epsilon = 0.1$}
      \label{appfig:speedups_db11_01}
    \end{subfigure} &
    \begin{subfigure}[b]{0.31\textwidth}
      \includegraphics[width=120pt]{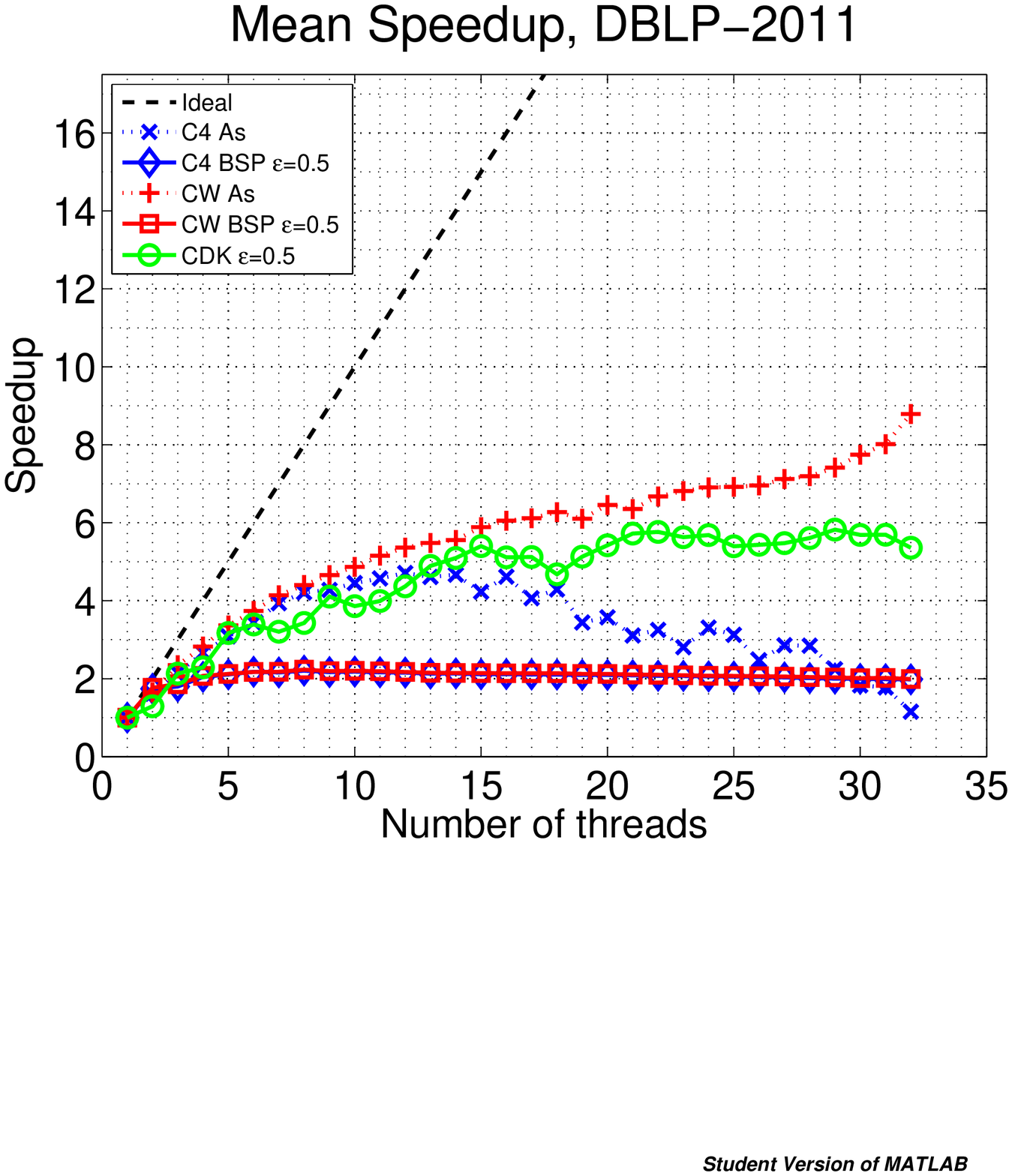}
      \caption{DBLP-2011, $\epsilon = 0.5$}
      \label{appfig:speedups_db11_05}
    \end{subfigure} &
    \begin{subfigure}[b]{0.31\textwidth}
      \includegraphics[width=120pt]{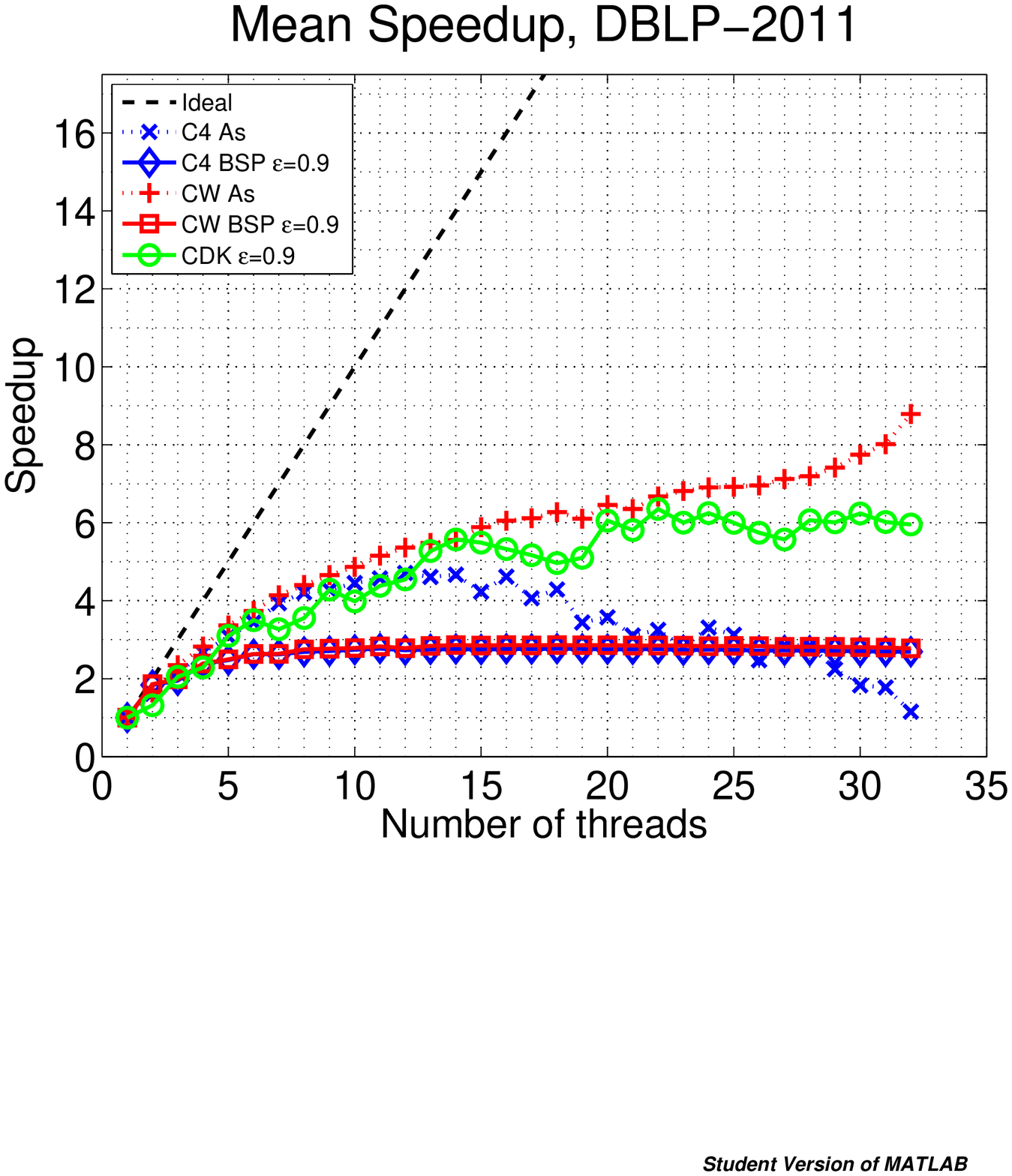}
      \caption{DBLP-2011, $\epsilon = 0.9$}
      \label{appfig:speedups_db11_09}
    \end{subfigure} \\

  \end{tabular}
  \caption{\scriptsize
  Empirical mean speedups.
  The best speedups (14x on large graphs) are achieved by asynchronous \CW{} which has the least coordination,
  followed by asynchronous \CC{} (13x on large graphs).
  The BSP variants achieve up to 10x speedups on large graphs, with better speedups as $\epsilon$ increases.
  On small graphs we obtain poorer speedups as the cost of any contention is magnified as the actual work done is comparatively small.
  There are a couple of kinks at 10 and 16 threads, which we postulate is due to NUMA and hyperthreading effects---the EC2 r3.8xlarge instances are equipped with 10-core Intel Xeon E5-2670 v2 (Ivy Bridge) processors with 32 vCPUs and hyperthreading.
  }
\end{figure}

\begin{figure}[ht]
  \centering
  \begin{tabular}{ccc}

    \begin{subfigure}[b]{0.31\textwidth}
      \includegraphics[width=120pt]{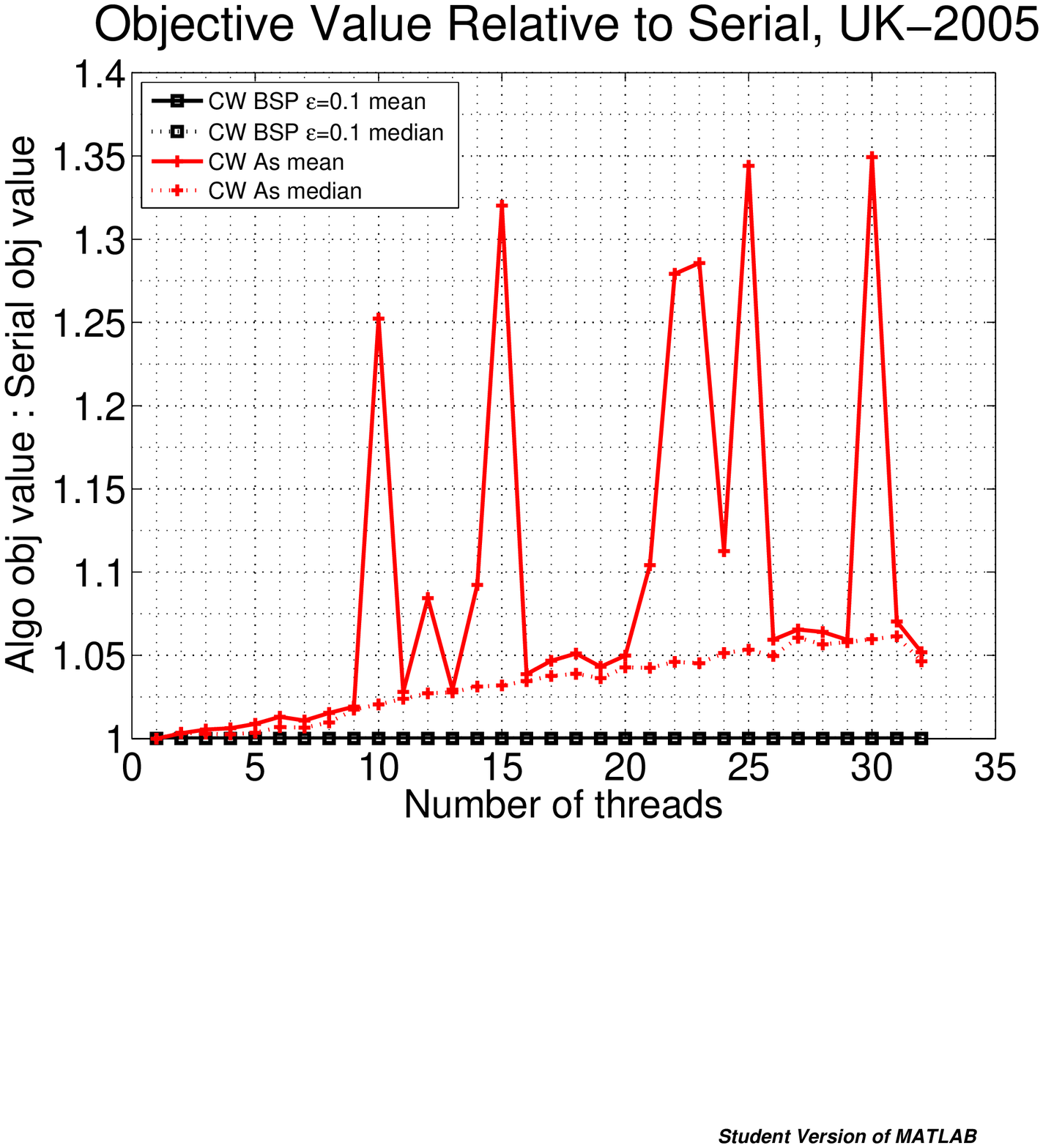}
      \caption{UK-2005, $\epsilon = 0.1$}
      \label{appfig:objvalue_uk05_01}
    \end{subfigure} &
    \begin{subfigure}[b]{0.31\textwidth}
      \includegraphics[width=120pt]{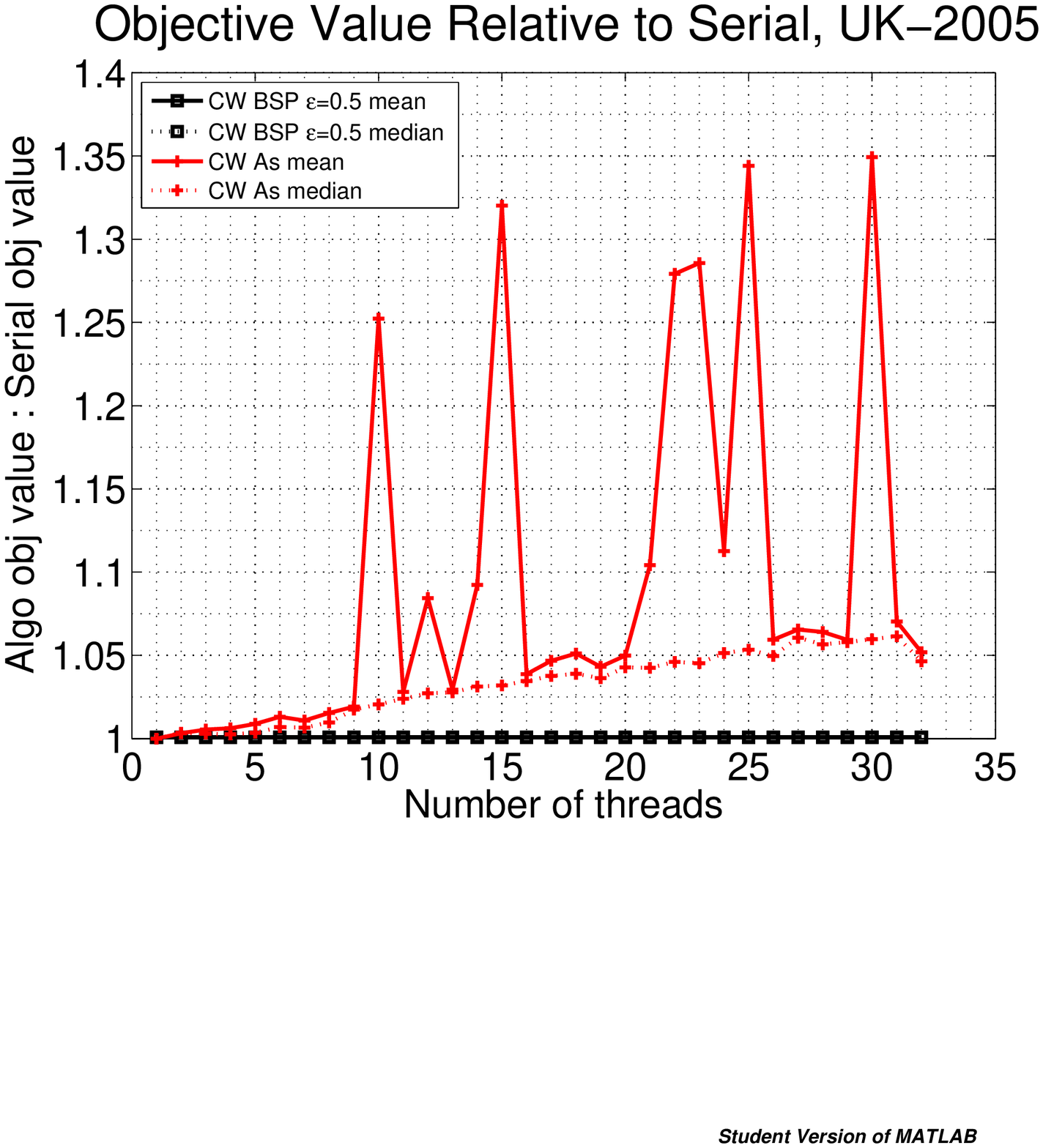}
      \caption{UK-2005, $\epsilon = 0.5$}
      \label{appfig:objvalue_uk05_05}
    \end{subfigure} &
    \begin{subfigure}[b]{0.31\textwidth}
      \includegraphics[width=120pt]{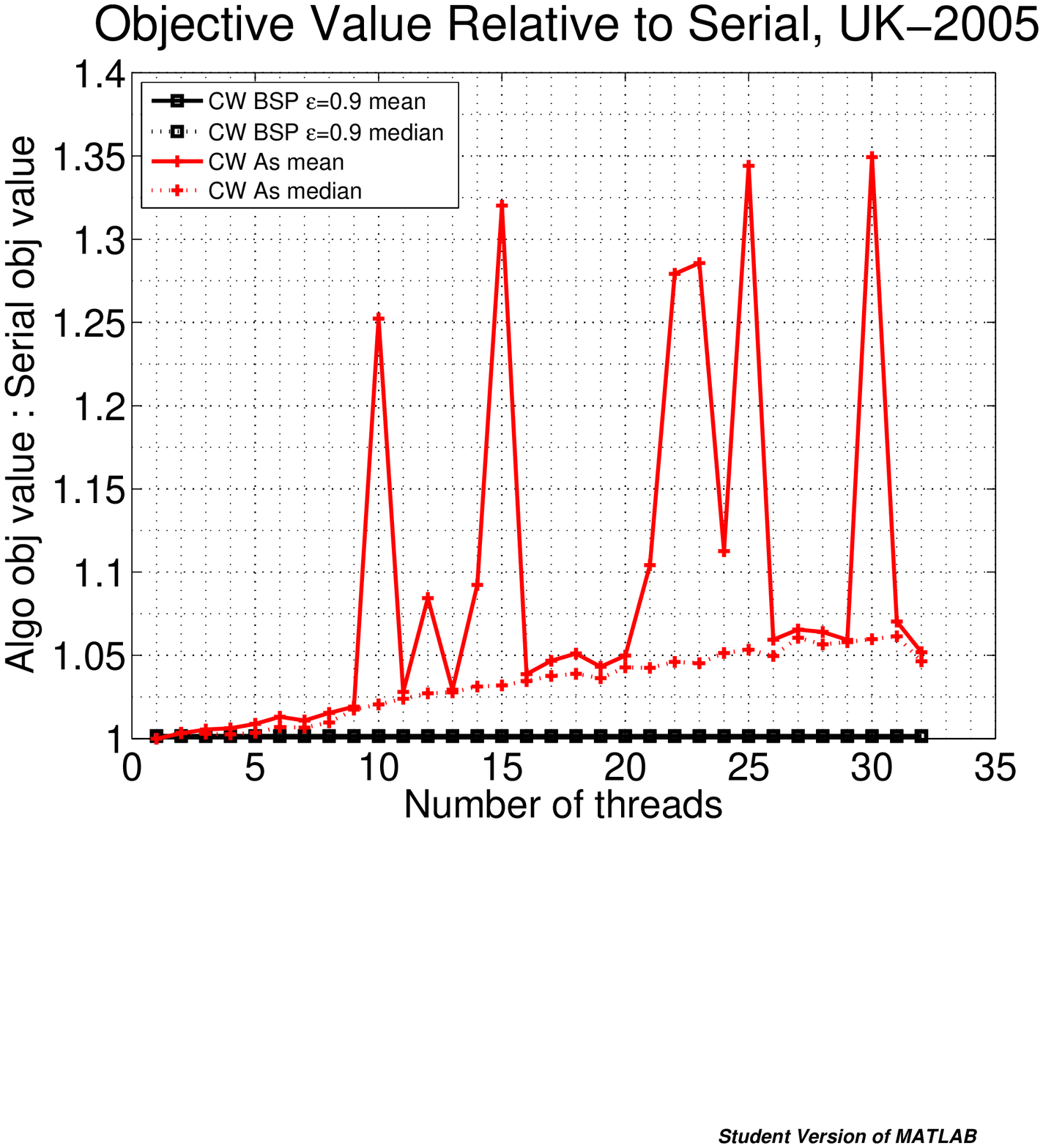}
      \caption{UK-2005, $\epsilon = 0.9$}
      \label{appfig:objvalue_uk05_09}
    \end{subfigure} \\

    \begin{subfigure}[b]{0.31\textwidth}
      \includegraphics[width=120pt]{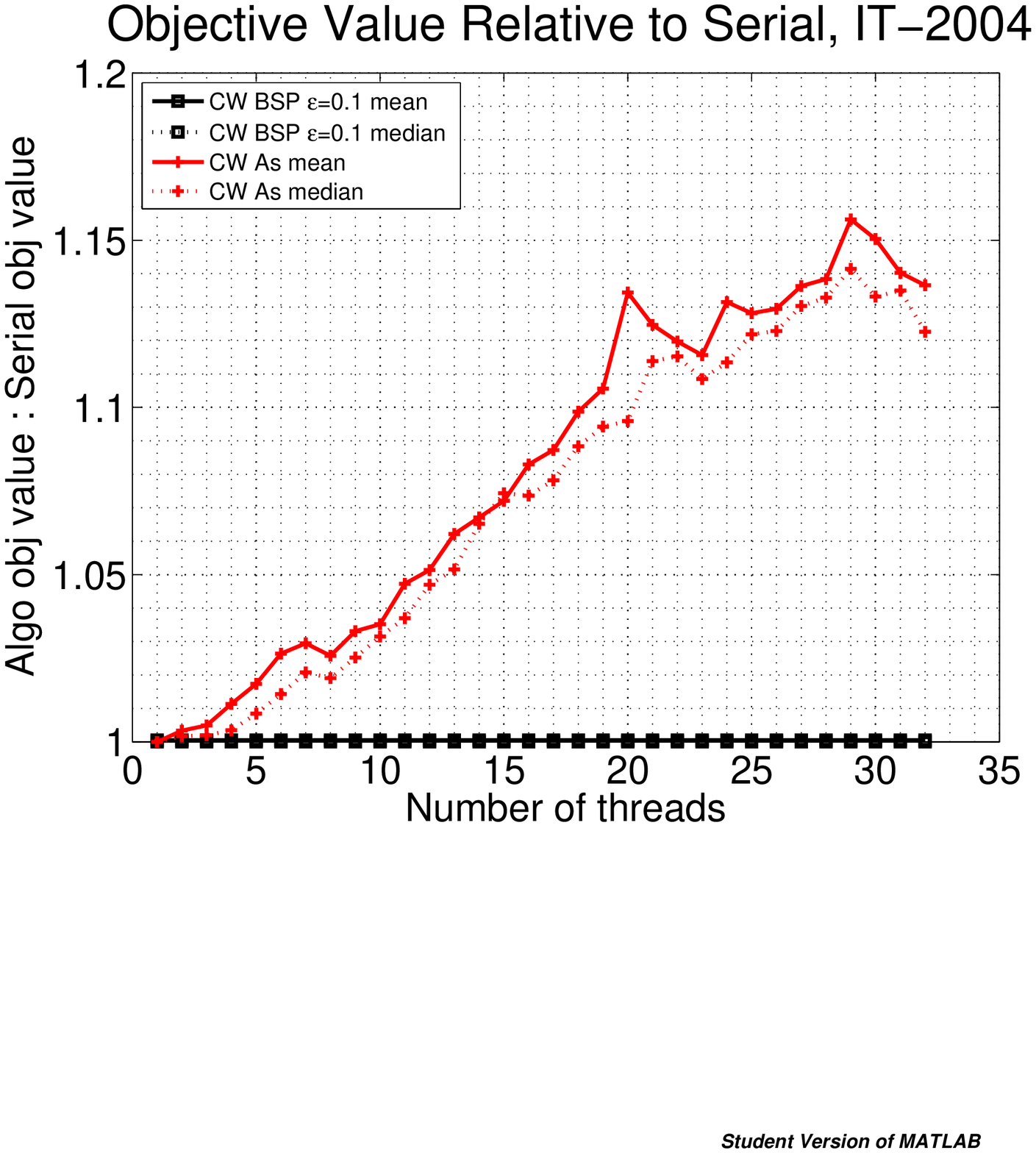}
      \caption{IT-2004, $\epsilon = 0.1$}
      \label{appfig:objvalue_it04_01}
    \end{subfigure} &
    \begin{subfigure}[b]{0.31\textwidth}
      \includegraphics[width=120pt]{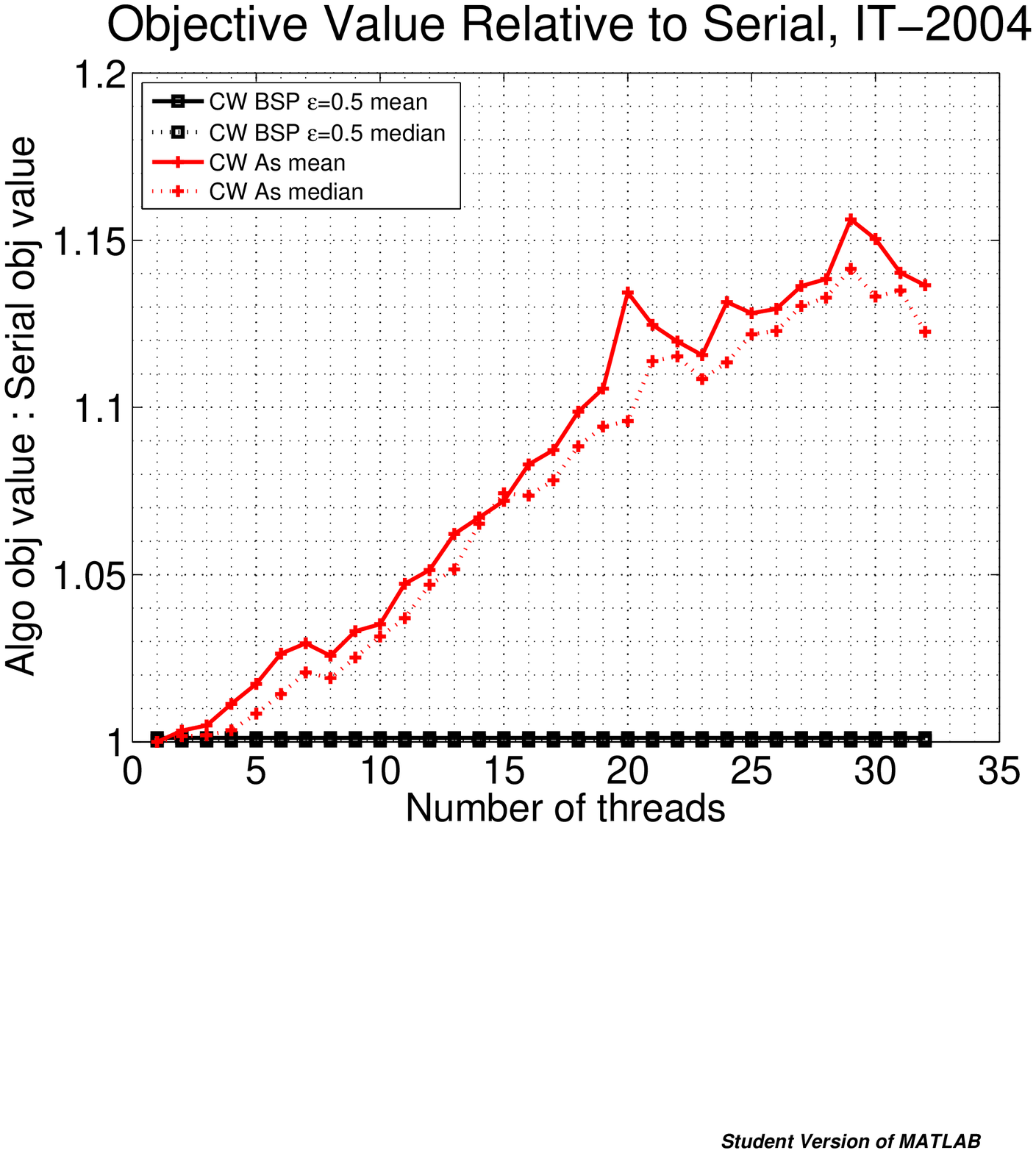}
      \caption{IT-2004, $\epsilon = 0.5$}
      \label{appfig:objvalue_it04_05}
    \end{subfigure} &
    \begin{subfigure}[b]{0.31\textwidth}
      \includegraphics[width=120pt]{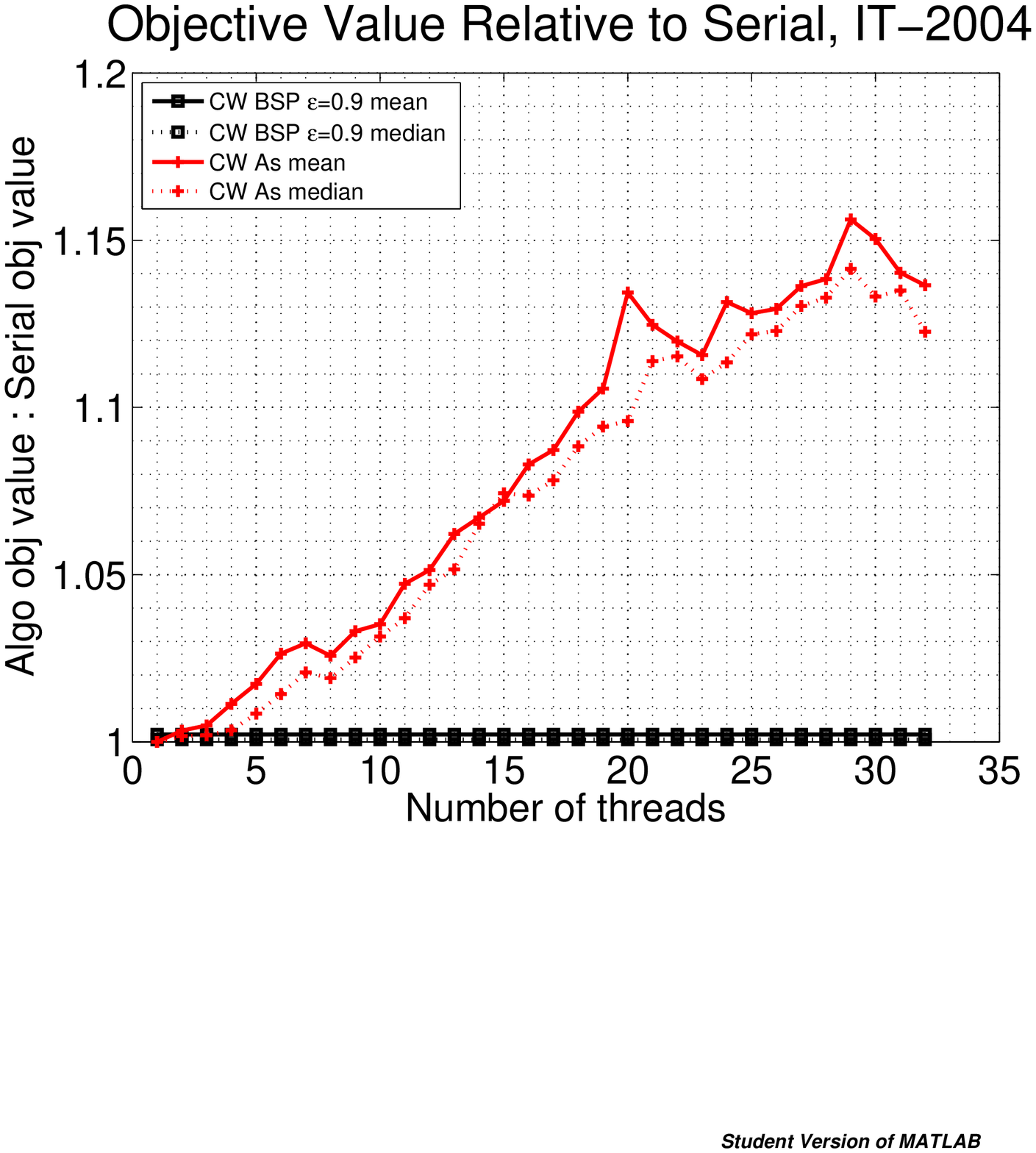}
      \caption{IT-2004, $\epsilon = 0.9$}
      \label{appfig:objvalue_it04_09}
    \end{subfigure} \\

    \begin{subfigure}[b]{0.31\textwidth}
      \includegraphics[width=120pt]{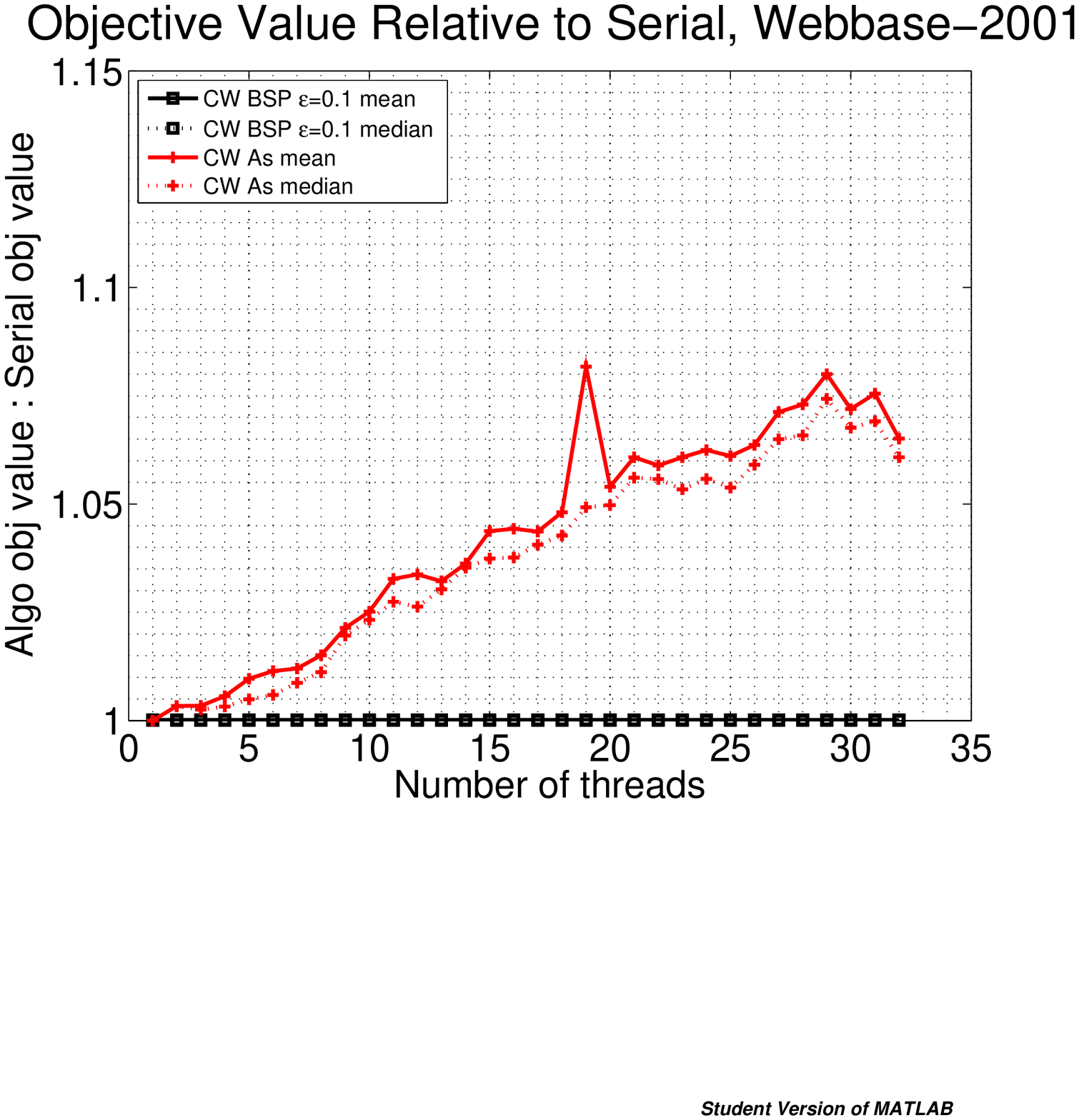}
      \caption{Webbase-2001, $\epsilon = 0.1$}
      \label{appfig:objvalue_wb01_01}
    \end{subfigure} &
    \begin{subfigure}[b]{0.31\textwidth}
      \includegraphics[width=120pt]{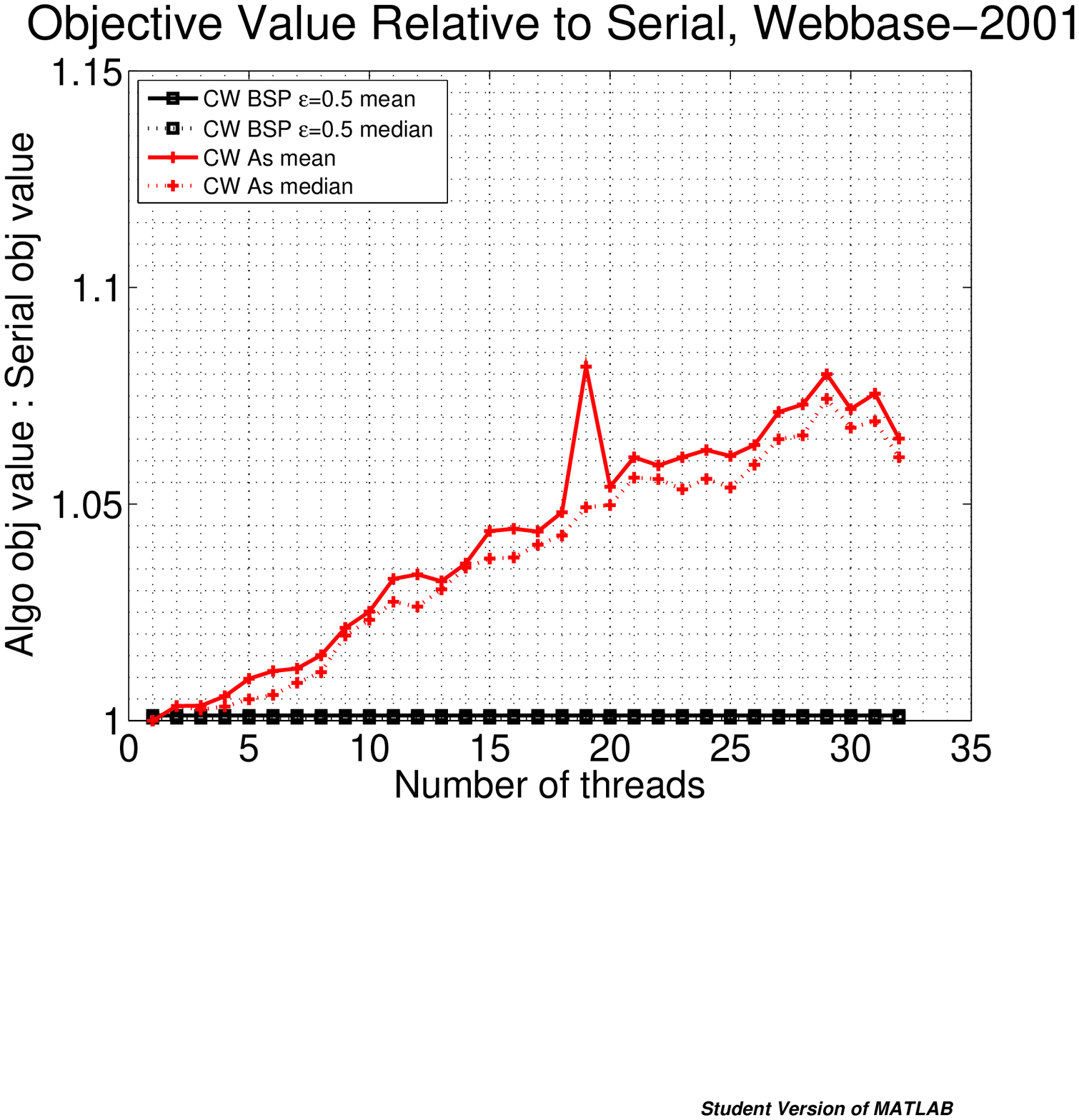}
      \caption{Webbase-2001, $\epsilon = 0.5$}
      \label{appfig:objvalue_wb01_05}
    \end{subfigure} &
    \begin{subfigure}[b]{0.31\textwidth}
      \includegraphics[width=120pt]{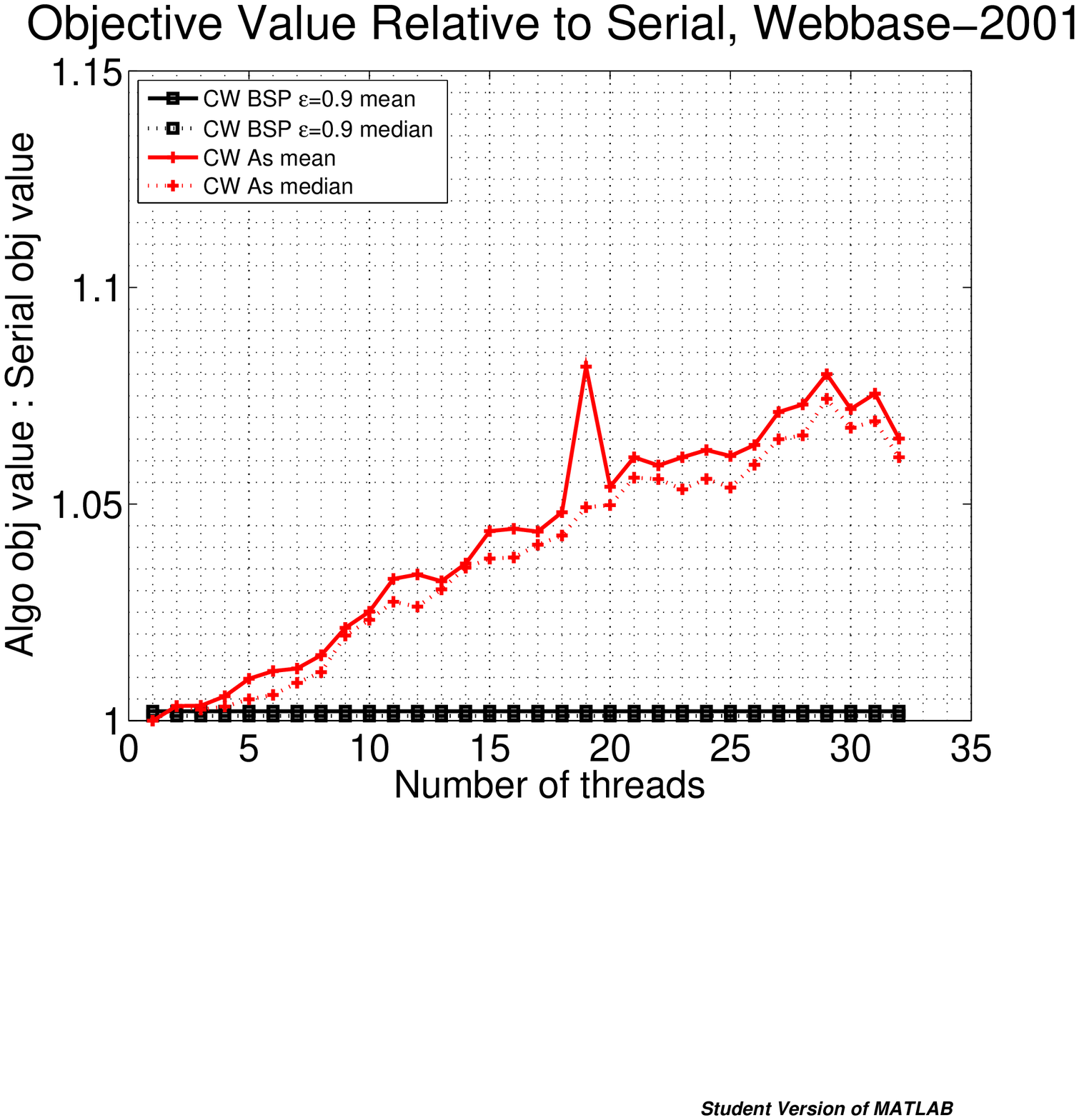}
      \caption{Webbase-2001, $\epsilon = 0.9$}
      \label{appfig:objvalue_wb01_09}
    \end{subfigure} \\

    \begin{subfigure}[b]{0.31\textwidth}
      \includegraphics[width=120pt]{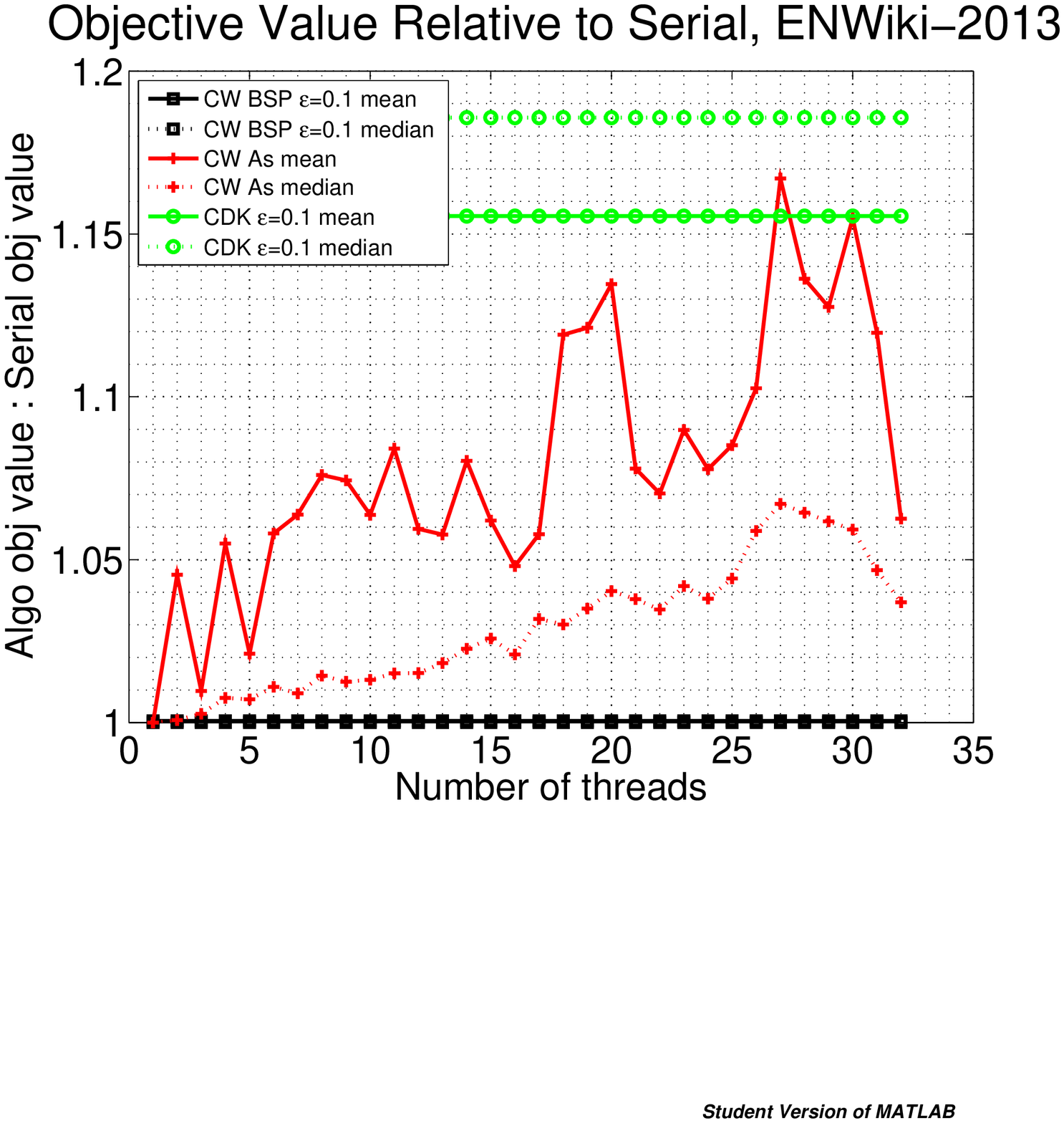}
      \caption{ENWiki-2013, $\epsilon = 0.1$}
      \label{appfig:objvalue_ew13_01}
    \end{subfigure} &
    \begin{subfigure}[b]{0.31\textwidth}
      \includegraphics[width=120pt]{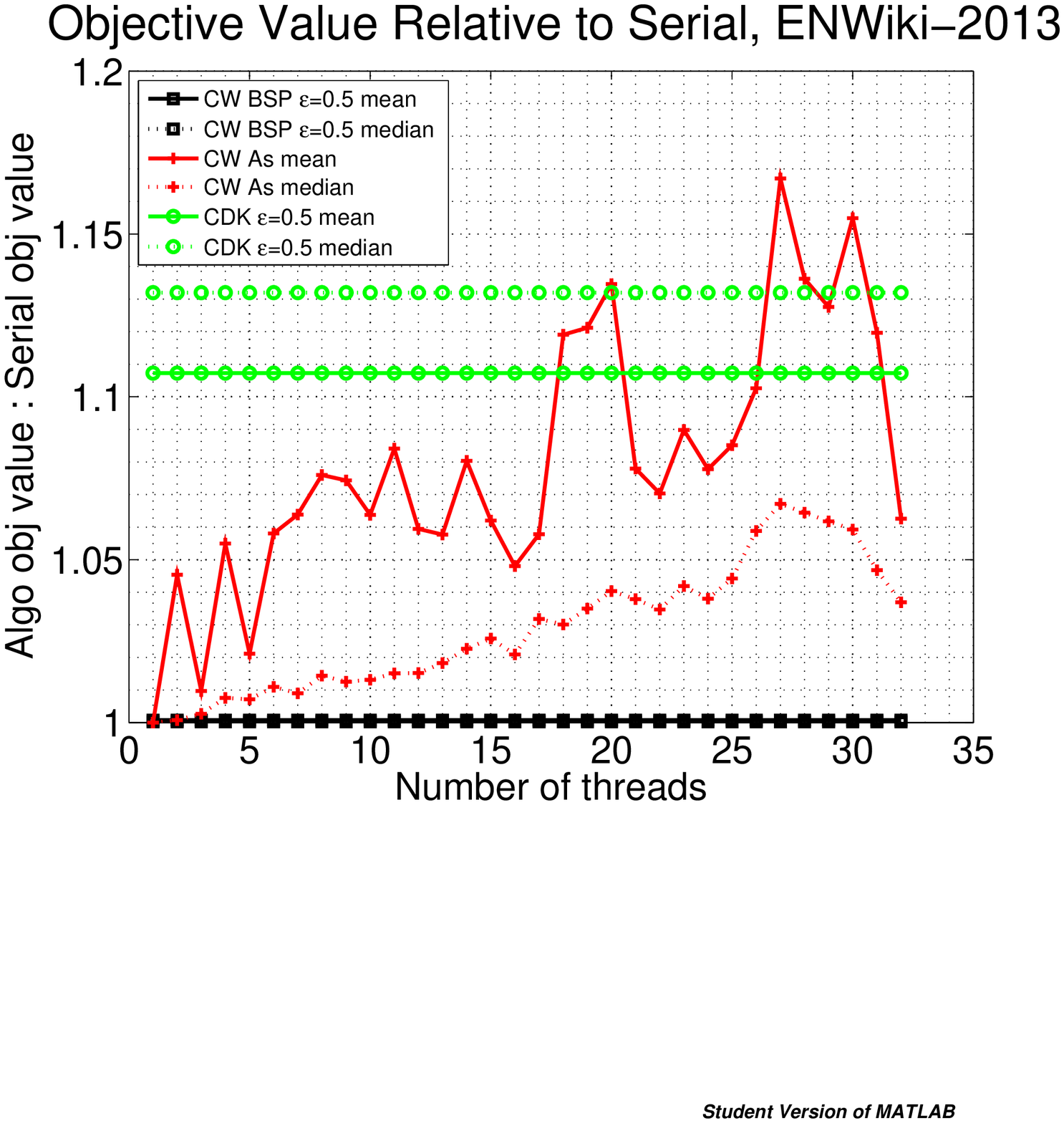}
      \caption{ENWiki-2013, $\epsilon = 0.5$}
      \label{appfig:objvalue_ew13_05}
    \end{subfigure} &
    \begin{subfigure}[b]{0.31\textwidth}
      \includegraphics[width=120pt]{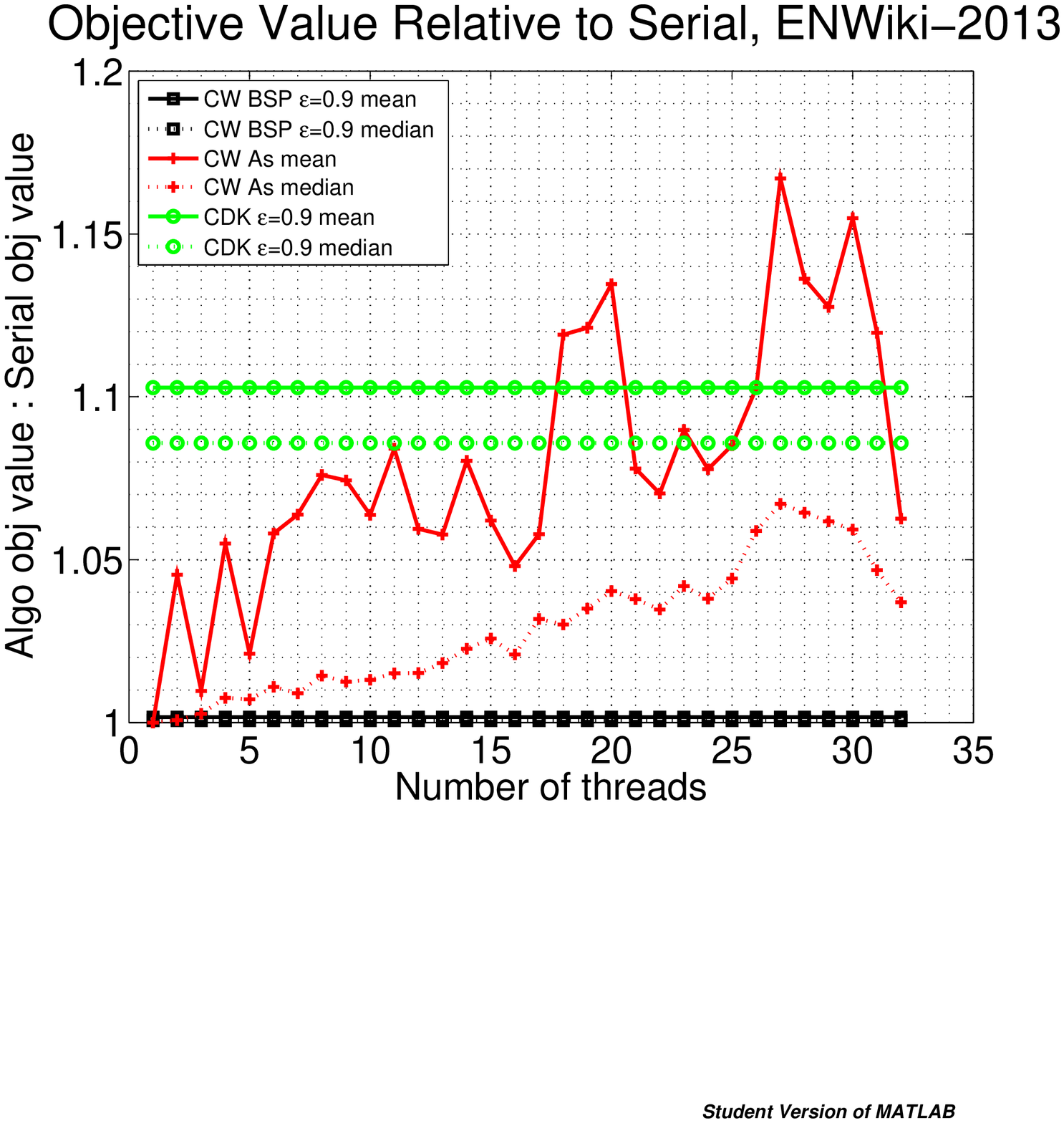}
      \caption{ENWiki-2013, $\epsilon = 0.9$}
      \label{appfig:objvalue_ew13_09}
    \end{subfigure} \\

    \begin{subfigure}[b]{0.31\textwidth}
      \includegraphics[width=120pt]{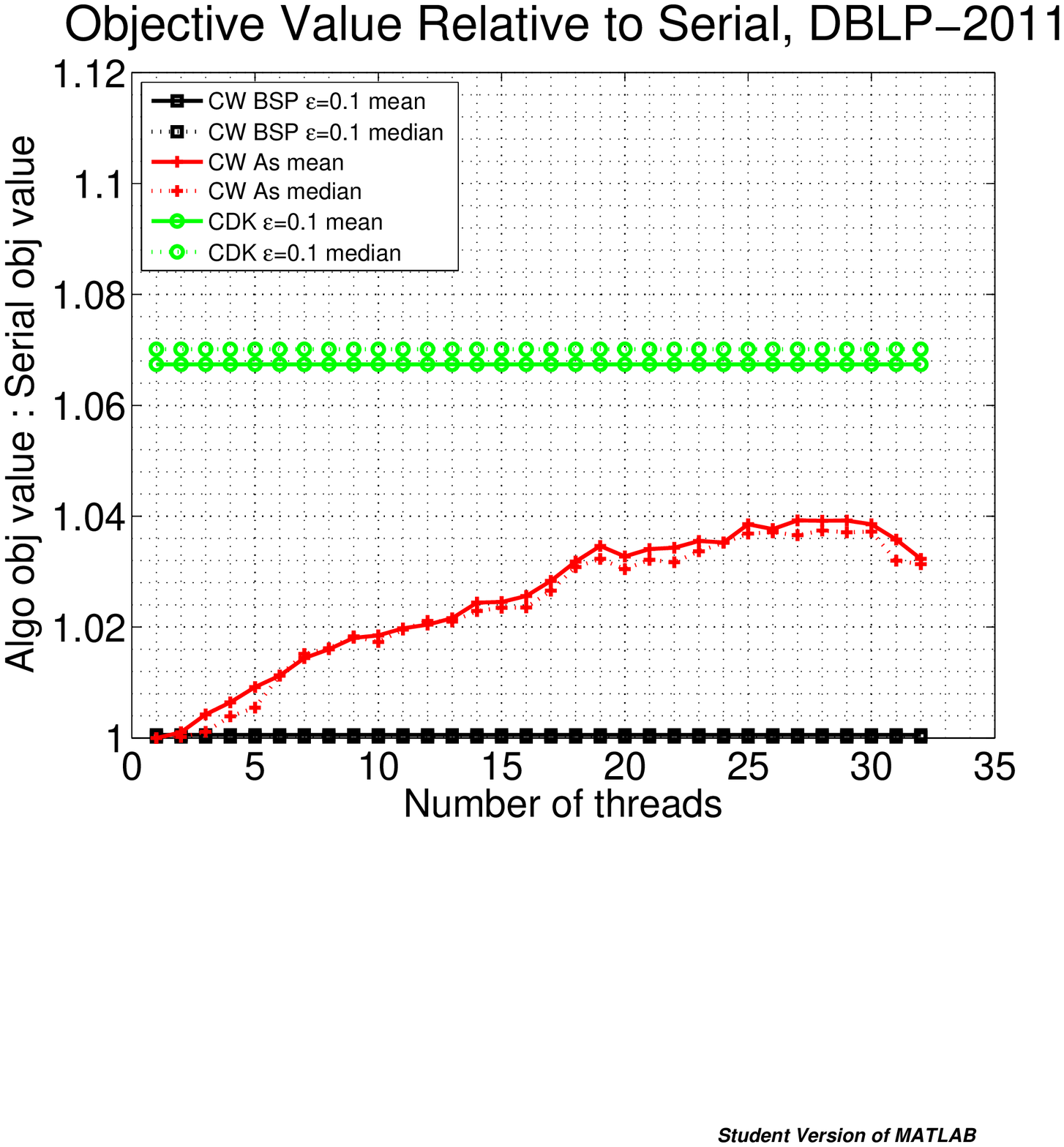}
      \caption{DBLP-2011, $\epsilon = 0.1$}
      \label{appfig:objvalue_db11_01}
    \end{subfigure} &
    \begin{subfigure}[b]{0.31\textwidth}
      \includegraphics[width=120pt]{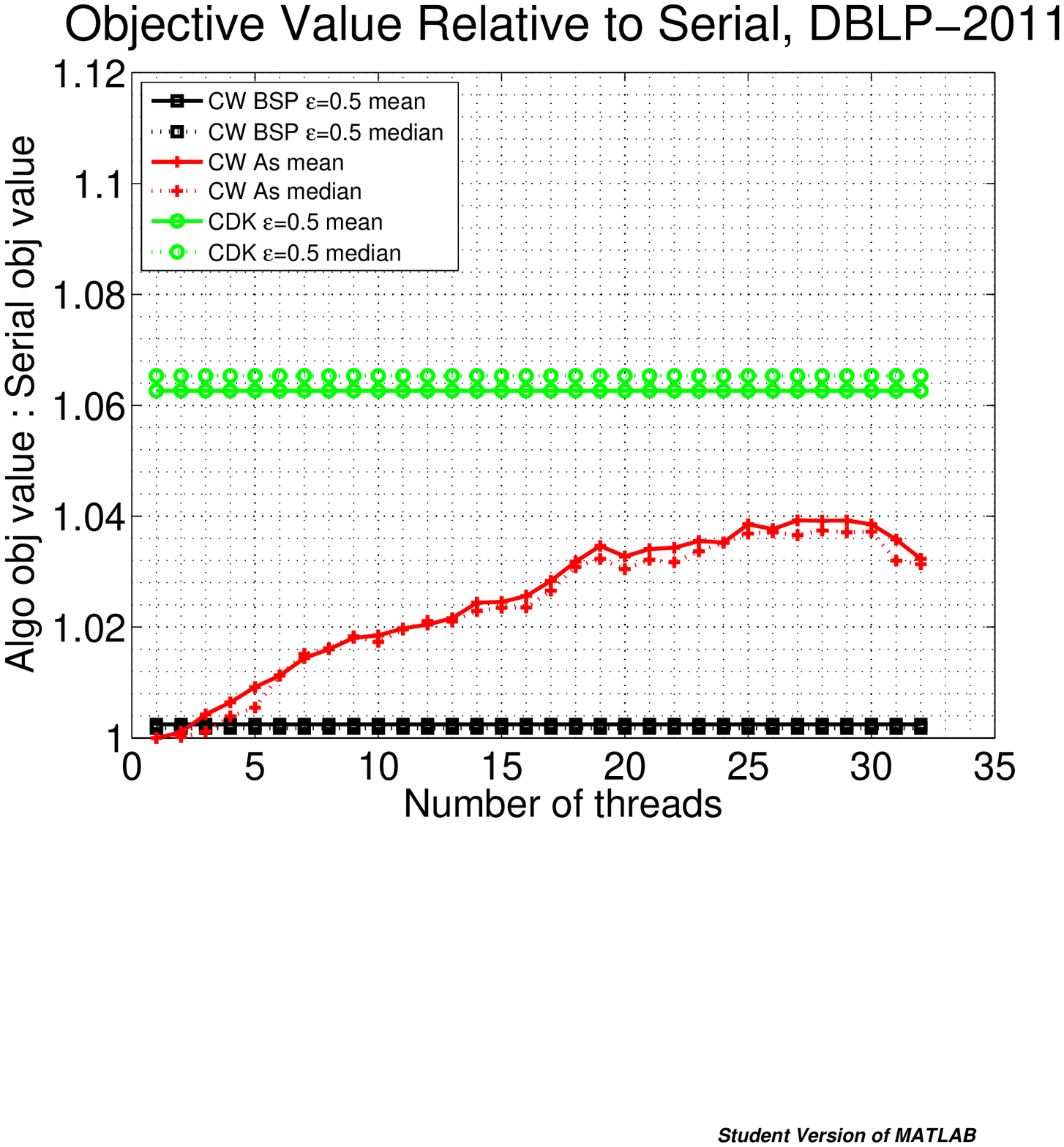}
      \caption{DBLP-2011, $\epsilon = 0.5$}
      \label{appfig:objvalue_db11_05}
    \end{subfigure} &
    \begin{subfigure}[b]{0.31\textwidth}
      \includegraphics[width=120pt]{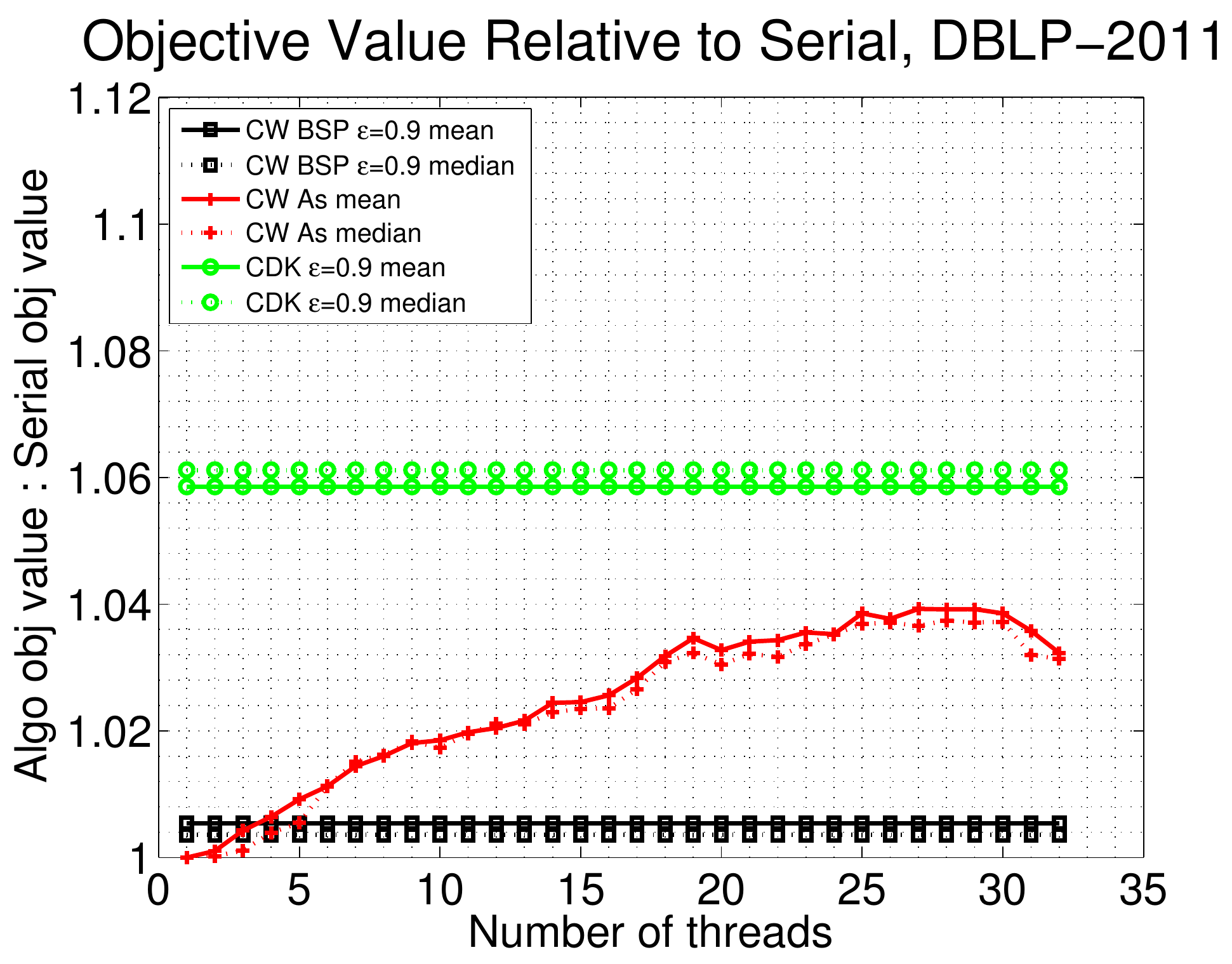}
      \caption{DBLP-2011, $\epsilon = 0.9$}
      \label{appfig:objvalue_db11_09}
    \end{subfigure} \\

  \end{tabular}
  \caption{Empirical objective values relative to mean objective value obtained by serial algorithm.}
\end{figure}

\begin{figure}[ht]
  \centering
  \begin{tabular}{ccc}

    \begin{subfigure}[b]{0.31\textwidth}
      \includegraphics[width=120pt]{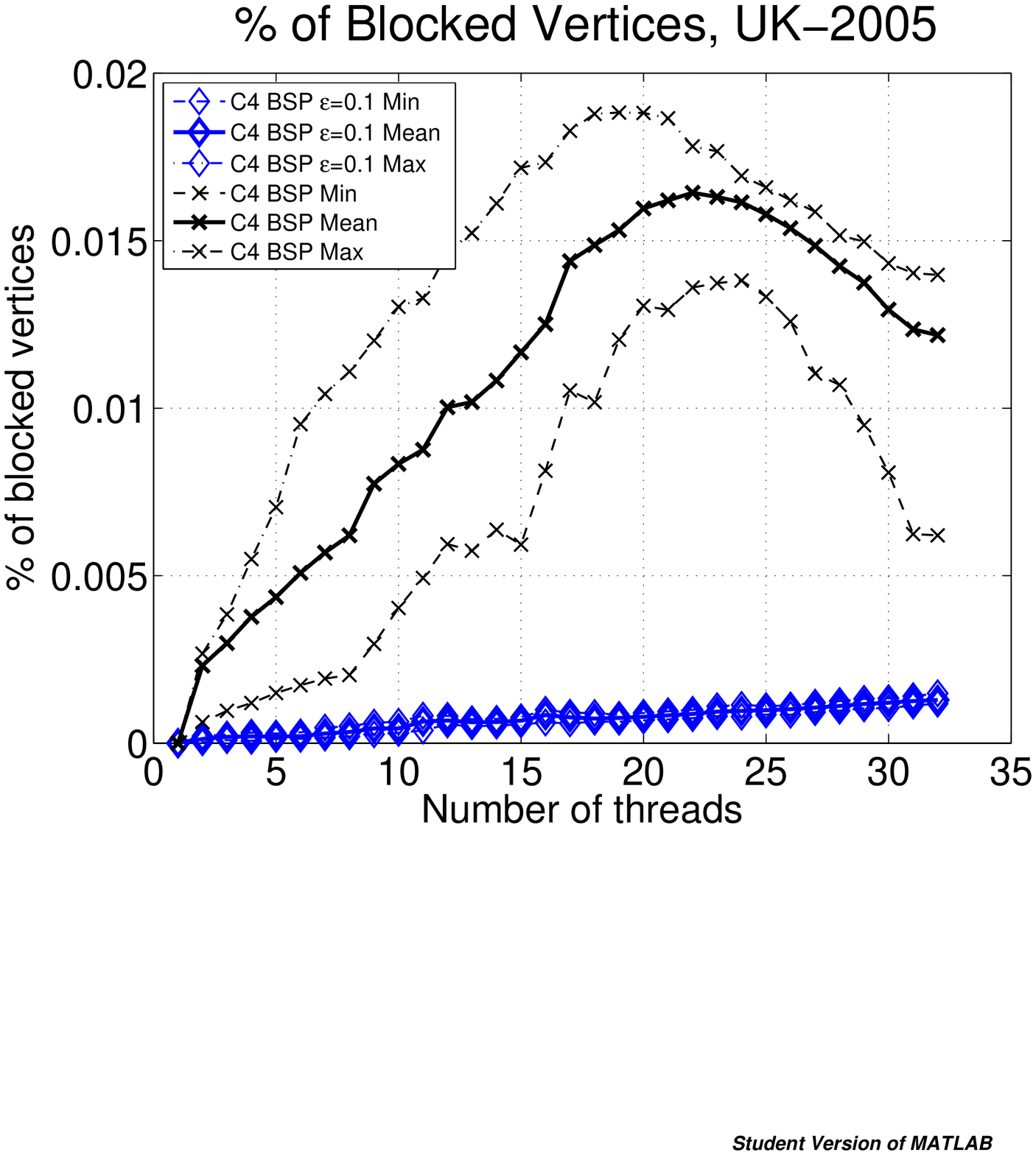}
      \caption{UK-2005, $\epsilon = 0.1$}
      \label{appfig:blktrans_uk05_01}
    \end{subfigure} &
    \begin{subfigure}[b]{0.31\textwidth}
      \includegraphics[width=120pt]{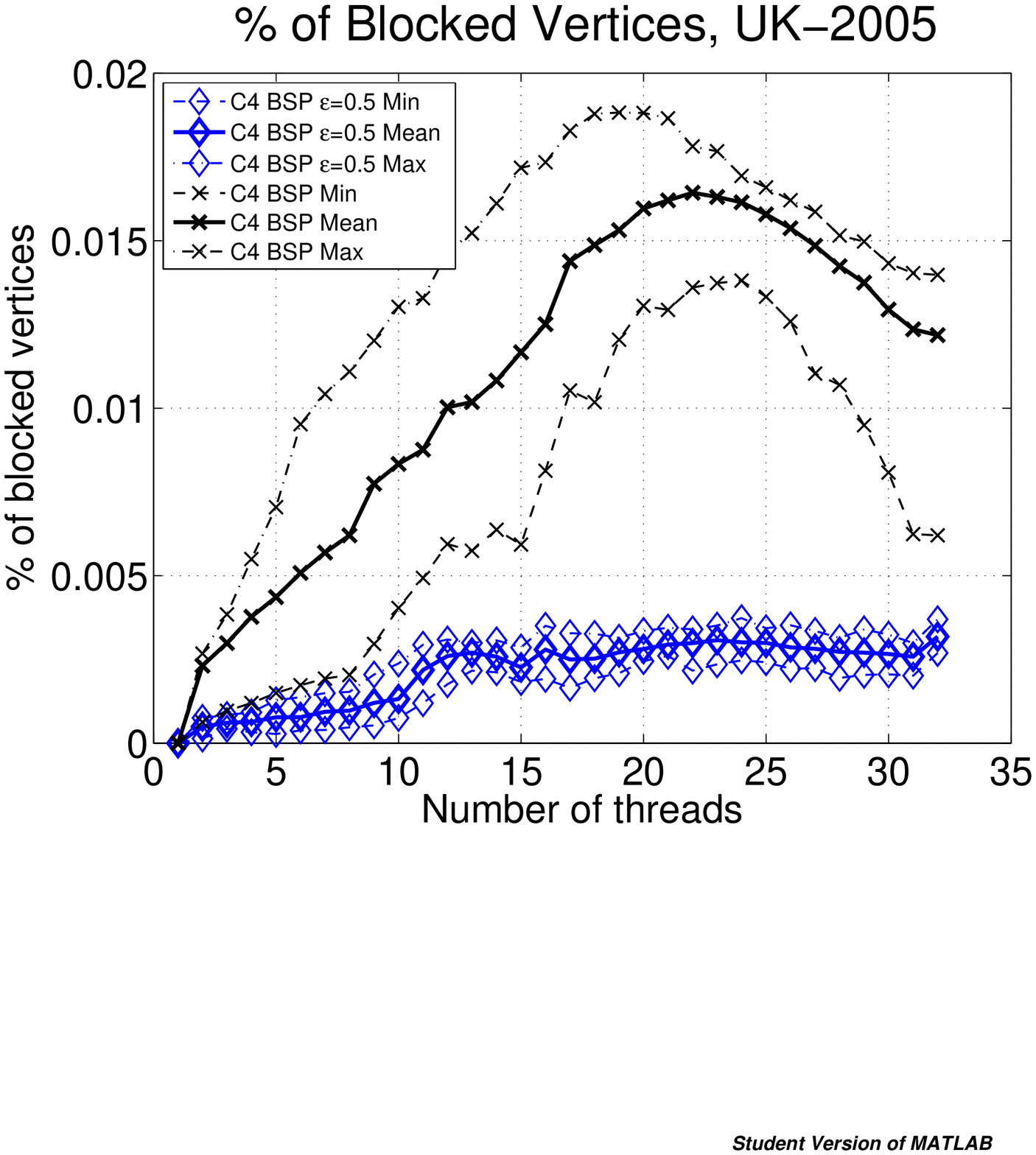}
      \caption{UK-2005, $\epsilon = 0.5$}
      \label{appfig:blktrans_uk05_05}
    \end{subfigure} &
    \begin{subfigure}[b]{0.31\textwidth}
      \includegraphics[width=120pt]{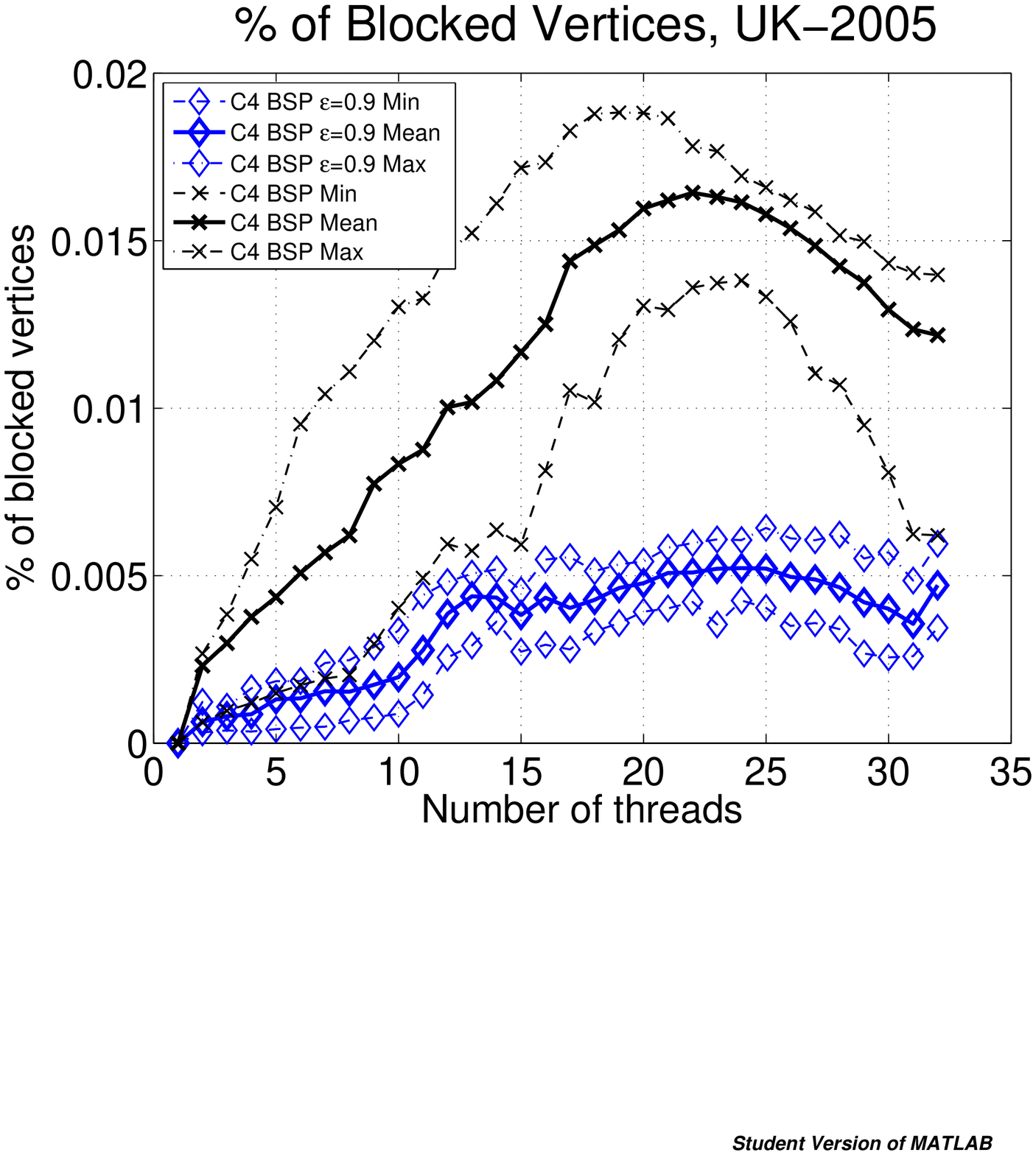}
      \caption{UK-2005, $\epsilon = 0.9$}
      \label{appfig:blktrans_uk05_09}
    \end{subfigure} \\

    \begin{subfigure}[b]{0.31\textwidth}
      \includegraphics[width=120pt]{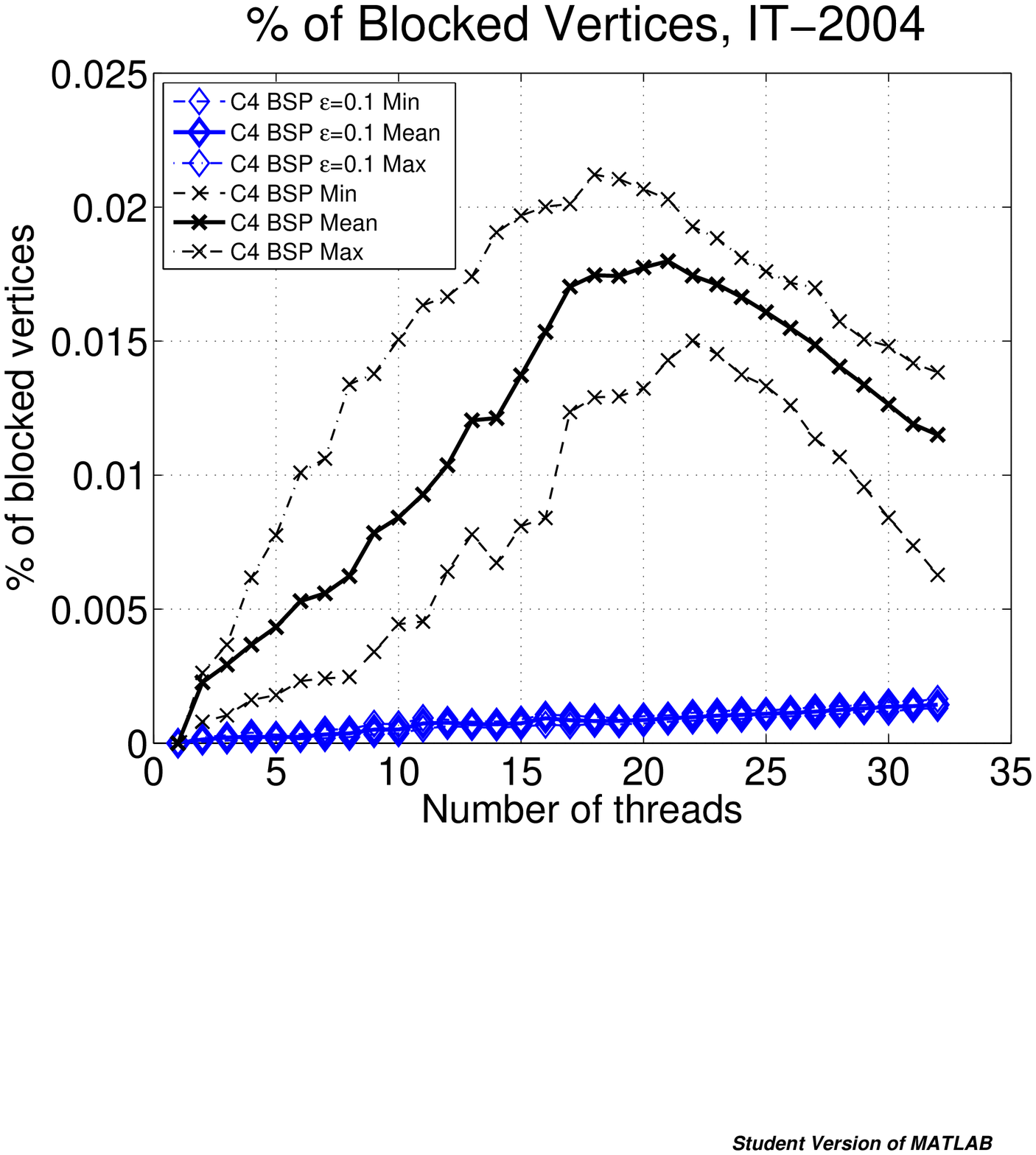}
      \caption{IT-2004, $\epsilon = 0.1$}
      \label{appfig:blktrans_it04_01}
    \end{subfigure} &
    \begin{subfigure}[b]{0.31\textwidth}
      \includegraphics[width=120pt]{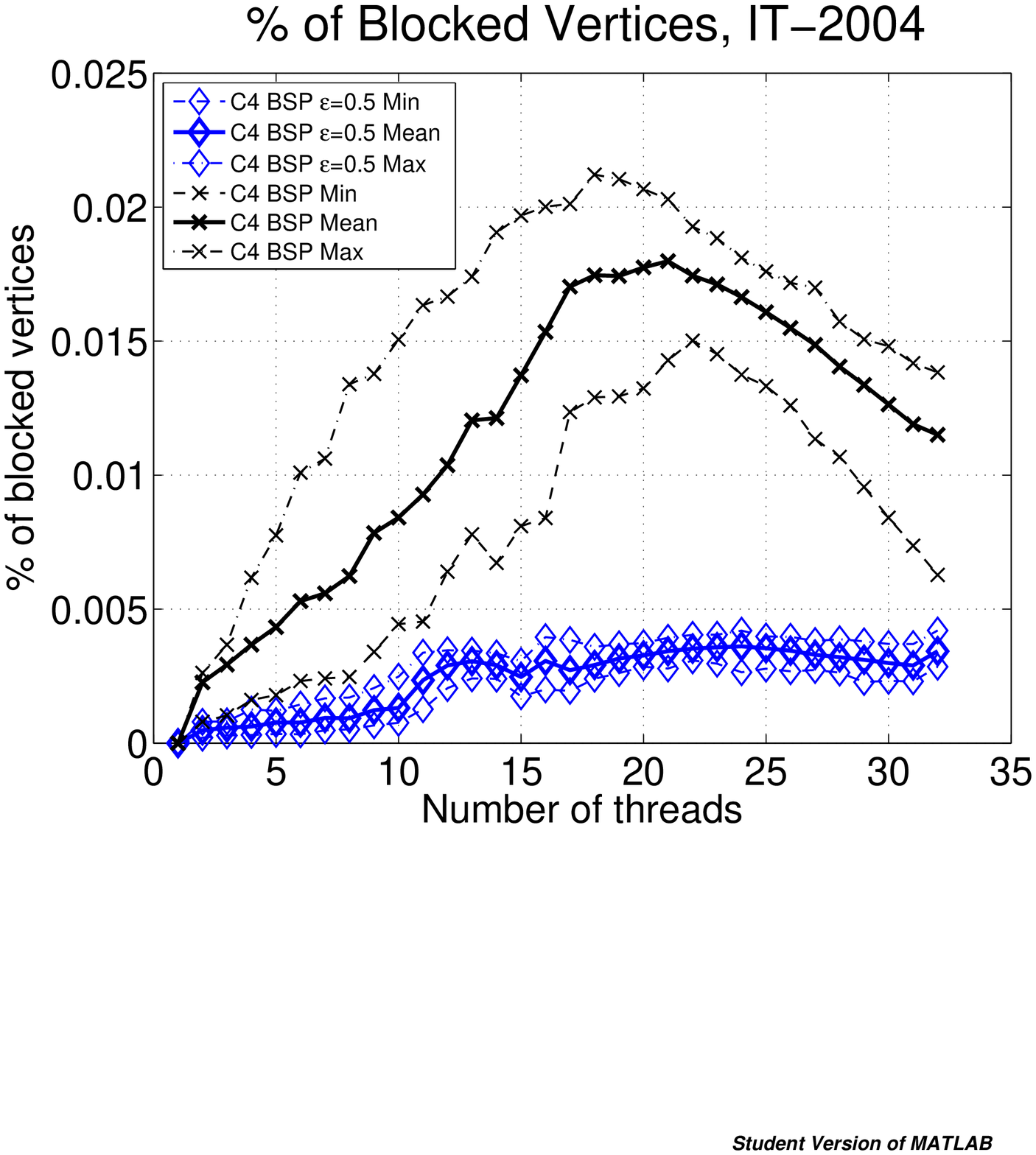}
      \caption{IT-2004, $\epsilon = 0.5$}
      \label{appfig:blktrans_it04_05}
    \end{subfigure} &
    \begin{subfigure}[b]{0.31\textwidth}
      \includegraphics[width=120pt]{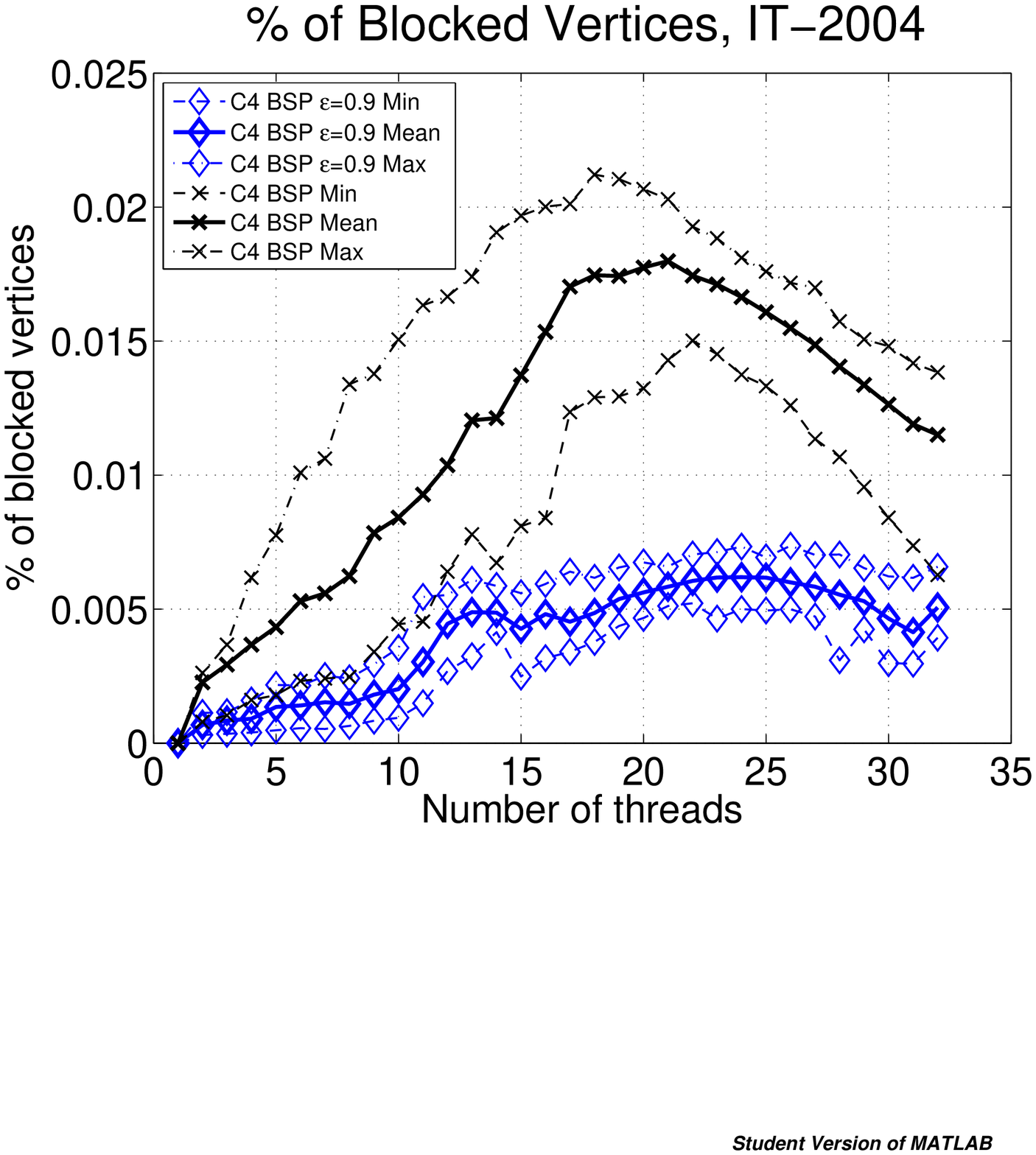}
      \caption{IT-2004, $\epsilon = 0.9$}
      \label{appfig:blktrans_it04_09}
    \end{subfigure} \\

    \begin{subfigure}[b]{0.31\textwidth}
      \includegraphics[width=120pt]{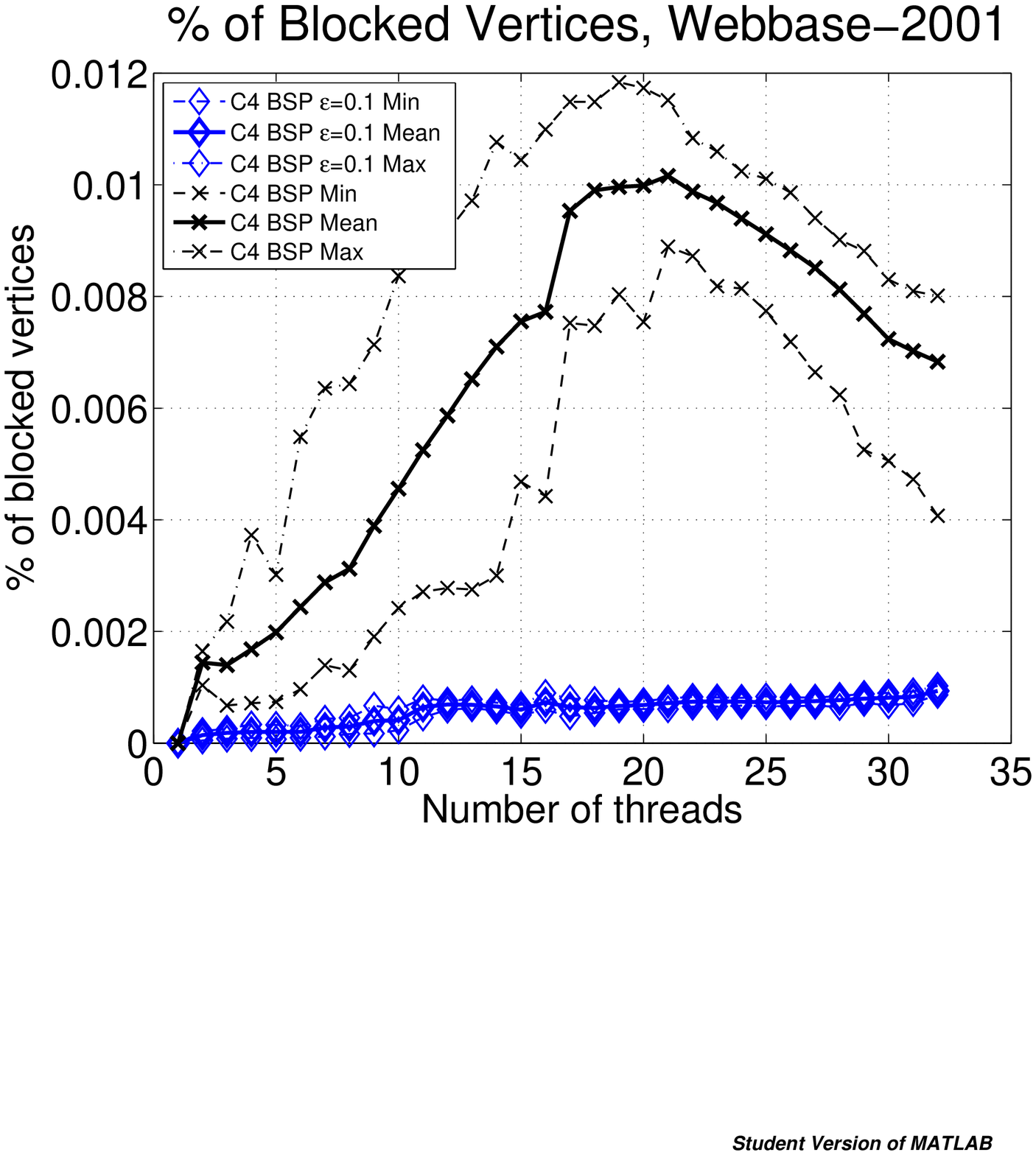}
      \caption{Webbase-2001, $\epsilon = 0.1$}
      \label{appfig:blktrans_wb01_01}
    \end{subfigure} &
    \begin{subfigure}[b]{0.31\textwidth}
      \includegraphics[width=120pt]{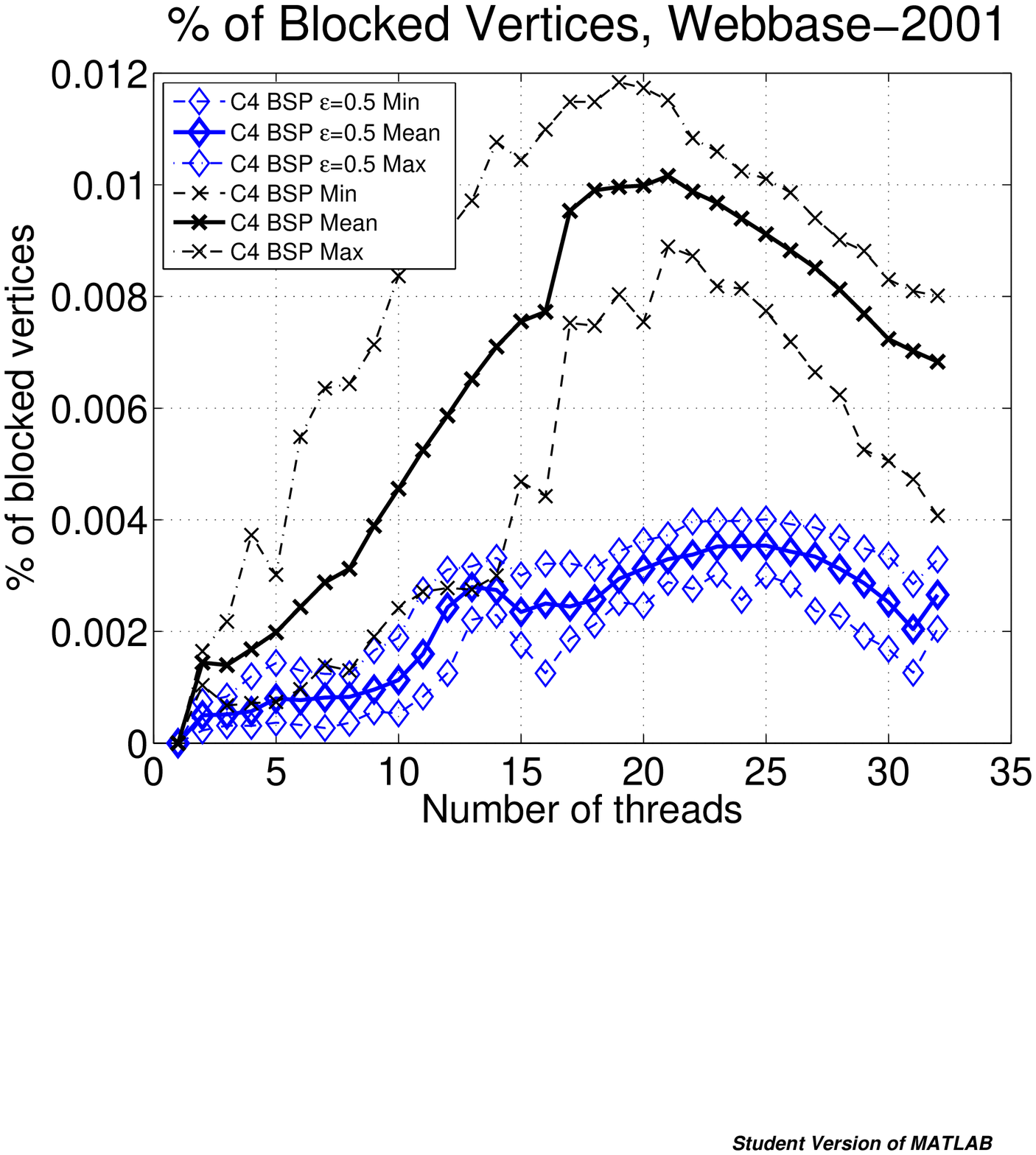}
      \caption{Webbase-2001, $\epsilon = 0.5$}
      \label{appfig:blktrans_wb01_05}
    \end{subfigure} &
    \begin{subfigure}[b]{0.31\textwidth}
      \includegraphics[width=120pt]{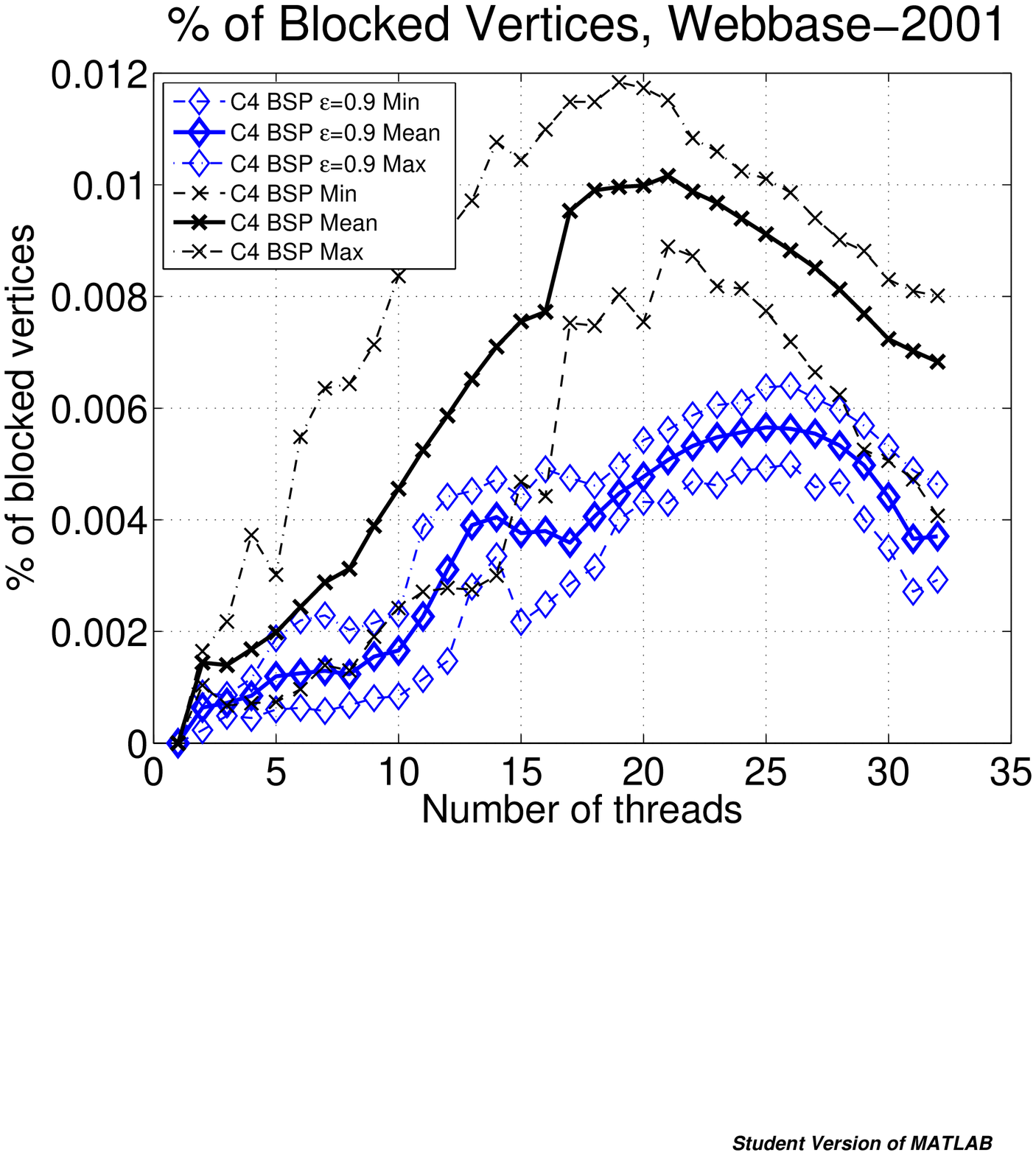}
      \caption{Webbase-2001, $\epsilon = 0.9$}
      \label{appfig:blktrans_wb01_09}
    \end{subfigure} \\

    \begin{subfigure}[b]{0.31\textwidth}
      \includegraphics[width=120pt]{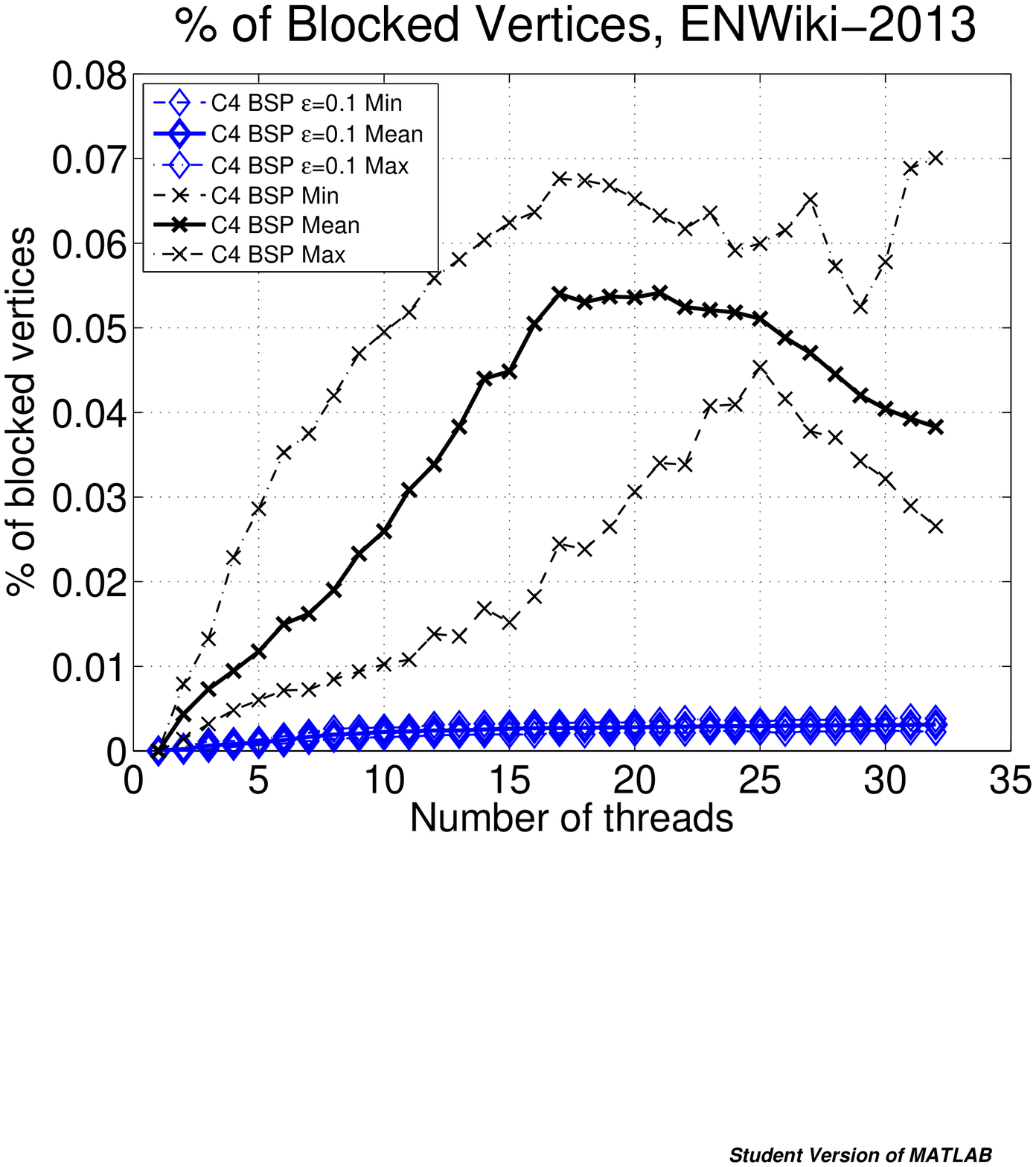}
      \caption{ENWiki-2013, $\epsilon = 0.1$}
      \label{appfig:blktrans_ew13_01}
    \end{subfigure} &
    \begin{subfigure}[b]{0.31\textwidth}
      \includegraphics[width=120pt]{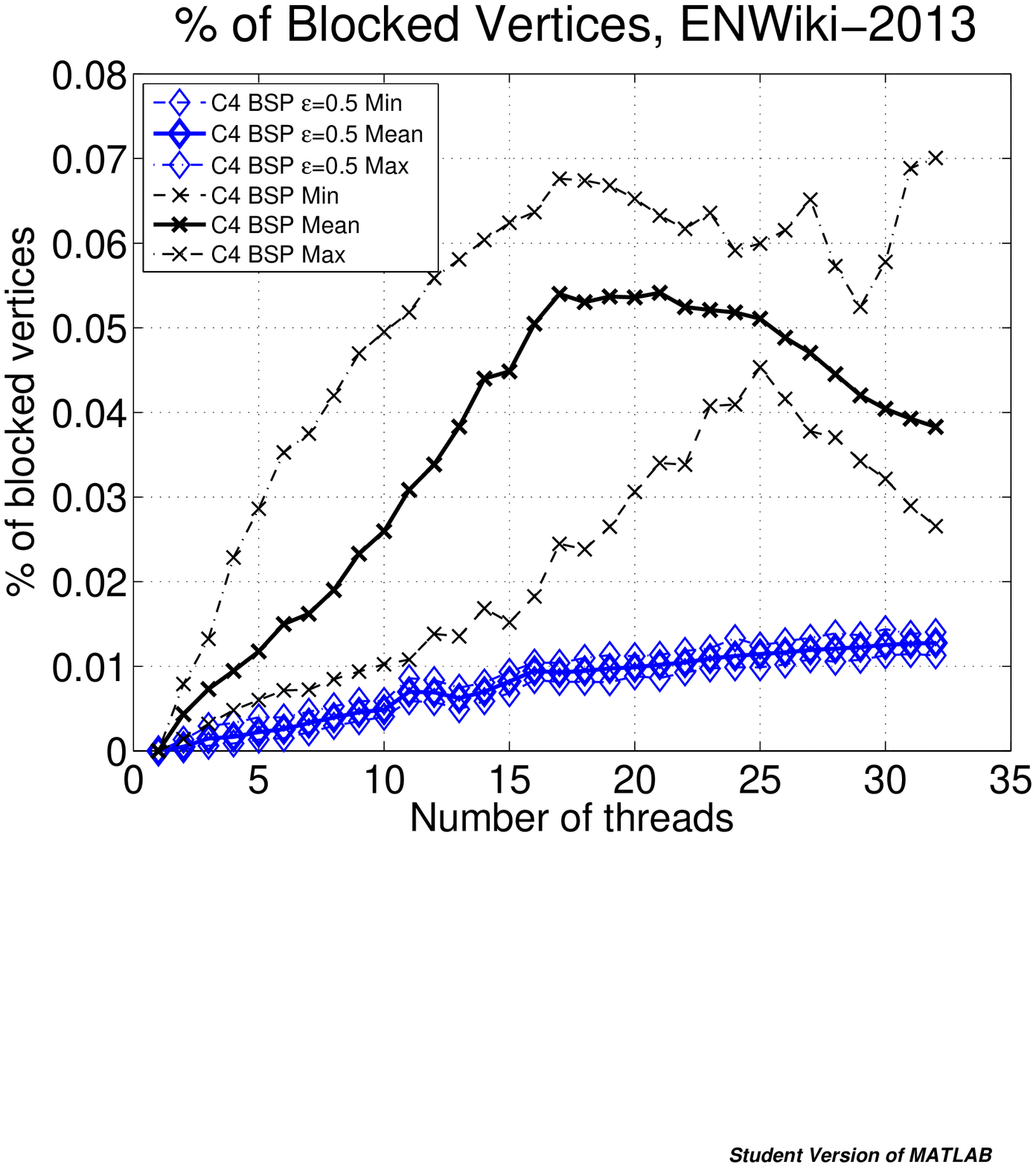}
      \caption{ENWiki-2013, $\epsilon = 0.5$}
      \label{appfig:blktrans_ew13_05}
    \end{subfigure} &
    \begin{subfigure}[b]{0.31\textwidth}
      \includegraphics[width=120pt]{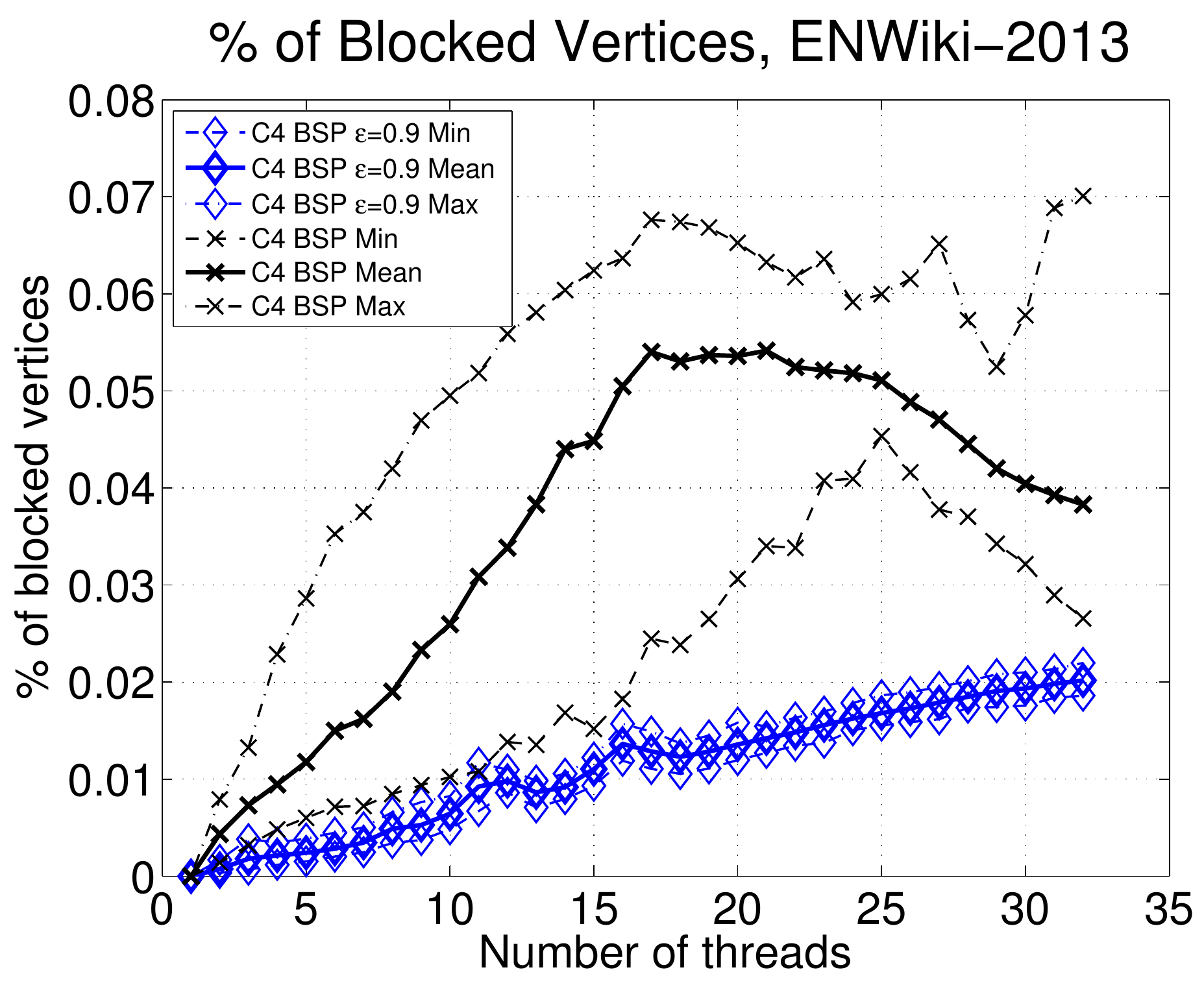}
      \caption{ENWiki-2013, $\epsilon = 0.9$}
      \label{appfig:blktrans_ew13_09}
    \end{subfigure} \\

    \begin{subfigure}[b]{0.31\textwidth}
      \includegraphics[width=120pt]{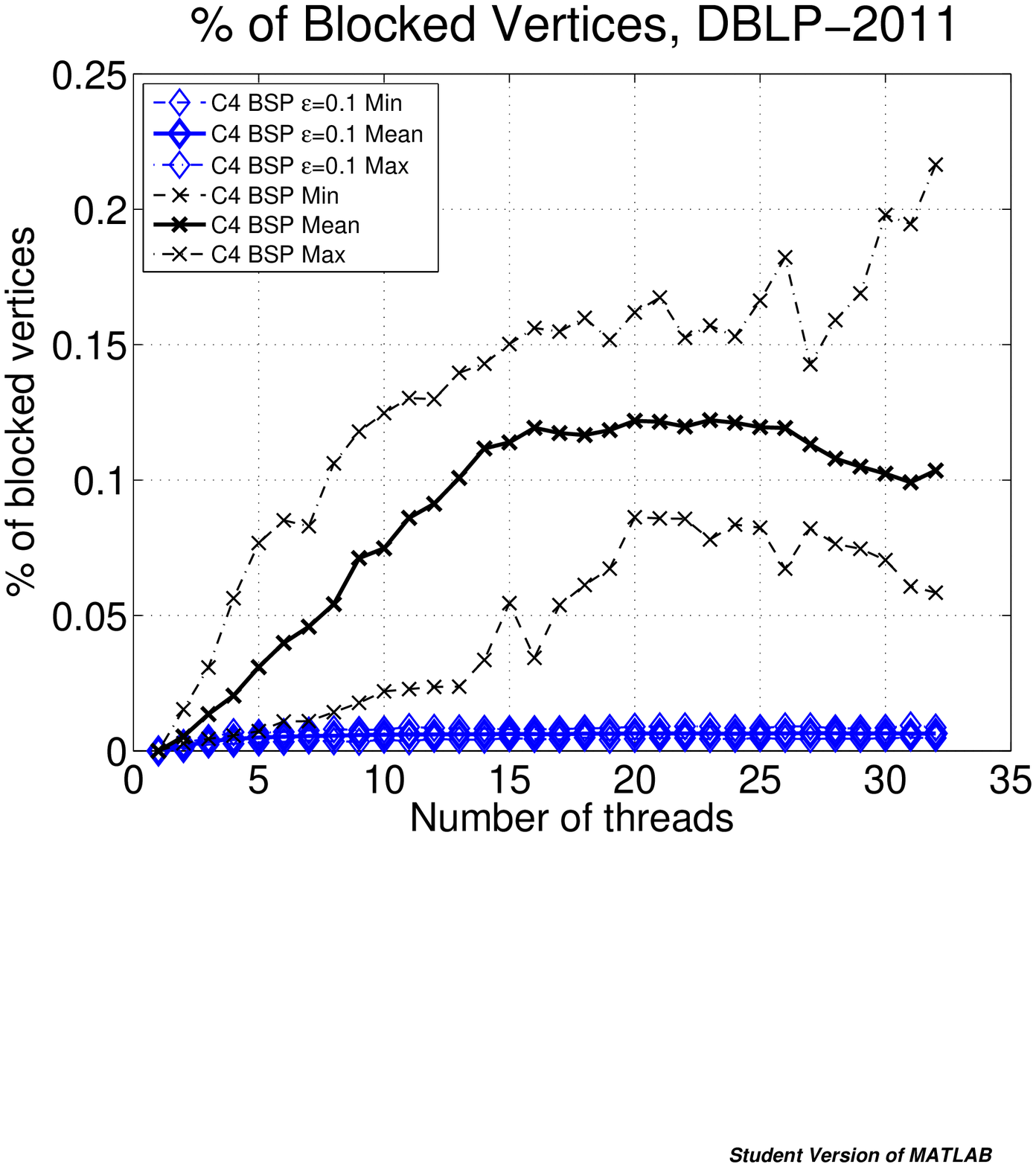}
      \caption{DBLP-2011, $\epsilon = 0.1$}
      \label{appfig:blktrans_db11_01}
    \end{subfigure} &
    \begin{subfigure}[b]{0.31\textwidth}
      \includegraphics[width=120pt]{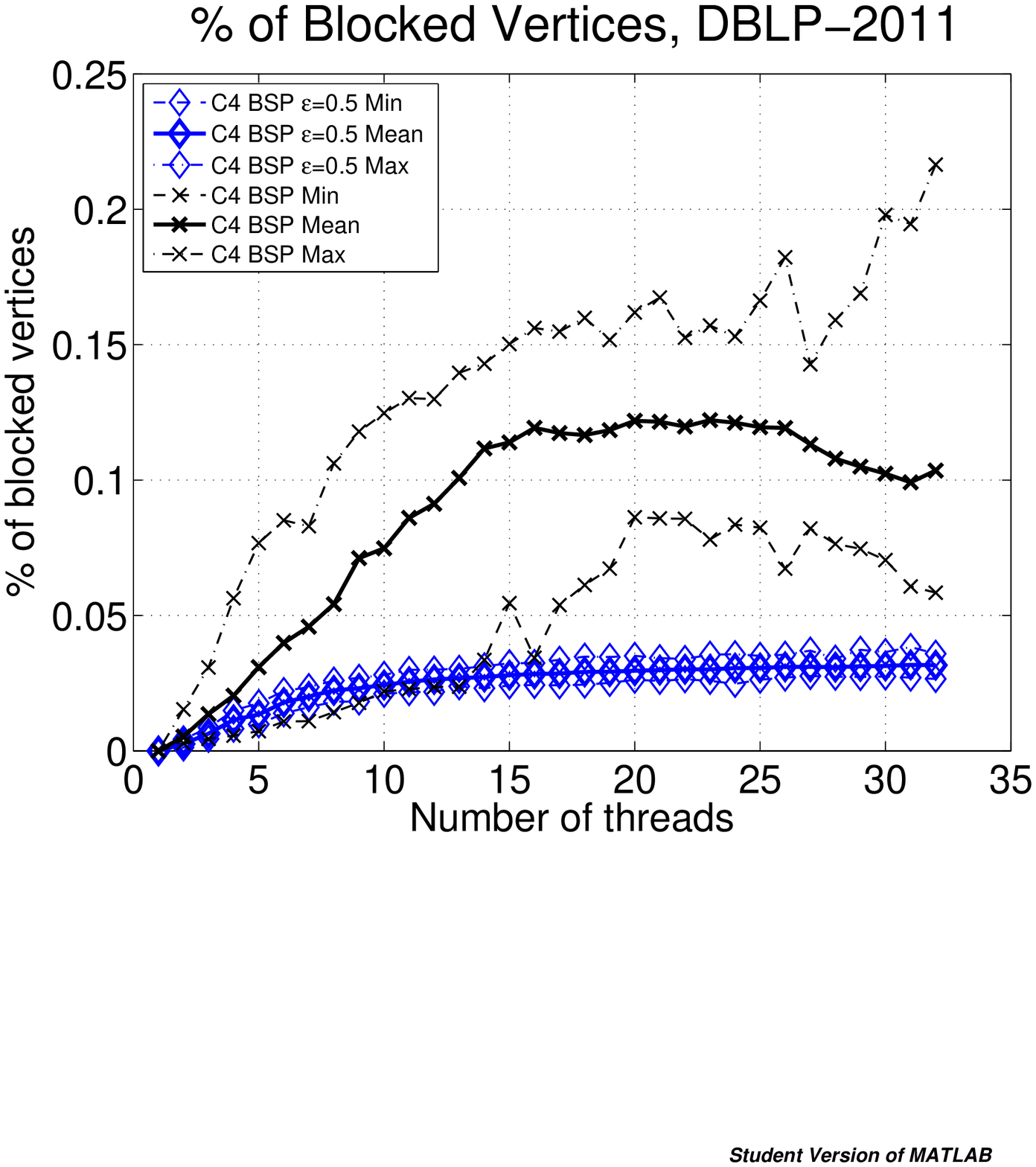}
      \caption{DBLP-2011, $\epsilon = 0.5$}
      \label{appfig:blktrans_db11_05}
    \end{subfigure} &
    \begin{subfigure}[b]{0.31\textwidth}
      \includegraphics[width=120pt]{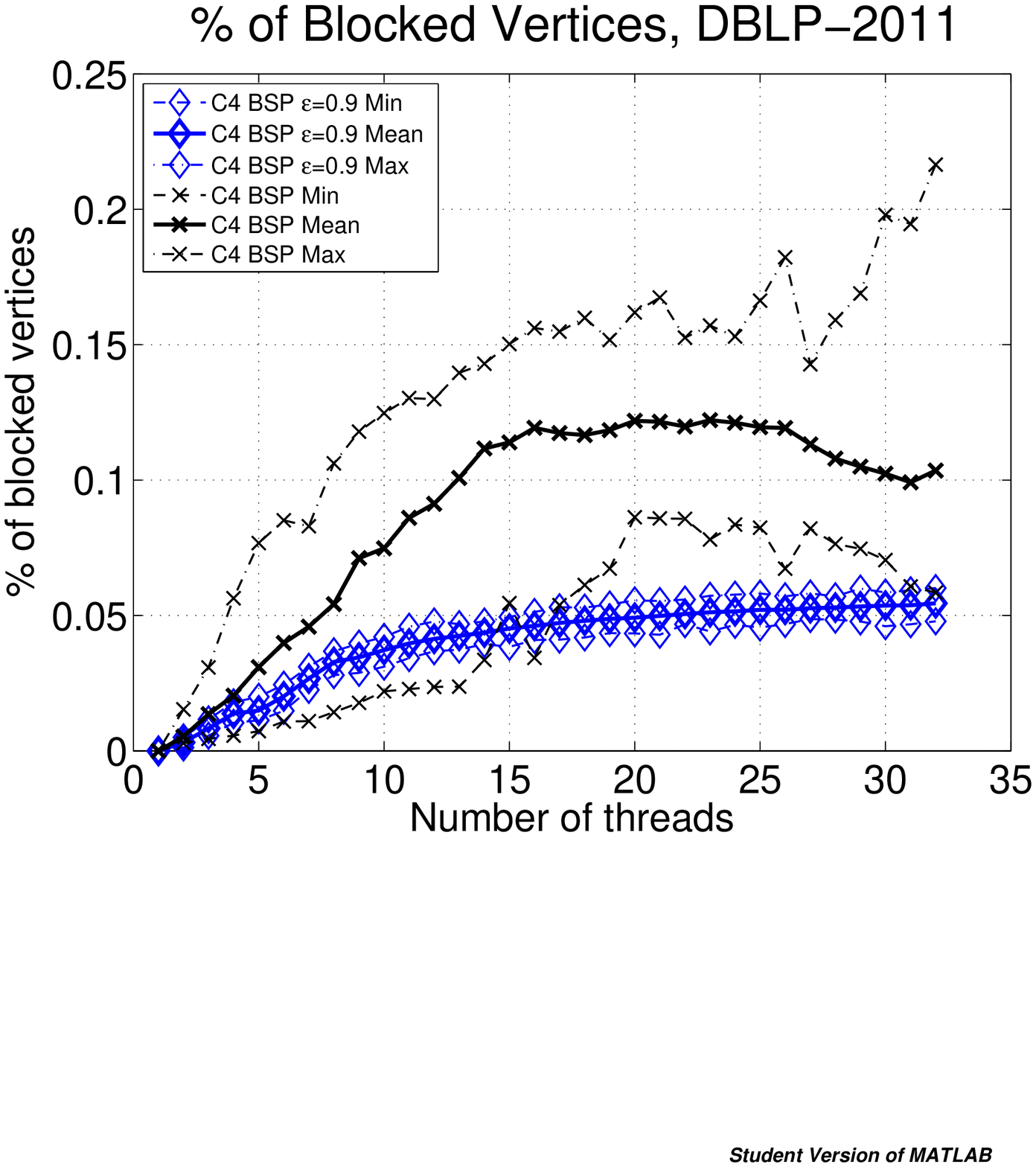}
      \caption{DBLP-2011, $\epsilon = 0.9$}
      \label{appfig:blktrans_db11_09}
    \end{subfigure} \\

  \end{tabular}
  \caption{
  Empirical percentage of blocked vertices.
  Generally the number of blocked vertices increases with the number of threads and larger $\epsilon$ values.
  \CC{} BSP has fewer blocked vertices than asynchronous \CC{}, but at the cost of more synchronization barriers.
  We point out that across all 100 runs of every graphs, the maximum percentage of blocked vertices is less than 0.25\%;
  for large sparse graphs, the maximum percentage is less than 0.025\%, i.e., 1 in 4000.}
\end{figure}

\end{document}